\newcommand{\p}{^{\prime}}
\newtheorem{theorem}{Theorem}[section]
\newtheorem{proposition}{Proposition}[section]
\newtheorem{lemma}{Lemma}
\newcommand{\xqedhere}[2]{%
  \rlap{\hbox to#1{\hfil\llap{\ensuremath{#2}}}}}
\title{A Multi-Strain Virus Model with Infected Cell Age Structure:  Application to HIV}
\date{}
\author{Cameron J. Browne\footnote{E-mail: {\tt cameron.j.browne@vanderbilt.edu}.
Department of Mathematics, Vanderbilt University, Nashville, TN.}}
\begin{document}
\maketitle

\begin{abstract}
A general mathematical model of a within-host viral infection with $n$ virus strains and explicit age-since-infection structure for infected cells is considered.  In the model, multiple virus strains compete for a population of target cells.  Cells infected with virus strain $i\in\left\{1,...,n\right\}$ die at per-capita rate $\delta_i(a)$ and produce virions at per-capita rate $p_i(a)$, where $\delta_i(a)$ and $p_i(a)$ are functions of the age-since-infection of the cell.   Viral strain $i$ has a basic reproduction number, $\mathcal{R}_i$, and a corresponding positive single strain equilibrium, $E_i$, when $\mathcal{R}_i>1$.  If $\mathcal{R}_i<1$, then the total concentration of virus strain $i$ will converge to $0$ asymptotically.  The main result is that when $\max_i \mathcal{R}_i>1$ and all of the reproduction numbers are distinct, i.e. $\mathcal{R}_i\neq \mathcal{R}_j \ \forall i\neq j$, the viral strain with the maximal basic reproduction number competitively excludes the other strains.  As an application of the model, HIV evolution is considered and simulations are provided.  \\ \\
\noindent {\bf Keywords:} mathematical model, virus dynamics, age-structure, global stability analysis, multi-strain, competitive exclusion, Lyapunov functional, infinite-dimensional dynamical system
\end{abstract}

\section{Introduction}

Mathematical modeling of within-host virus dynamics has been an extensive subject of research over the past two decades.  Many of the models have been related to a differential equation system introduced by Perelson et al. in 1996 \cite{perelson}, often referred to as the standard virus model.  The standard model describes the coupled changes in target cells, infected cells, and free virus particles through time in a single compartment of an infected individual.  The model has been very useful in quantifying certain parameters, especially for HIV, and providing insights for viral infections.

De Leenheer and Smith rigorously characterized the dynamical properties of the standard virus model \cite{deleenheerandsmith}.  They found that a quantity known as the basic reproduction number, $\mathcal{R}_0$, largely determines the global dynamics of the system.  If $\mathcal{R}_0<1$, then the virus is cleared.  On the other hand, when $\mathcal{R}_0>1$, a unique positive equilibrium exists, but oscillatory behavior can not be ruled out in general.  De Leenheer and Pilyugin found a sufficient condition for global stability of the positive equilibrium by placing restrictions on the net natural growth rate of the uninfected cell population (which they called the ``sector condition'') and utilizing a Lyapunov function \cite{deleenheerandpilyugin}.

However, the standard model does not include many relevant factors present in within-host virus dynamics.  The standard virus model assumes simultaneous infection of target cells and viral production, and hence ignores intracellular delays.  To account for the time lag between viral entry of a target cell and subsequent viral production from the newly infected cell, Perelson et al. included discrete and distributed delays in the standard model \cite{delaymod}.  Nelson et al. considered a model with age structure in the infected cell component, which generalizes the delay standard virus model by allowing for infected cell death rate and viral production to vary with age since infection of an infected cell \cite{agemodel0}.  This model has appeared often in the literature \cite{agemodel1, agemodel2,agemodelGAS,agemodelrong} and the global dynamics were analyzed in \cite{bropil}.

In addition, multiple strains or populations of viruses often occur in one host as a result of within-host evolution or several infection events.  The question then arises; what are the fate of multiple virus strains or species competing for the same target cell population?  De Leenheer and Pilyugin studied a multi-strain version of the standard model with and without mutations, and established competitive exclusion when mutations are not present and the sector condition holds \cite{deleenheerandpilyugin}.  Introducing small mutation rates produces multi-strain persistence, but simply perturbs the viral steady states of the no-mutation model.  Many other studies have investigated multiple strains in within-host virus models \cite{althaus, ball, bonhoeffer, malaria, rongaverage,souza}.

In this paper, we present a global analysis of a within-host virus model with, both, multiple virus strains and age structure in the infected cell compartments of the various strains.  Cells infected with virus strain $i$ die at per-capita rate $\delta_i(a)$ and produce virions at per-capita rate $p_i(a)$, where both rates are functions of the age-since-infection of the cell.  Thus, we allow for each viral strain to have a distinct infected cell life history and compete for a common target cell population.  Incorporating non-constant viral production rates and infected cell death rates when investigating the evolution of viruses and the dynamics of strain replacement has been of recent interest  \cite{agemodel1, althaus}.   Our main result is that the competitive exclusion principle and the principle of $\mathcal{R}_0$ maximization hold in this model, i.e. the system will converge to a steady state where the virus strain with maximal reproduction number persists and all other viral strains are extinct.  The global analysis required for this proof is complicated by the fact that the underlying state space for an age-structured model is infinite dimensional.

Recently, there has been progress in the global analysis of infection-age structured models via Lyapunov functionals.  McCluskey and others have incorporated an integration term into a Lyapunov functional form often utilized for Lotka-Volterra type ODE models \cite{bropil,magal, mcc}.  The application of the Lyapunov functional in age-structured models requires more delicate analysis than the case of ODEs. This often entails proving asymptotic smoothness of the semigroup generated by the family of solutions and proving existence of an interior global attractor, and then defining a Lyapunov functional on this attractor.  In this paper, we modify this approach in order to maximize the utility of the Lyapunov functional that we found for our system.  We still need to prove existence of an interior global attractor, but we can employ strong mathematical induction and utilize the Lyapunov functional in order to establish uniform persistence, from which existence of an interior global attractor follows.  

The paper is organized as follows:  In Section \ref{s2}, we introduce a general formulation of the model.   In Section \ref{s3}, we show existence of $C^0$ semigroup generated by solutions to the model and prove some important properties of the semigroup.  In section \ref{s4}, we define the reproduction number, $\mathcal{R}_i$, of each strain, and prove that a virus strain is cleared if its reproduction number is less than unity.   In Section \ref{s5}, we prove the main result that competitive exclusion occurs.  In Section \ref{s6}, numerical simulations illustrate the result and we provide insight into the transient dynamics with application to HIV evolution.  In Section \ref{discuss}, we provide a discussion of the results and outline future work. 

\section{Model Formulation}\label{s2}

We extend the standard virus model by considering multiple virus strains and allowing for infected cell death rate and viral production to vary with age since infection of an infected cell.  Consider the following model:

\begin{align*}
\frac{dT(t)}{dt} &= f(T(t))- \sum_{i=1}^n k_iV_i(t)T(t), \\
 \frac{dV_i(t)}{dt} &= \int_0^{\infty} \!p_i(a)T_i^{*}(t,a)\,da - \gamma_i V_i(t),  \quad i=1,...,n \\
\frac{ \partial  T_i^{*}(t,a)}{\partial t}+\frac{\partial  T_i^{*}(t,a)}{\partial a} &=  - \delta_i(a) T_i^{*}(t,a), \quad i=1,...,n \label{a1} \tag{1} \\
T_i^*(t,0)&=k_iV_i(t)T(t),
\end{align*}
where $T(t)$ is the concentration of uninfected cells and $V_i(t)$ is the concentration of free virus particles of strain $i$.   $T^*_i(t,a)$ denotes the density, with respect to age since infection, of infected cells which are infected by virus strain $i$.  
  
The function $f(T)$ represents the net growth rate of the uninfected cell population.  The parameters $k_i$ and $\gamma_i$ are the infection rate and clearance rate for virus strain $i$, respectively.  The net growth rate $f(T)$ is assumed to be smooth and satisfy the following property:  there exists $\overline{T}_0>0$ such that:
\begin{align*}
f(T)>0 \ \ \text{for all } 0\leq T < \overline{T}_0, \quad  \text{and } f(T)<0 \ \ \text{for all } T>\overline{T}_0. \tag{2}\label{2.1.0}
\end{align*}
By continuity of $f$, $f(\overline{T}_0)=0$.  Thus, $\overline{T}_0$ is the equilibrium concentration of target cells in an uninfected individual.  Two commonly used functional forms for $f(T)$ are:
\begin{enumerate}
\item $f(T)=f_1(T)=s-cT$   (Nowak and May) \cite{nowak}
\item $f(T)=f_2(T)=s-cT+rT(1-T/T_{max})$   (Perelson and Nelson) \cite{stmodel}
\end{enumerate}
Both $f_1(T)$ and $f_2(T)$ satisfy Condition (\ref{2.1.0}).  The first form, $f_1(T)$, is a simple linear function, which assumes that cells are supplied at a constant rate $s$ from a source such as the thymus, and die at the (per-capita) rate $c$.  $f_2(T)$ adds a logistic proliferation term to the equation.

The functions $\delta_i(a)$ and $p_i(a)$ are the infection-age dependent (per-capita) rates of infected cell death and virion production for infected cells infected with virus strain $i$, respectively.  The functions $\delta_i(a)$ and $p_i(a)$ are assumed to be in $L^{\infty}_+$, the non-negative cone of $L^{\infty}(0,\infty)$.  Let $\kappa>0$ be an upper bound for the functions $p_i(a)$, i.e. $p_i(a)\leq\kappa \ \text{a.e.} \ \forall i$.  We suppose further that $\exists b>0$ such that $\delta_i(a)\geq b \ \forall i$ a.e. on $[0,\infty)$.  

There are multiple simplifying assumptions in the model (\ref{a1}).  First, the terms $-k_iV_iT$ associated with the loss of free virus particles due to absorption in target cell upon infection have been ignored in the $\frac{dV_i}{dt}$ equations.  This is a common assumption in HIV models since the loss terms are considered relatively small and can be absorbed into the virus clearance rates $\gamma_i$ \cite{stmodel}.  Another assumption we make is that viruses of different strains cannot infect the same cell.  In reality for HIV, cells can become infected by multiple virus strains, although co-infected cells represent a small fraction of infected cells \cite{althaus}.  Allowing for co-infection or super-infection of cells would add significant complexity to the model (\ref{a1}) and the analysis, hence we leave this for future studies.  

Various approaches have been developed for analyzing age structured models.  The general idea is to study the nonlinear semigroup generated by the family of solutions.  One approach is to use the theory of integrated semigroups \cite{magal,thieme}.  We employ another method, namely integrating solutions along the characteristics to obtain an equivalent integro-differential equation.  This approach was utilized by Webb for age-dependent population models \cite{webb}.

For $i=1,...,n$, define
\begin{align*}
\phi_i(a)=e^{-\int_0^a \delta_i(s)\,ds}. \tag{3}\label{5.2.4}
\end{align*}
The function $\phi_i(a)$ can be interpreted as the probability that an infected cell (infected with strain $i$) will survive to age $a$.
Then, integrating along the characteristics, we arrive at the following more general formulation:
\begin{align*}
\frac{dT(t)}{dt} &= f(T(t))- \sum_{i=1}^n k_iV_i(t)T(t), \\
 \frac{dV_i(t)}{dt} &= \int_0^{\infty} \!p_i(a)T_i^*(t,a)\,da- \gamma V_i(t), \label{a3} \tag{4} \\
T_i^*(t,a) &= \phi_i(a) k_iV_i(t-a)T(t-a) \mathds{1}_{\left\{t>a\right\}} +\frac{\phi_i(a)}{\phi_i(a-t)}T_i^*(0,a-t)\mathds{1}_{\left\{a>t\right\}}\\
T_i^*(t,0)&=k_iV_i(t)T(t) \quad  T_i^*(0,a) \in L^1_+(0,\infty), \\
T(0)& \in  \mathbb{R}_+, \quad V_i(0) \in \mathbb{R}_+,
\end{align*}
where $L^1_+(0,\infty)$ is the non-negative cone of $L^1(0,\infty)$, $\mathbb{R}_+=[0,\infty)$ and $\mathds{1}_{\left\{t>a\right\}}$ is the indicator function for the set $\left\{a\in (0,\infty): t>a\right\}$.  Define the state space $X$ as 
$$X=\mathbb{R}^{n+1}_+\times \prod_1^n L^1_+,$$
where $L^1_+=L^1_+(0,\infty)$ and $\mathbb{R}^{n+1}_+$ is the non-negative orthant of $\mathbb{R}^{n+1}$.  Note that $X$ is a closed subset of a Banach Space, and hence is a complete metric space.  The norm on $X$ is taken to be:
$$\left\|x\right\|=|T| + |V_1|+....+|V_n|+\int_0^{\infty}|T^*_1(a)|\,da+......+\int_0^{\infty}|T^*_n(a)|\,da$$
for $x=\left(T,V_1,....,V_n,T^*_1(a),.....,T^*_n(a)\right)\in X$.  Hence, the norm represents the total concentration of the healthy cells, infected cells, and virus in the body.

\section{Existence and properties of semigroup}\label{s3}
\subsection{Existence and boundedness}
The local existence, uniqueness, and non-negativeness of solutions to the system (\ref{a3}) can be demonstrated.
\begin{proposition}\label{exist}
 Let $x_0\in X$.  For any neighborhood $B_0\subset  X$ with $x_0\in B_0$, there exists an $\epsilon>0$ and a unique continuous function, $\psi:[0,\epsilon]\times B_0 \rightarrow X$ where $\psi(t,x)$ is the solution to the model (\ref{a3}) with $\psi(0,x)=x$.
\end{proposition}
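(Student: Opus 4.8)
The plan is to reformulate the system (\ref{a3}) as a fixed-point problem on a suitable space of continuous trajectories and apply the contraction mapping principle, exploiting the fact that integration along characteristics has already converted the PDE into an explicit integral expression for $T_i^*(t,a)$.

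The plan is to exploit the fact that, having integrated along characteristics, the third line of (\ref{a3}) expresses $T_i^*(t,a)$ explicitly in terms of the finite-dimensional unknowns $T$ and $V_i$ at earlier times together with the prescribed initial datum $T_i^*(0,\cdot)$. First I would substitute this formula into the $V_i$-equation, splitting the integral $\int_0^\infty p_i(a)T_i^*(t,a)\,da$ at $a=t$ into the ``newly infected'' part $k_i\int_0^t p_i(a)\phi_i(a)V_i(t-a)T(t-a)\,da$ and the known ``initial cohort'' part $\int_t^\infty p_i(a)\frac{\phi_i(a)}{\phi_i(a-t)}T_i^*(0,a-t)\,da$. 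Writing everything in integrated (Volterra) form then recasts the problem as a fixed-point equation $u=\mathcal{F}(u)$ for $u=(T,V_1,\dots,V_n)$ in the Banach space $C\big([0,\epsilon],\mathbb{R}^{n+1}\big)$, where $\mathcal{F}$ collects the outer time integrals of $f(T)-\sum_i k_iV_iT$ and of the production-minus-clearance terms, and where the initial-cohort contribution enters as a fixed continuous forcing function of $s$.

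Second, I would run the standard contraction-mapping argument on this reduced system. Taking $B_0$ to be bounded, fix a closed ball $B_R\subset C\big([0,\epsilon],\mathbb{R}^{n+1}\big)$ centered at the constant trajectory $x$. Smoothness of $f$ makes it locally Lipschitz, the bilinear terms $k_iV_iT$ are Lipschitz on bounded sets, and the memory integral is controlled using $p_i\le\kappa$ and $\frac{\phi_i(a)}{\phi_i(a-t)}=e^{-\int_{a-t}^a\delta_i(s)\,ds}\le 1$; each of these estimates carries a factor of the time length, so for $\epsilon$ small enough (depending only on $R$ and the bounds $\kappa,b,k_i,\gamma_i$, uniformly over $x\in B_0$) the map $\mathcal{F}$ sends $B_R$ into itself and is a contraction. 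Banach's fixed-point theorem then yields a unique $u\in C\big([0,\epsilon],\mathbb{R}^{n+1}\big)$, and I recover $T_i^*(t,\cdot)$ from the explicit formula, checking via the same bound $\phi_i\le 1$ that its $L^1$ norm is finite, indeed dominated by $\|T_i^*(0,\cdot)\|_{L^1}$ plus a term of order $\epsilon$, so that $\psi(t,x)\in X$.

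Third, I would verify non-negativity and continuity. Non-negativity of $T_i^*$ is immediate from the formula once $T,V_i\ge 0$; for the finite-dimensional part, $f(T)>0$ as $T\downarrow 0$ together with vanishing of the loss term at $T=0$ keeps $T\ge 0$, and the production term being non-negative with linear decay $-\gamma_i V_i$ keeps $V_i\ge 0$, which can be made rigorous by running the contraction inside the closed, hence complete, non-negative cone. Continuous dependence of $u$ on the datum $x$ follows from Lipschitz dependence of $\mathcal{F}$ on $x$ combined with the uniform contraction constant, giving joint continuity of $(t,x)\mapsto u$. The main obstacle is the genuinely infinite-dimensional component: I must show that $t\mapsto T_i^*(t,\cdot)$ is continuous into $L^1$ and depends continuously on the initial $L^1$ datum. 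This reduces to strong $L^1$-continuity of the translation $T_i^*(0,\cdot)\mapsto T_i^*(0,\,\cdot-t)$ and of the weighting factor $\frac{\phi_i(a)}{\phi_i(a-t)}$, together with control of the matching across $a=t$ where the indicator functions switch; these are the delicate points, whereas the finite-dimensional existence, uniqueness, and continuous dependence are otherwise routine.
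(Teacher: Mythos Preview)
Your overall strategy---contraction mapping on Volterra integral equations obtained after integrating along characteristics---is the same as the paper's. The paper sets up the fixed point directly on $C\big([0,\epsilon]\times B_0,\widehat{X}\big)$ with $\widehat{X}=\mathbb{R}^{n+1}\times\prod_1^n L^1(0,\infty)$, whereas you first reduce to the finite-dimensional unknowns $(T,V_1,\dots,V_n)\in C\big([0,\epsilon],\mathbb{R}^{n+1}\big)$ and reconstruct $T_i^*(t,\cdot)$ afterwards; this is a legitimate and mildly simpler variant, and your identification of the $L^1$-continuity of translation and of the weight $\phi_i(a)/\phi_i(a-t)$ as the only nontrivial point in the continuity argument is accurate.

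The one place your sketch is genuinely shaky is non-negativity. Your claim that it ``can be made rigorous by running the contraction inside the closed, hence complete, non-negative cone'' does not work as stated: with the naive integrated form $V_i(t)=V_i(0)+\int_0^t\big[\text{production}(s)-\gamma_iV_i(s)\big]\,ds$, the map $\mathcal{F}$ need not send the non-negative cone to itself, since the term $-\gamma_i\int_0^tV_i(s)\,ds$ can overwhelm a small $V_i(0)$. The same obstruction applies to $T$ via the loss term $-\sum_ik_iV_iT$. The paper handles this by the substitution $\widetilde{V}_i(t)=e^{\gamma_it}V_i(t)$ and then rerunning the contraction: in the transformed variables the linear decay disappears and the fixed-point map genuinely preserves the cone. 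Equivalently, you could write the Duhamel form $V_i(t)=e^{-\gamma_it}V_i(0)+\int_0^te^{-\gamma_i(t-s)}\,[\text{production}(s)]\,ds$ (and similarly factor out a linear decay for $T$) before running the contraction; either fix repairs the gap.
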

\begin{proof}
Existence and uniqueness can be proved by formulating the solution to the system (\ref{a3}) as a fixed point of an integral operator, $\Lambda$, on an appropriate closed subset of $C\left([0,\epsilon]\times B_0,\widehat{X}\right)$, the set of continuous functions from $[0,\epsilon]\times B_0$ to $\widehat{X}$, where $\widehat{X}:=\mathbb{R}^{n+1}\times \prod_1^n L^1(0,\infty)$.  For $\epsilon>0$ sufficiently small, this map is a contraction, and hence, by the contraction mapping theorem, we obtain local existence and uniqueness of solutions to the system (\ref{a3}) (in the larger state space $\widehat{X}$).  Then, we define the transformations $\widetilde{V}_i(t)=e^{\gamma_i t}V_i(t)$, and show with a similar contraction argument that the transformed system has a unique solution whose state variables remain in the state space $X$, implying non-negativeness of the original solution.  The details are contained in \cite{bropil}, where the theorem is proved for the single-strain model (the case $n=1$).
\end{proof}
  Note that solutions to the system (\ref{a3}) are solutions to the system (\ref{a1}) if they have appropriate differentiability in the variable $a$.  If not, solutions to the system (\ref{a3}) are weak solutions to the system (\ref{a1}).  
  
 Next, we establish existence of a semigroup $S(t)$ generated by solutions to the model (\ref{a3}) and find that $S(t)$ is point dissipative.
\begin{proposition}\label{bounded}
Solutions to the system (\ref{a3}) remain bounded in forward time .  Therefore, the family of solutions to the system (\ref{a3}) form a $C^0$ semigroup on $X$, which we call $S(t)$.  Moreover, the semigroup $S(t)$ is point dissipative, i.e. there exists a bounded set $B\subset X$ which attracts all points in $X$ $\left( \forall x\in X, \ d\left(S(t)x,B\right)\rightarrow 0 \ \text{as} \ t\rightarrow \infty\right)$.
\end{proposition}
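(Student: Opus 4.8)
The plan is to establish the three assertions in sequence---forward boundedness of individual solutions, global existence yielding the semigroup, and point dissipativity---all driven by a single \emph{a priori} estimate coming from a Lyapunov-type functional that mimics the total cell-plus-virus mass. First I would bound the target-cell component. Since $V_i,T\ge 0$ along solutions by Proposition \ref{exist}, the infection term is non-positive, so
\begin{align*}
\frac{dT}{dt} \le f(T).
\end{align*}
By the sign condition (\ref{2.1.0}), the scalar equation $\dot u = f(u)$ has $\overline{T}_0$ as a globally attracting equilibrium from above, so $T(t)\le \max\{T(0),\overline{T}_0\}$ for all $t\ge 0$ and $\limsup_{t\to\infty}T(t)\le\overline{T}_0$; in particular $T$ is bounded, with an asymptotic bound independent of the initial data.

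Next I would control the virus and infected-cell components jointly. For constants $\alpha_i>0$ to be chosen, set
\begin{align*}
W(t) = T(t) + \sum_{i=1}^n \alpha_i V_i(t) + \sum_{i=1}^n \int_0^\infty T_i^*(t,a)\,da.
\end{align*}
Differentiating $\int_0^\infty T_i^*(t,a)\,da$ along the characteristics and using the boundary condition $T_i^*(t,0)=k_iV_iT$ gives
\begin{align*}
\frac{d}{dt}\int_0^\infty T_i^*(t,a)\,da = k_i V_i(t) T(t) - \int_0^\infty \delta_i(a)T_i^*(t,a)\,da,
\end{align*}
where the boundary term at $a=\infty$ vanishes because $\phi_i(a)\le e^{-ba}$ forces $T_i^*(t,a)\to 0$. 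Adding the three equations, the infection terms $\mp k_iV_iT$ cancel, leaving
\begin{align*}
\frac{dW}{dt} = f(T) + \sum_{i=1}^n \int_0^\infty \bigl(\alpha_i p_i(a)-\delta_i(a)\bigr)T_i^*(t,a)\,da - \sum_{i=1}^n \alpha_i\gamma_i V_i(t).
\end{align*}
Here the bounds $p_i(a)\le\kappa$ and $\delta_i(a)\ge b$ enter: choosing $\alpha_i = b/(2\kappa)$ makes $\alpha_i p_i(a)-\delta_i(a)\le -b/2$ a.e. Setting $\mu = \min\{b/2,\gamma_1,\dots,\gamma_n\}>0$, I would then obtain the differential inequality
\begin{align*}
\frac{dW}{dt} \le f(T) + \mu T - \mu W.
\end{align*}
Since $T$ is bounded and $f(T)+\mu T$ is continuous, the right-hand side is dominated by $M-\mu W$ for a constant $M$; Gr\"onwall's inequality yields $W(t)\le\max\{W(0),M/\mu\}$, which is forward boundedness, and because $\limsup_t T\le\overline{T}_0$ one may take $M = \sup_{0\le T\le\overline{T}_0+\epsilon}(f(T)+\mu T)$ asymptotically, giving $\limsup_t W(t)\le M/\mu$ independent of the initial point. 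As $W$ and $\|\cdot\|$ are equivalent (the $\alpha_i$ being fixed positive constants), this establishes point dissipativity with $B$ a suitable ball.

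Finally, forward boundedness rules out finite-time blow-up, so the local solutions of Proposition \ref{exist} extend to all $t\ge 0$; defining $S(t)x$ as this global solution, the semigroup identities $S(0)=I$ and $S(t+s)=S(t)S(s)$ follow from uniqueness, and strong continuity follows from the continuous dependence in Proposition \ref{exist}, giving the $C^0$ semigroup. I expect the main obstacle to be the rigorous justification of the identity for $\frac{d}{dt}\int_0^\infty T_i^*\,da$ at the level of the weak/integral-formulation solutions of (\ref{a3}): one must differentiate the two-piece representation of $T_i^*$, confirm that the term carried from the initial age profile decays (again via $\phi_i(a)/\phi_i(a-t)\le e^{-bt}$), and check that the boundary contribution at $a=\infty$ genuinely vanishes, rather than appealing to the classical PDE form (\ref{a1}), which need not hold for merely $L^1$ data.
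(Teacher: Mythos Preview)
Your proposal is correct and follows essentially the same route as the paper: both use the weighted total-mass functional $T+\sum_i\frac{b}{2\kappa}V_i+\sum_i\int_0^\infty T_i^*\,da$ and the estimates $p_i\le\kappa$, $\delta_i\ge b$ to obtain a linear differential inequality of the form $\dot W\le C-\mu W$. The only cosmetic difference is that the paper asserts $f(T)\le A-BT$ directly, whereas you first bound $T$ via $\dot T\le f(T)$ and then use continuity of $f$ on a bounded interval; your version is arguably more faithful to the stated hypothesis (\ref{2.1.0}), and the technical caveat you flag about differentiating $\int_0^\infty T_i^*\,da$ is exactly what the paper handles in one line by appealing to the smoothing properties of convolution in the integral formulation.
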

\begin{proof}
If solutions can be shown to remain bounded in forward time, then existence of the semigroup can be established.  Indeed, for $t\geq 0$ define the flow $S(t):X \rightarrow X$ as $S(t)x=\psi(t,x)$, where $\psi(t,x)$ is the solution to the model (\ref{a3}) with initial condition $x$.  The family of functions $\left\{S(t)\right\}_{t\geq 0}$ satisfy the properties of a $C^0$ semigroup on $X$ \cite{hale1} (the semigroup property and continuity are a consequence of Proposition \ref{exist}).  Boundedness in forward time and point dissipativity (assuming boundedness in forward time) can be proved with the same argument (this will become apparent in the next paragraph).  Hence, we suppose that the solutions are forward complete, i.e. exist on the time interval $[0,\infty)$, and show that $S(t)$ is point dissipative.  

By looking at the integral equations, we observe that $T(t)$, $V_i(t)$, and $\int_0^\infty \! T^*_i(t,a)\,da$ are differentiable in $t$ (for all $i=1,...,n$) by the fundamental theorem of calculus for $T(t), V_i(t)$ and for the case of $\int_0^{\infty}\!T^*_i(t,a) \,da$, the smoothing properties of convolution.  Also, the assumption on $f(T)$ imply there exists $A>0$ and $B>0$ such that $f(T)\leq A-BT$.  Let $\gamma=\min(\gamma_1,....,\gamma_n)$ and consider $T+\sum_{i=1}^n{\int_0^{\infty}\! T^*_i \,da} + \frac{b}{2\kappa}\sum_{i=1}^nV_i$.  Integrating over all ages $a$ in the partial differential equation in the model (\ref{a1}) and adding time derivatives of the model components, we obtain:
\begin{align*}
\frac{d}{dt}\left(T+\sum_{i=1}^n{\int_0^{\infty}\! T^*_i \,da} + \frac{b}{2\kappa}\sum_{i=1}^nV_i\right)&=f(T)-\sum_{i=1}^n k_iV_iT+\sum_{i=1}^n\left[k_iV_iT-\lim_{a\rightarrow\infty}T^*_i(t,a) \right. \\
& \qquad \qquad \left. -\int_0^{\infty}\!\delta_i(a)T^*\,da + \frac{b}{2\kappa} \left( \int_0^{\infty}\! p_i(a)T^*_i \,da -\gamma_i V_i\right)\right] \\
&\leq A-BT-b\sum_{i=1}^n\int_0^{\infty}\! T^*_i\,da + \frac{b}{2\kappa}\kappa\int_0^{\infty}\! T^*_i\,da - \frac{b}{2\kappa}\gamma \sum_{i=1}^n V_i \\
&= A-BT -\frac{b}{2} \sum_{i=1}^n\int_0^{\infty}\! T^*_i\,da -\frac{b}{2\kappa}\gamma  \sum_{i=1}^n V_i \\
&\leq A - \alpha \left(T+ \sum_{i=1}^n\int_0^{\infty}\! T^*_i \,da + \frac{b}{2\kappa} \sum_{i=1}^n V_i\right)
\end{align*}
where $\alpha=\min(B,\frac{b}{2},\gamma)$.  This implies that $\limsup_{t\rightarrow \infty}(T+\int_0^{\infty}\! T^* \,da + \frac{b}{2\kappa}V)\leq \frac{A}{\alpha}$.  Hence, the semigroup $S(t)$ is point dissipative.
\end{proof}

\subsection{Asymptotic smoothness}
Next, we establish asymptotic smoothness of the semigroup.  The semigroup $S(t)$ is \emph{asymptotically smooth}, if, for any nonempty, closed bounded set $B\subset X$ for which $S(t)B\subset B$, there is a compact set $J\subset B$ such that $J$ attracts $B$.    A definition which is useful in proving asymptotic smoothness is the following:  The semigroup $S(t)$ is \emph{completely continuous} if for each $t>0$ and each bounded set $B\subset X$, we have $\left\{S(s)B, 0\leq s \leq t\right\}$ is bounded and $S(t)B$ precompact.  We will apply the following theorem:
\begin{theorem}[\cite{hale1}]\label{asymsmooth}
For each $t\geq 0$, suppose $S(t)=U(t)+C(t):X\rightarrow X$ has the property that $C(t)$ is completely continuous and there is a continuous function $k:\mathbb{R}^+\times \mathbb{R}^+\rightarrow \mathbb{R}^+$ such that $k(t,r)\rightarrow 0$ as $t\rightarrow\infty$ and $\left\|U(t)x\right\|\leq k(t,r)$ if $\left\|x\right\| \leq r$.  Then $S(t), t\geq 0$, is asymptotically smooth.
\end{theorem}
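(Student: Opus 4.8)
Since this is the abstract decomposition theorem of Hale, the plan is to reconstruct its classical proof via the Kuratowski measure of noncompactness $\alpha$. For a bounded set $A$ in the ambient Banach space $\widehat{X}$, recall that $\alpha(A)$ is the infimum of the $d>0$ for which $A$ admits a finite cover by sets of diameter $\le d$; the facts I would use are that $\alpha(A)=0$ iff $\overline{A}$ is compact, that $\alpha$ is monotone and unchanged under passing to the closure, and that it is subadditive with respect to Minkowski sums, $\alpha(A_1+A_2)\le\alpha(A_1)+\alpha(A_2)$. The strategy has two steps: first use the splitting $S(t)=U(t)+C(t)$ to show that $S(t)$ is $\alpha$\emph{-contracting}, meaning $\alpha(S(t)B)\to 0$ as $t\to\infty$ for every bounded $B$; then, for an invariant closed bounded $B$, manufacture the attracting compact set $J$ as a nested intersection.

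For the first step I would fix $B\subset\{x:\|x\|\le r\}$. Since $S(t)x=U(t)x+C(t)x$ pointwise, $S(t)B\subset U(t)B+C(t)B$, so monotonicity and subadditivity give $\alpha(S(t)B)\le\alpha(U(t)B)+\alpha(C(t)B)$. Complete continuity of $C(t)$ makes $C(t)B$ precompact, hence $\alpha(C(t)B)=0$. The bound $\|U(t)x\|\le k(t,r)$ for $x\in B$ forces the diameter of $U(t)B$ to be at most $2k(t,r)$, so $\alpha(U(t)B)\le 2k(t,r)$. Combining, $\alpha(S(t)B)\le 2k(t,r)$, which tends to $0$ as $t\to\infty$ by the hypothesis on $k$.

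For the second step, let $B$ be nonempty, closed and bounded with $S(t)B\subset B$, and set $\gamma_t^+(B)=\bigcup_{s\ge t}S(s)B$. The semigroup property together with invariance yields $\gamma_t^+(B)=S(t)\big(\bigcup_{u\ge 0}S(u)B\big)=S(t)\gamma_0^+(B)$ with $\gamma_0^+(B)\subset B$ bounded, so Step~1 gives $\alpha(\gamma_t^+(B))\le 2k(t,r)\to 0$. The closures $\overline{\gamma_t^+(B)}$ are nonempty, nested decreasing in $t$, contained in $B$ (as $B$ is closed), with $\alpha(\overline{\gamma_t^+(B)})\to 0$. The generalized Cantor intersection theorem (Kuratowski) then shows that $J:=\bigcap_{t\ge 0}\overline{\gamma_t^+(B)}$ is nonempty and compact, that $J\subset B$, and that the Hausdorff semidistance $\delta(\overline{\gamma_t^+(B)},J)\to 0$ (which holds for the full continuous family by monotonicity of $\delta$ in $t$). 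Since $S(t)B\subset\gamma_t^+(B)$, this gives $\delta(S(t)B,J)\to 0$, i.e. $J$ attracts $B$, which is exactly asymptotic smoothness.

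The work here is the abstract machinery rather than any single computation: the only place the hypotheses enter is the clean estimate $\alpha(S(t)B)\le 2k(t,r)$, and once $\alpha(S(t)B)\to 0$ is in hand the construction of $J$ is routine. Consequently the main obstacle is having the correct properties of $\alpha$ (subadditivity under Minkowski sums, invariance under closure, and the characterization of precompactness) and the generalized Cantor intersection theorem at one's disposal, together with verifying the two technical niceties that the nested family is nonempty (so Kuratowski applies) and that $J\subset B$ (which relies on $B$ being closed and invariant).
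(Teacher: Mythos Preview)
The paper does not give its own proof of this statement: Theorem~\ref{asymsmooth} is quoted from Hale~\cite{hale1} and used as a black box in the proof of Proposition~\ref{smooth}. Your reconstruction via the Kuratowski measure of noncompactness is the classical argument from that reference, and the two-step scheme (first show $\alpha(S(t)B)\le 2k(t,r)\to 0$ via the decomposition, then build $J=\bigcap_{t\ge 0}\overline{\gamma_t^+(B)}$ and invoke the generalized Cantor intersection theorem) is correct and complete. The only point worth flagging is that $X$ here is a closed cone in a Banach space rather than a linear space, so the Minkowski sum $U(t)B+C(t)B$ need not lie in $X$; this is harmless because you only use it to bound $\alpha(S(t)B)$ from above in the ambient Banach space $\widehat{X}$, and all the sets $\overline{\gamma_t^+(B)}$, as well as $J$, remain in $X$ by closedness.
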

Since $L^1_+$ is a component of our state space $X$, we need a notion of compactness in $L^1_+$.   Being an infinite dimensional space, boundedness does not imply precompactness.  We use the following result.
\begin{theorem}[\cite{adams}]\label{compact}
Let $K\subset L^p_+(0,\infty)$ be closed and bounded where $p\geq 1$.  Then $K$ is compact iff the following hold:
\begin{itemize}
\item[(i)] $\lim_{h\rightarrow 0}\int_0^{\infty}\! |u(z+h)-u(z)|^p \, dz=0$ uniformly for $u\in K$.  ($u(z+h)=0$ if $z+h<0$).
\item[(ii)] $\lim_{h\rightarrow\infty}\int_h^{\infty}\!|u(z)|^p \, dz =0$ uniformly for $u\in K$.
\end{itemize}
\end{theorem}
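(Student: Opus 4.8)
The plan is to recognize this as the classical Kolmogorov--Riesz (or Fr\'echet--Kolmogorov) compactness criterion and to prove it by establishing that $K$ is totally bounded; since $K$ is a closed subset of the complete space $L^p(0,\infty)$, total boundedness is equivalent to compactness. The necessity of (i) and (ii) is the easier direction. Assuming $K$ compact, hence totally bounded, I would fix $\epsilon>0$ and take a finite $\epsilon$-net $\{u_1,\dots,u_m\}\subset K$. For (i), I would invoke the standard continuity of translation in $L^p$ for each single function $u_j$---proved by approximating $u_j$ by compactly supported continuous functions, which are dense in $L^p$---and then transfer this to all of $K$ via the triangle inequality: any $u\in K$ lies in some ball about $u_j$, so $\|u(\cdot+h)-u\|_p \le 2\epsilon + \|u_j(\cdot+h)-u_j\|_p$, and the last term is uniformly small once $h$ is small because there are only finitely many $u_j$. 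The decay condition (ii) follows by the same net argument, each $u_j$ having a uniformly small tail for $R$ large.

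The substantive direction is sufficiency: (i) and (ii) imply total boundedness. The key device is the local averaging (Steklov) operator
\[
(A_h u)(z)=\frac{1}{h}\int_0^h u(z+t)\,dt .
\]
Using Minkowski's integral inequality together with (i), I would show $\|A_h u - u\|_p \le \sup_{0\le t\le h}\|u(\cdot+t)-u\|_p$, which tends to $0$ as $h\to 0$ uniformly over $u\in K$. Next, by H\"older's inequality each $A_h u$ is uniformly bounded in sup-norm, with $|A_h u(z)|\le h^{-1/p}\sup_{u\in K}\|u\|_p$, and uniformly (H\"older-)continuous on any bounded interval; so for fixed $h$ and fixed $R$ the family $\{A_h u|_{[0,R]}:u\in K\}$ is bounded and equicontinuous in $C[0,R]$. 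By the Arzel\`a--Ascoli theorem this family is precompact in $C[0,R]$, and since the inclusion $C[0,R]\hookrightarrow L^p(0,R)$ is continuous, it is precompact in $L^p(0,R)$ as well.

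To conclude, given $\epsilon>0$ I would use (ii) to pick $R$ so that every tail $\int_R^\infty |u|^p$ is below $\epsilon^p$, then pick $h$ small enough that $\|A_h u - u\|_p<\epsilon$ for all $u\in K$, and finally extract a finite $\epsilon$-net for the precompact set $\{A_h u|_{[0,R]}\}$ in $L^p(0,R)$. Adding the three sources of error---mollification, truncation, and the net---produces a finite $O(\epsilon)$-net for $K$ in $L^p(0,\infty)$, giving total boundedness and hence compactness. I expect the main obstacle to be bookkeeping the uniformity: the equicontinuity and sup-bound of the averaged family must be controlled uniformly in $u\in K$, which is exactly where the uniform translation estimate (i) and the uniform bound coming from boundedness of $K$ enter, and the three error terms must then be combined so that the total is governed by a single choice of parameters independent of $u$.
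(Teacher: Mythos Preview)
The paper does not prove this statement at all: Theorem~\ref{compact} is quoted as a known result from Adams' \emph{Sobolev Spaces} (reference~[1] in the paper), with no accompanying proof. It is invoked purely as a tool to establish asymptotic smoothness of the semigroup in Proposition~\ref{smooth}.

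Your proposal is the standard proof of the Kolmogorov--Riesz (Fr\'echet--Kolmogorov) compactness criterion and is correct. The necessity argument via a finite $\epsilon$-net and continuity of translation on individual $L^p$ functions is routine; the sufficiency argument via Steklov averages $A_h u$, the uniform estimate $\|A_h u - u\|_p \to 0$ from (i), the Arzel\`a--Ascoli precompactness of $\{A_h u|_{[0,R]}\}$ in $C[0,R]$, and the tail control from (ii) assemble exactly as you describe into a finite net for $K$. There is nothing to compare here beyond noting that you have supplied a proof where the paper simply cites one.
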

Using this $L^p$ compactness condition and Theorem \ref{asymsmooth}, we can establish the following proposition.
\begin{proposition}\label{smooth}
The semigroup $S(t)$ is asymptotically smooth.
\end{proposition}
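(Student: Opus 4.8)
The plan is to apply Hale's decomposition theorem (Theorem \ref{asymsmooth}) by splitting $S(t)$ according to the two terms in the integrated-along-characteristics formula for $T_i^*(t,a)$ in the system (\ref{a3}). Writing $S(t)x=U(t)x+C(t)x$, I would let $U(t)$ carry only the contribution of the initial infected-cell data, namely the components $\widetilde{T}_i^*(t,a)=\frac{\phi_i(a)}{\phi_i(a-t)}T_i^*(0,a-t)\mathds{1}_{\{a>t\}}$ (with all finite-dimensional entries set to zero), and let $C(t)$ carry the finite-dimensional part $(T(t),V_1(t),\dots,V_n(t))$ together with the ``newly infected'' components $\widehat{T}_i^*(t,a)=\phi_i(a)k_iV_i(t-a)T(t-a)\mathds{1}_{\{t>a\}}$. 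This choice is designed so that the vanishing behaviour required of $U(t)$ comes from the decay of old infected cells, while the precompactness required of $C(t)$ comes from the smoothing along characteristics of the renewal term.

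For the $U(t)$ part, the key estimate uses the standing assumption $\delta_i(a)\ge b$: after the substitution $\sigma=a-t$ one has $\int_0^\infty|\widetilde{T}_i^*(t,a)|\,da=\int_0^\infty\frac{\phi_i(\sigma+t)}{\phi_i(\sigma)}|T_i^*(0,\sigma)|\,d\sigma\le e^{-bt}\int_0^\infty|T_i^*(0,\sigma)|\,d\sigma$, since $\phi_i(\sigma+t)/\phi_i(\sigma)=e^{-\int_\sigma^{\sigma+t}\delta_i}\le e^{-bt}$. Summing over $i$ and using that each $\int_0^\infty|T_i^*(0,\sigma)|\,d\sigma\le\|x\|$ gives $\|U(t)x\|\le n\,e^{-bt}\|x\|$, so the hypothesis of Theorem \ref{asymsmooth} is met with $k(t,r)=n\,e^{-bt}r$, which tends to $0$ as $t\to\infty$.

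The substance of the argument is showing that $C(t)$ is completely continuous. Boundedness of $\{C(s)B:0\le s\le t\}$ for bounded $B$ follows from Proposition \ref{bounded} (and the bound on $U(s)$ above), and the finite-dimensional components of $C(t)B$ are then automatically precompact, so the work reduces to showing each set $\{\widehat{T}_i^*(t,\cdot):x\in B\}$ is precompact in $L^1_+$ via Theorem \ref{compact}. Condition (ii) is immediate, because $\widehat{T}_i^*(t,\cdot)$ is supported on $[0,t]$, whence $\int_h^\infty|\widehat{T}_i^*(t,a)|\,da=0$ for $h>t$. For condition (i), writing $\widehat{T}_i^*(t,a)=\phi_i(a)g_i(t-a)$ on $[0,t]$ with boundary input $g_i(s):=k_iV_i(s)T(s)$, I would add and subtract $\phi_i(a+h)g_i(t-a)$ to split the translation integral into a ``$\phi$-smoothness'' piece and a ``$g$-smoothness'' piece, plus a boundary strip of length $|h|$ near $a=t$. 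Using that $\phi_i$ is Lipschitz because $\delta_i\in L^\infty$ (so $|\phi_i(a+h)-\phi_i(a)|\le\|\delta_i\|_\infty|h|$), that $g_i$ is bounded and Lipschitz on $[0,t]$ (from the uniform bounds on $T,V_i$ in Proposition \ref{bounded} and the differentiability noted there), each piece is bounded by a constant depending only on $t$ and the bound $r$ of $B$ times $|h|$, so the translation integral tends to $0$ uniformly in $x\in B$.

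The main obstacle is the uniform verification of condition (i): it forces me to establish, uniformly over the bounded set $B$, Lipschitz control of the boundary term $g_i(s)=k_iV_i(s)T(s)$, which in turn requires uniform bounds on $T'$, on $V_i'$, and hence on $\int_0^\infty p_i(a)T_i^*(t,a)\,da$ (the latter handled by $p_i\le\kappa$ and the boundedness of $\int_0^\infty T_i^*\,da$ from Proposition \ref{bounded}). Once these regularity estimates are assembled the translation bound is routine, and combining precompactness of $C(t)B$ with the decay estimate for $U(t)$ in Theorem \ref{asymsmooth} yields asymptotic smoothness of $S(t)$.
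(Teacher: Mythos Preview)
Your proposal is correct and follows essentially the same strategy as the paper: the same $U/C$ decomposition along characteristics, the same $e^{-bt}$ decay for $U$, and the same use of the Fr\'echet--Kolmogorov criterion (Theorem \ref{compact}) for the $L^1$ components of $C(t)B$. The only cosmetic difference is that for the ``$\phi$-smoothness'' piece you invoke the Lipschitz bound $|\phi_i(a+h)-\phi_i(a)|\le\|\delta_i\|_\infty|h|$, whereas the paper factors out $\phi_i(a)$, bounds it by $e^{-ba}$, and applies dominated convergence to $\int_0^\infty e^{-ba}\bigl(1-\phi_i(a+h)/\phi_i(a)\bigr)\,da$; both arguments accomplish the same thing.
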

\begin{proof}
Suppose that $B\subset X$ is bounded with $\sup_{x\in B}\left\|x\right\|\leq r$.  Define the projection of $S(t)B$ on to $\mathbb{R}^{n+2}$ as $\pi^0 S(t)B$.  Then $\pi^0 S(t)B$ is precompact because solutions remain bounded.  Now define the the projection of the semigroup $S(t)$ on to the $T^*_i(t,a)$ component in $L^1_+$ as $\pi_i S(t)$.  We will show that  $\pi_iS(t)=U_i(t)+C_i(t)$, where there exists $k(t,r)\rightarrow 0$ as $t\rightarrow\infty$ with $\left\|U_i(t)x\right\|\leq k(t,r)$ if $\left\|x\right\| \leq r$, and for any $B\subset X$ which is closed and bounded, we have $C_i(t)B$ is compact.  Then we can apply Theorem \ref{asymsmooth} for $S(t)=U(t)+C(t)$ where
$$U(t)= \left(0,....,0,U_1(t),.....,U_n(t) \right), \qquad C(t)= \left( \pi^0 S(t),C_1(t),....,C_n(t)\right) $$
Indeed, if $B\subset X$ is closed and bounded, then $C(t)B \subset \pi^0S(t)B \times \prod_{i=1}^n C_i(t)B$ is a closed subset of a compact set, and hence is compact.  Also, the decaying requirement for $U(t)$ is certainly satisfied.
In order to follow this plan of action, let $\pi_iS(t)= U_i(t)+C_i(t)$ where
\begin{align*}
(U_i(t)x)(a)&= \frac{\phi_i(a)}{\phi_i(a-t)}T^*_i(0,a-t)\mathds{1}_{\left\{a>t\right\}} ,\\
(C_i(t)x)(a)&= \phi_i(a)k_iV_i(t-a)T(t-a)\mathds{1}_{\left\{a<t\right\}}
\end{align*}
 Then
\begin{equation*}
\left\| U_i(t)x\right\| = \int_t^{\infty}\! \frac{\phi_i(a)}{\phi_i(a-t)}T^*_i(0,a-t) \, da  \leq e^{-bt}  \int_t^{\infty}\!T^*_i(0,a-t) \, da  \leq e^{-bt} \left\| T^*_i(0,\cdot) \right\| .
\end{equation*}
Hence, if we let $k(t,r)=re^{-bt}$, then certainly $k(t,r)\rightarrow 0$ as $t\rightarrow\infty$ and $\left\|U_i(t)x\right\|\leq k_i(t,r)$ if $\left\|x\right\| \leq r$.
To show that $C_i(t)$ satisfies the compactness condition, we apply Theorem \ref{compact}.

Let $B\subset X$ be closed and bounded.  Suppose $r>0$ such that $\left\|x\right\| \leq r$ for all $x\in B$.  Notice that for all $x\in B$, $\int_h^{\infty} \! |(C_i(t)x)(a)| \,da =0 \ \ \forall h\geq t$.  Therefore (ii) is satisfied for the set $C_i(t)B\subset L^1$.
To check condition (i), observe:
\begin{align*}
\int_0^{\infty}\! | (C_i(t)x)(a)&-(C_i(t)x)(a+h)|\,da  \\
&= \int_0^t \! \left| \phi_i(a)k_iV_i(t-a)T(t-a) -  \phi_i(a+h)k_iV_i(t-a-h)T(t-a-h)\right| \,da \\
                                                                            &=  \int_0^t \!  \phi_i(a)\left|k_iV_i(t-a)T(t-a) -  \frac{\phi_i(a+h)}{\phi_i(a)}k_iV_i(t-a-h)T(t-a-h)\right| \,da \\
                                                                           &\leq \int_0^t \! e^{-ba}\left| k_iV_i(t-a)T(t-a)-\frac{\phi_i(a+h)}{\phi_i(a)}k_iV_i(t-a)T(t-a) \right| \,da  \\
& \  \ +   \int_0^t \! e^{-ba}\frac{\phi_i(a+h)}{\phi_i(a)}\left| k_iV_i(t-a)T(t-a) - k_iV_i(t-a-h)T(t-a-h) \right| \,da  \tag{6} \label{a6}
\end{align*}
Let $M=\max(r,\frac{2\kappa A}{b\alpha})$ where $A,\alpha$ are defined in Proposition \ref{bounded}.  Notice that
\begin{align*}
 \int_0^t \!& e^{-ba}\left| k_iV_i(t-a)T(t-a)-\frac{\phi_i(a+h)}{\phi_i(a)}k_iV_i(t-a)T(t-a) \right| \,da  \\
                                         &= \int_0^t \! e^{-ba}k_iV_i(t-a)T(t-a)\left(1-\frac{\phi_i(a+h)}{\phi_i(a)}\right) \,da \\
													&\leq M \int_0^{\infty}\! e^{-ba}\left(1-\frac{\phi_i(a+h)}{\phi_i(a)}\right) \,da ,
\end{align*}
\begin{align*}
\lim_{h\rightarrow 0} \int_0^{\infty}\! e^{-ba}\left(1-\frac{\phi_i(a+h)}{\phi_i(a)}\right) \,da &= \int_0^{\infty}\! e^{-ba}\left(1-\lim_{h\rightarrow 0}\frac{\phi_i(a+h)}{\phi_i(a)}\right) \,da =0,
\end{align*}
where we applied Dominated Convergence Theorem.  Also,
\begin{align*}
& \int_0^t \! e^{-ba}\left| \frac{\phi_i(a+h)}{\phi_i(a)}k_iV_i(t-a)T(t-a) -  \frac{\phi_i(a+h)}{\phi_i(a)}k_iV_i(t-a-h)T(t-a-h) \right| \,da \\
& \ \ \  \leq k \sup_{\tau\in [0,t]} | V_i(\tau)T(\tau)-V_i(\tau-h)T(\tau-h)| \int_0^{\infty}\! e^{-ba} \,da  \\
&  \ \leq k \sup_{\tau\in [0,t]} \left( |V_i(\tau)|\cdot |T(\tau)-T(\tau-h)| + |T(\tau-h)|\cdot |V_i(\tau)-V_i(\tau-h)| \right) \int_0^{\infty}\! e^{-ba} \,da \tag{7} \label{a7}
\end{align*}
By the integral formulation, we find that
\begin{align*}
|V_i(\tau)-V_i(\tau-h)|&=\left|\int_{\tau-h}^{\tau}\!\int_0^{\infty}\!p_i(a)T^*_i(s,a)\,da\,ds - \gamma_i\int_{\tau-h}^{\tau}\!V_i(s)\,ds \right|  \\
&\leq h(\kappa \left\|T^*_i\right\| +\gamma_i \left\|V_i\right\|) \\
&\leq h(\kappa +\gamma_i)r
\end{align*}
\begin{align*}
 |T(\tau)-T(\tau-h)| &\leq \int_{\tau - h}^{\tau} \! |f(T(s))-k_iV_i(s)T(s)| \,ds \\
&\leq \left(\max_{s\in [0,r]}|f(s)|+r^2 \right)h
\end{align*}
  Hence, by Inequality \ref{a7},
\begin{align*}
 \int_0^t \! e^{-ba}\left| \frac{\phi_i(a+h)}{\phi_i(a)}k_iV_i(t-a)T(t-a) -  \frac{\phi_i(a+h)}{\phi_i(a)}k_iV_i(t-a-h)T(t-a-h) \right| \,da  &\leq h\cdot M
\end{align*}
where $M=r(\kappa +\gamma_i + \max_{s\in [0,r]}|f(s)|+r^2) \int_0^{\infty}\! e^{-ba} \,da$.  This converges uniformly to $0$ as $h\rightarrow 0$.  Therefore the equation (\ref{a6}) converges uniformly to $0$ as $h\rightarrow 0$ and condition (i) is proved for $C_i(t)B$.  Hence, by Theorem \ref{compact}, $C_i(t)B$ is compact.  By the aforementioned argument we can apply Theorem \ref{asymsmooth} and conclude that $S(t)$ is asymptotically smooth.
\end{proof}

\subsection{Limit sets and attractor}\label{prelim}

In this subsection, we recall several definitions concerning semigroup dynamics in infinite dimension.  We also prove two simple propositions about limit sets that will applied later in our analysis and state a theorem about existence of a global attractor.  

A positive orbit exists for all $x\in X$, however, a negative orbit need not exist for all $x\in X$ since the semigroup $S(t)$ is not onto.  When a negative orbit does exist for a point $x$, then we can find a complete orbit through $x$.  A \emph{complete orbit} through $x$ is a function $z:\mathbb{R}\rightarrow X$ such that $z(0)=x$ and, for any $s\in\mathbb{R}$, $S(t)z(s)=z(t+s)$ for $t\geq 0$.  
The \emph{omega limit set} of $x$, $\omega(x)$, is defined as
$$\omega(x):= \left\{ y\in X: \exists \ t_n \uparrow \infty \ \ \text{such that  } S(t_n)x\rightarrow y \right\}.$$
The \emph{alpha limit set corresponding to the complete orbit $z(t)$} through $x$ is denoted by $\alpha_{z}(x)$, and defined to be the following:
$$\alpha_{z}(x):= \left\{ y\in X: \exists \ t_n \downarrow -\infty \ \ \text{such that  } z(t_n)\rightarrow y \right\}.$$
A set $M\subset X$ is said to be \emph{forward invariant} if $S(t)M\subset M$ for all $t\geq 0$.  A set $M\subset X$ is said to be \emph{invariant} if $S(t)M=M$ for all $t\geq 0$.  The following equivalent definition will be important:  $M$ is invariant if and only if, for any $x\in M$, a complete orbit through $x$ exists and $\gamma(x)\subset M$.  \\
 The \emph{stable manifold} of a compact invariant set $A$ is denoted by $W^s$ and is defined as
$$W^s(A)= \left\{x\in X: \omega(x)\neq\emptyset \text{ and } \omega(x)\subset A \right\}.$$
The \emph{unstable manifold} is defined by
\begin{align*}
W^u(A)= \left\{x\in X: \text{there exists a backward orbit } z(t) \text{ through } x, \alpha_{z}(x)\neq\emptyset \text{ and } \alpha_z(x)\subset A \right\}.
\end{align*}

Now, we prove two propositions concerning limit sets in forward and backward time, respectively.  First, we prove a simple result about the stable manifold of the singleton $\left\{E_i\right\}$, $W^s(\left\{E_i\right\})$, which will be applied later in the proof of uniform persistence for our system.
\begin{proposition}\label{omega}
Let $x\in X$.  If $x\in W^s(\left\{E_i\right\})$, then $S(t)x\rightarrow E_i$ as $t\rightarrow\infty$. \end{proposition}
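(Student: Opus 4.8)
The plan is to reduce the claim to the standard dynamical-systems fact that a precompact positive orbit is attracted by its (nonempty, compact, invariant) omega limit set, and then to observe that the singleton hypothesis forces $\omega(x)$ to be exactly $\{E_i\}$. First I would unpack the hypothesis: by the definition of the stable manifold, $x \in W^s(\{E_i\})$ means $\omega(x)\neq\emptyset$ and $\omega(x)\subset\{E_i\}$; since $\{E_i\}$ is a single point, this forces $\omega(x)=\{E_i\}$. The remaining work is to convert this statement about subsequential limits into honest convergence of the whole trajectory.

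Next I would establish that the positive orbit $\gamma^+(x)=\{S(t)x : t\geq 0\}$ has compact closure in $X$. By Proposition \ref{bounded} the orbit is forward bounded, so $B:=\overline{\gamma^+(x)}$ is nonempty, closed, and bounded; moreover it is forward invariant, since $S(t)\gamma^+(x)\subset\gamma^+(x)$ together with continuity of the $C^0$ semigroup gives $S(t)B\subset B$. Asymptotic smoothness (Proposition \ref{smooth}) then supplies a compact set $J\subset B$ that attracts $B$, and in particular $d(S(t)x,J)\to 0$ as $t\to\infty$. The point of invoking Proposition \ref{smooth} is precisely that the $L^1$ components of the state are controlled by the compactness criterion of Theorem \ref{compact}; boundedness alone is not enough in these infinite-dimensional factors.

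With the orbit attracted to a compact set, I would finish by contradiction. Suppose $S(t)x\not\to E_i$; then there exist $\epsilon>0$ and times $t_n\uparrow\infty$ with $\|S(t_n)x - E_i\|\geq\epsilon$. Since $d(S(t_n)x,J)\to 0$ and $J$ is compact, the sequence $\{S(t_n)x\}$ admits a subsequence $S(t_{n_k})x\to y$ for some $y\in J$. By the definition of the omega limit set, $y\in\omega(x)=\{E_i\}$, so $y=E_i$; but passing to the limit in $\|S(t_{n_k})x - E_i\|\geq\epsilon$ yields $\|y-E_i\|\geq\epsilon>0$, a contradiction. Hence $S(t)x\to E_i$ as $t\to\infty$.

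The hard part will be the middle step: guaranteeing genuine precompactness of the positive orbit in $X$, since in the $L^1$ factors boundedness does not imply precompactness. This is exactly the role of the asymptotic smoothness proved in Proposition \ref{smooth}; once the attracting compact set $J$ is in hand, the extraction of a convergent subsequence and the contradiction are entirely routine.
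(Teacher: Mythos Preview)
Your proof is correct and follows essentially the same strategy as the paper: assume the convergence fails along a sequence $t_n\uparrow\infty$, use compactness to extract a subsequential limit, and obtain a point of $\omega(x)$ different from $E_i$, contradicting $\omega(x)=\{E_i\}$. The only difference is in how the compactness is obtained: the paper invokes the explicit decomposition $S(t)=C(t)+U(t)$ from the proof of Proposition~\ref{smooth} (with $\|U(t_n)x\|\to 0$ and $C(t_n)x$ lying in a precompact set), whereas you use the abstract consequence of asymptotic smoothness, namely the existence of a compact attracting set $J$ for the closed, bounded, forward-invariant set $\overline{\gamma^+(x)}$, and then argue that $d(S(t_n)x,J)\to 0$ plus compactness of $J$ yields a convergent subsequence. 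Both routes are valid and amount to the same underlying idea; yours is slightly more packaged, the paper's slightly more hands-on.
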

\begin{proof}
We will show $x\in W^s(\left\{E_i\right\}) \Rightarrow \lim_{t\rightarrow\infty}S(t)x=E_i$.  Suppose by way of contradiction, $\exists \epsilon>0, t_n\uparrow\infty$ such that $\left\|S(t_n)x-E_i\right\|\geq\epsilon$.  As shown in the proof of Proposition \ref{smooth}, the semigroup $S(t)$ can be written as $S(t)=C_i(t)+U_i(t)$.  Since $C_i(t)\left\{S(t_n)\right\}$ is pre-compact, there exists a convergent subsequence:  $C_i\left(t_{n_k}\right)\rightarrow x^*$.  Then $S\left(t_{n_k}\right)\rightarrow x^*$ because $\left\|U_i\left(t_{n_k}\right)\right\|\rightarrow 0$.  But then $x^*\in\omega(x)$, but $x^*\neq E_1$, which is a contradiction to the definition of the stable manifold. 
\end{proof}

Second, we consider the alpha limit set corresponding to a complete orbit corresponding to solutions of the model (\ref{a3}).  The following result is utilized in the application of a Lyapunov functional to our system.
\begin{proposition}\label{alpha}
Let $x\in X$ and consider the model (\ref{a3}).  If there is complete orbit $z(t)$ through $x$, then the set $\left\{z(t):t\in\mathbb{R}\right\}$ is pre-compact, and $\alpha_z(x)$ is non-empty, compact, and invariant.  In addition, if $\alpha_z(x)=\left\{E_i\right\}$, then $z(t)\rightarrow E_i$ as $t\rightarrow -\infty$.
\end{proposition}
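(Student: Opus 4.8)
The plan is to reduce the entire statement to \emph{pre-compactness of the orbit}, after which the properties of $\alpha_z(x)$ follow from standard limit-set theory together with an argument patterned on Proposition \ref{omega}. The crucial first observation is a structural simplification of the infected-cell density along a complete orbit. Fixing $t$ and writing $z(t)=S(t-s)z(s)$ for arbitrary $s<t$, I would expand the $T^*_i$-component using the representation in (\ref{a3}):
$$T^*_i(t,a)=\phi_i(a)k_iV_i(t-a)T(t-a)\mathds{1}_{\{t-s>a\}}+\frac{\phi_i(a)}{\phi_i(a-(t-s))}T^*_i(s,a-(t-s))\mathds{1}_{\{a>t-s\}}.$$
For each fixed $a$, letting $s\downarrow-\infty$ kills the second (initial-data) indicator, so along a complete orbit $T^*_i(t,a)=\phi_i(a)k_iV_i(t-a)T(t-a)$ for all $a\ge 0$ and all $t\in\mathbb{R}$: the memory term has been flushed out. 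Before using this I would record that the orbit is bounded, $\sup_t\|z(t)\|=:M<\infty$; this is where I expect to lean on point dissipativity (Proposition \ref{bounded}) and the differential inequality $g'\le A-\alpha g$ for $g=T+\sum_i\int_0^\infty T^*_i\,da+\tfrac{b}{2\kappa}\sum_iV_i$ evaluated along the orbit. I flag backward boundedness as the delicate point, since the forward $\limsup$ bound does not by itself control the orbit as $t\to-\infty$.

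With the structural form and the bound $M$ in hand, pre-compactness of $\{z(t):t\in\mathbb{R}\}$ is the main technical step and the principal obstacle. The projection onto the $\mathbb{R}^{n+2}$-coordinates is a bounded subset of a finite-dimensional space, hence pre-compact. For each $L^1_+$-component I would verify the two conditions of Theorem \ref{compact} for the family $\{T^*_i(t,\cdot):t\in\mathbb{R}\}$, now treating $t$ (rather than the initial datum) as the varying parameter. Condition (ii) is immediate: from $T^*_i(t,a)=\phi_i(a)k_iV_i(t-a)T(t-a)\le k_iM^2e^{-ba}$ we get $\int_h^\infty T^*_i(t,a)\,da\le \tfrac{k_iM^2}{b}e^{-bh}\to 0$ uniformly in $t$. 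Condition (i) is verified exactly as in the proof of Proposition \ref{smooth}: the boundedness of the orbit makes $s\mapsto V_i(s)T(s)$ uniformly Lipschitz (the estimates $|V_i(\tau)-V_i(\tau-h)|\le h(\kappa+\gamma_i)M$ and the analogous bound for $T$ now hold with $M$ in place of $r$ and uniformly in $\tau$), while the factor $1-\phi_i(a+h)/\phi_i(a)$ is handled by dominated convergence as before. Hence each $\{T^*_i(t,\cdot)\}$ is pre-compact in $L^1_+$, and therefore $\{z(t):t\in\mathbb{R}\}$ is pre-compact in $X$.

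Given pre-compactness, the properties of $\alpha_z(x)$ are routine. Non-emptiness: any sequence $z(t_n)$ with $t_n\downarrow-\infty$ has a convergent subsequence whose limit lies in $\alpha_z(x)$. Compactness: $\alpha_z(x)=\bigcap_{T>0}\overline{\{z(t):t\le-T\}}$ is a nested intersection of non-empty compact sets, hence non-empty and compact. Invariance: for $y\in\alpha_z(x)$ with $z(t_n)\to y$, continuity of $S(t)$ and completeness of the orbit give $S(t)y=\lim_n z(t+t_n)\in\alpha_z(x)$, so $S(t)\alpha_z(x)\subset\alpha_z(x)$; for the reverse inclusion I would use pre-compactness, passing to a convergent subsequence of $z(t_n-t)$ to produce $w\in\alpha_z(x)$ with $S(t)w=y$. (Alternatively one simply invokes the standard theorem that the $\alpha$-limit set of a pre-compact complete orbit is non-empty, compact, and invariant.)

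Finally, suppose $\alpha_z(x)=\{E_i\}$ and, for contradiction, that $z(t)\not\to E_i$ as $t\to-\infty$; then there exist $\epsilon>0$ and $t_n\downarrow-\infty$ with $\|z(t_n)-E_i\|\ge\epsilon$. By the pre-compactness just established, a subsequence $z(t_{n_k})$ converges to some $y$, and then $y\in\alpha_z(x)$ with $\|y-E_i\|\ge\epsilon$, so $y\neq E_i$, contradicting $\alpha_z(x)=\{E_i\}$. Hence $z(t)\to E_i$ as $t\to-\infty$. This last step is the backward-time analogue of Proposition \ref{omega}, but it is now immediate because pre-compactness is available directly and no $U_i/C_i$ splitting is required. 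In summary, the real work is concentrated in the first two paragraphs: flushing the memory term out of $T^*_i$ along the complete orbit, securing the backward boundedness, and then re-running the Adams criterion of Theorem \ref{compact} uniformly in $t$.
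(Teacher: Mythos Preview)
Your approach is essentially the paper's: along a complete orbit the initial-data term in $T^*_i$ is flushed out to give $T^*_i(t,a)=\phi_i(a)k_iV_i(t-a)T(t-a)$, and then one re-runs the Adams criterion (Theorem \ref{compact}) uniformly in $t$ to get pre-compactness. The paper verifies condition (ii) exactly as you do and says condition (i) is ``essentially the same as in Proposition \ref{smooth}''; for the remaining conclusions about $\alpha_z(x)$ it simply cites Theorem 2.48 in \cite{SmTh}, whereas you spell these out directly, but the substance is identical.

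You are right to flag backward boundedness as the delicate point, and in fact the paper does not justify it either: it uses a bound $M$ without comment. As stated the proposition is too strong, since unbounded complete orbits do exist in $X$ (e.g.\ take all $V_i\equiv 0$, $T^*_i\equiv 0$, and $T(0)>\overline{T}_0$; then $T'=f(T)$ has $T(t)\to\infty$ as $t\to-\infty$). The differential inequality $g'\le A-\alpha g$ gives only the dichotomy $g(t)\le A/\alpha$ for all $t$ or $g(t)\to\infty$ as $t\to-\infty$, and the second alternative is not vacuous. What saves the paper is that every later use of the proposition (Proposition \ref{lyap} and Lemma \ref{lempersist}) concerns complete orbits already assumed to satisfy $\|z(t)\|\le M$ or to lie in a compact invariant set; so in practice one should read the statement with an added hypothesis that the complete orbit is bounded, and then your argument goes through without the gap.
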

\begin{proof}
Suppose that $z(t)$ is a complete orbit through $x$.  To show that $\left\{z(t):t\in\mathbb{R}\right\}$ is pre-compact, we can modify the arguments in the proof of Propositon \ref{smooth}.  Verifying condition (i) from Theorem \ref{compact} is essentially the same as in Proposition \ref{smooth}.  Next consider condition (ii) from Theorem \ref{compact} applied to the complete orbit: $\lim_{h\rightarrow\infty}\int_h^{\infty}T^*_i(t,a)\,da$ converges to zero uniformly for all $t\in\mathbb{R}$.   
$$\lim_{h\rightarrow\infty}\int_h^{\infty}T^*_i(t,a)=\lim_{h\rightarrow\infty}\int_h^{\infty}k\phi_i(a)V_i(t-a)T(t-a)\,da\leq kM^2 \lim_{h\rightarrow\infty}\int_h^{\infty}e^{-ba}=0$$
Clearly the convergence is uniform $\forall t\in\mathbb{R}$, so $\left\{z(t):t\in\mathbb{R}\right\}$ is pre-compact.  Then, $\alpha_z(x)$ is non-empty and compact.  The remainder of the theorem conclusions follow from Theorem 2.48 in \cite{SmTh}.
\end{proof}

Next, we recall definitions and a result about global attractors.  A set $A\subset X$ \emph{attracts} a set $B\subset X$ if, ${\rm dist}(S(t)B,A)\rightarrow 0$ as $t\rightarrow \infty$, where ${\rm dist}(B,A)$ is the distance from set $B$ to set $A$, i.e.
$${\rm dist}(B,A):=\sup_{y\in B}\inf_{x\in A} \left\|y-x\right\|.$$ 
A set $A$ in $X$ is defined to be an \emph{attractor} if $A$ is non-empty, compact and invariant, and there exists some open neighborhood $U$ of $A$ in $X$ such that $A$ attracts $U$.   A \emph{global attractor} is defined to be an attractor which attracts every point in $X$.  A set $A\subset X$ is said to be a \textit{strong global attractor} if it is a global attractor, and in addition, for any bounded set $B\subset X$, $A$ attracts $B$.  

The following theorem gives a sufficient condition for existence of a strong global attractor.
\begin{theorem}[Hale, \cite{hale1}]\label{attractor}
If $S(t)$ is asymptotically smooth and point dissipative in $X$, and if the forward orbit of bounded sets is bounded in $X$, then there is a strong global attractor $A$ in $X$.
\end{theorem}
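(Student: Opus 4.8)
The plan is to realize the attractor as the $\omega$-limit set of a suitable bounded absorbing set, following the standard dissipative-systems machinery. For a set $B\subset X$ write $\gamma^+(B)=\bigcup_{t\geq 0}S(t)B$ for its forward orbit and
$$\omega(B)=\bigcap_{s\geq 0}\overline{\bigcup_{t\geq s}S(t)B}$$
for its $\omega$-limit set. First I would record the basic facts about $\omega$-limit sets of \emph{sets} (not just points), and then upgrade the given point dissipativity to attraction of all bounded sets by exploiting asymptotic smoothness together with the hypothesis that forward orbits of bounded sets are bounded.

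The first technical ingredient is the fundamental lemma: \emph{if $B$ is a nonempty bounded set whose forward orbit $\gamma^+(B)$ is bounded, then $\omega(B)$ is nonempty, compact, invariant, and attracts $B$}. The trick is to apply asymptotic smoothness not to $B$ itself but to the set $\overline{\gamma^+(B)}$, which is closed, bounded, and forward invariant (by continuity of $S(t)$ and $S(t)\gamma^+(B)\subset\gamma^+(B)$); asymptotic smoothness then supplies a compact set $J\subset\overline{\gamma^+(B)}$ that attracts $\overline{\gamma^+(B)}$, and in particular attracts $B$. From the existence of this compact attracting set the usual argument yields the lemma: given $t_n\uparrow\infty$ and $x_n\in B$, the points $S(t_n)x_n$ eventually lie in a shrinking neighborhood of $J$ and are therefore precompact, so some subsequence converges to a point of $\omega(B)$; this gives nonemptiness, and continuity together with the identity $S(t)S(t_n)x_n=S(t+t_n)x_n$ upgrades this to compactness, invariance, and the asserted attraction.

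With the lemma in hand I would build the attractor. By point dissipativity there is a bounded set $B_0$ attracting every point of $X$; enlarge $B_0$ to an open bounded neighborhood $U_0$, so that the bounded-orbit hypothesis makes $\gamma^+(U_0)$ bounded and $A:=\omega(U_0)$ nonempty, compact, and invariant by the lemma. Since the orbit of any point eventually enters $U_0$ and $A$ attracts $U_0$, the set $A$ attracts every point and is an attractor in the required sense. The decisive and hardest step is to promote this to a \emph{strong} global attractor, that is, to show $A$ attracts every bounded set $B$ and not merely points: for such a $B$ the bounded-orbit hypothesis again gives that $\gamma^+(B)$ is bounded, so $\omega(B)$ exists by the lemma, and the goal is to prove $\omega(B)\subset A$. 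I would argue by contradiction, extracting from a putative sequence that fails to approach $A$ a subsequence $S(t_n)x_n$; the orbit segments spend arbitrarily long intervals inside the absorbing neighborhood of $B_0$, asymptotic smoothness prevents any loss of compactness, and the resulting limit point is forced into the invariant set $A$. I expect exactly this attraction-of-bounded-sets step to be the main obstacle, since point dissipativity alone is insufficient in infinite dimensions and it is precisely here that asymptotic smoothness and the bounded-orbit hypothesis must be combined.
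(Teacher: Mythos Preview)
The paper does not prove this theorem at all; it is quoted from Hale's monograph \cite{hale1} as an external result and used as a black box, so there is no proof in the paper to compare against. Your outline follows the standard dissipative-systems argument that one finds in Hale: pass to $\overline{\gamma^+(B)}$ to invoke asymptotic smoothness, obtain a compact attracting set and hence the basic $\omega$-limit lemma for sets, and then build the attractor from the absorbing set supplied by point dissipativity.

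Your identification of the delicate step is accurate. The passage from ``$A$ attracts points'' to ``$A$ attracts bounded sets'' is exactly where the work lies, and your contradiction sketch is a little thin. The clean way to close it is to use the full invariance of $\omega(B)$: for $x\in\omega(B)$ there is a complete orbit $z(\cdot)$ through $x$ lying in the compact set $\omega(B)$; by point dissipativity each $z(-s)$ has its forward orbit eventually in $U_0$, and compactness of $\omega(B)$ together with continuity of the semiflow lets you choose the entry times uniformly, so that $x=S(s)z(-s)\in\overline{\bigcup_{t\geq s_0}S(t)U_0}$ for arbitrarily large $s_0$, forcing $x\in\omega(U_0)=A$. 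With that refinement your argument is the standard one and is correct.
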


Proposition \ref{smooth} and Proposition \ref{bounded} show that the semigroup $S(t)$ generated by the system (\ref{a3}) is asymptotically smooth and point dissipative on the state space $X$.  We also notice that the argument in the proof of Proposition \ref{bounded} implies that the forward orbit of bounded sets is bounded in $X$.  Thus, by Theorem \ref{attractor}, we arrive at the following proposition.

\begin{proposition}
Let $S(t)$ be the semigroup generated by the system (\ref{a3}) on the state space $X$ defined previously.  There is a strong global attractor $A$ in $X$.
\end{proposition}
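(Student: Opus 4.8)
The plan is to invoke Hale's Theorem \ref{attractor}, all three of whose hypotheses have, in essence, already been assembled in the preceding subsections. It therefore remains only to confirm that each genuinely applies to the semigroup $S(t)$ generated by (\ref{a3}) on $X$.

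First I would record asymptotic smoothness, which is precisely the conclusion of Proposition \ref{smooth}: there the decomposition $S(t)=U(t)+C(t)$ into a uniformly decaying part and a completely continuous part was constructed, and Theorem \ref{asymsmooth} applied. Second, for point dissipativity I would appeal to the final assertion of Proposition \ref{bounded}. Writing $G(t)=T(t)+\sum_{i=1}^n\int_0^\infty T_i^*(t,a)\,da+\frac{b}{2\kappa}\sum_{i=1}^n V_i(t)$, the estimate derived there yields $\limsup_{t\to\infty}G(t)\le A/\alpha$, so every orbit is eventually confined to a fixed bounded subset of $X$; this is exactly point dissipativity.

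The only hypothesis not stated verbatim as an earlier conclusion is that the forward orbit of each bounded set is bounded, and for this I would return to the differential inequality established inside the proof of Proposition \ref{bounded}. That same computation gives $\frac{dG}{dt}\le A-\alpha G$, whence by Gronwall's comparison $G(t)\le G(0)e^{-\alpha t}+\frac{A}{\alpha}(1-e^{-\alpha t})\le G(0)+\frac{A}{\alpha}$ for all $t\ge 0$. Because $G$ is equivalent to the $X$-norm — indeed $\min(1,\tfrac{b}{2\kappa})\|x\|\le G\le\max(1,\tfrac{b}{2\kappa})\|x\|$ for every $x\in X$ — any bounded set $B$ with $\sup_{x\in B}\|x\|\le r$ satisfies $\sup_{x\in B}G(0)\le\max(1,\tfrac{b}{2\kappa})\,r$, and hence $\|S(t)x\|$ is bounded uniformly in $x\in B$ and $t\ge 0$. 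Thus $S(t)B$ remains in a fixed bounded subset of $X$, which is the remaining hypothesis.

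With asymptotic smoothness, point dissipativity, and boundedness of forward orbits of bounded sets all in hand, Theorem \ref{attractor} immediately produces a strong global attractor in $X$. I anticipate no real obstacle at this stage: the delicate analysis — asymptotic smoothness via the $L^1$ compactness criterion together with the a priori differential estimate — was carried out in the earlier propositions, so the present argument is purely an assembly. The single point deserving care is making the equivalence between $\|\cdot\|$ and the auxiliary functional $G$ explicit, so that the uniform bound on $G$ along orbits transfers faithfully to a uniform bound in the $X$-norm over all bounded sets of initial data.
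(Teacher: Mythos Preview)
Your proposal is correct and follows essentially the same approach as the paper: invoke Propositions \ref{bounded} and \ref{smooth} for point dissipativity and asymptotic smoothness, observe that the differential inequality from the proof of Proposition \ref{bounded} also yields boundedness of forward orbits of bounded sets, and then apply Theorem \ref{attractor}. The paper compresses this into a single sentence preceding the proposition, whereas you spell out the Gronwall step and the equivalence of $G$ with $\|\cdot\|$ explicitly, but the logic is identical.
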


\section{Reproduction Numbers and Extinction Condition}\label{s4}
\subsection{Reproduction numbers and equilibria}\label{s41}
There exists a unique disease-free equilibrium, $E_0$, for the system (\ref{a3}) with $E_0=(\overline{T}_0,0,....,0)$. \\
For $i=1,,.n$, define
$$N_i=\int_0^{\infty} p_i(a)\phi_i(a)\,da.$$
 $N_i$ is the average number of virions produced by an infected cell that is infected with strain $i$.  Define the basic reproduction number for strain $i$ as
\begin{align*}
\mathcal{R}_i=\frac{N_ik_i\overline{T}_0}{\gamma_i}\tag{5}\label{rn}
\end{align*}
Thus, $\mathcal R_i$ is intuitively the average amount of secondary infected cells induced by a single infected cell for strain $i$ in a population of target cells at carrying capacity $\overline{T}_0$.

Now we determine non-trivial equilbria.  First notice from (\ref{a3}) that a infected cell equilibrium density, $\overline{T}^*_i(a)$, satisfies $\overline{T}^*_i(a)=k_i\overline{V}_i\overline{T}\phi_i(a)$ where $\overline{V}_i$, $\overline{T}$ are equilibrium values of the components $V_i$ and $T$ respectively.  By setting the ODEs in (\ref{a3}) to zero, we obtain $\overline{T}=\frac{\overline{T}_0}{\mathcal{R}_i}$ if $\overline{V}_i\neq 0$.  It is then readily observed that for each strain,  there exists the single strain equilibrium $E_i=(\overline{T}_i,0,..,0,\overline{V}_i,0,...,0,\overline{T}^*_i(a),0,..,0)$, where
$$\overline{T}_i=\frac{\overline{T}_0}{\mathcal{R}_i}, \quad \overline{V}_i=\frac{f(\overline{T}_i)}{k\overline{T}_i}, \quad \overline{T}^*_i(a)=k_i\overline{V}_i\overline{T}_i\phi_i(a).$$
Here $E_i$ is biologically relevant, i.e. $\overline{V}_i>0, \int_0^{\infty} \overline{T}^*_i(a) \,da>0$, if and only if $\mathcal{R}_i>1$. 

If $\mathcal R_i\neq \mathcal R_j$ for all $i\neq j$, then there are no coexistence equilibria.  However, when \\ $\mathcal R_{i_1}=\mathcal R_{i_2}=\cdots=\mathcal R_{i_{\ell}}>1$, there exists a $\ell-1$ dimensional hyperplane of coexistence equilibria described by the equation: $$0=f\left(\overline{T}_{i_1}\right)-\overline{T}_{i_1}\sum_{j=1}^{\ell} k_{i_j}\overline{V}_{i_j}.$$  

\subsection{Extinction of strain $i$ when $\mathcal{R}_i<1$}
The following theorem establishes extinction of virus strain $i$ if $\mathcal{R}_i<1$.
\begin{proposition}
If $\mathcal{R}_i<1$, then $V_i(t)\rightarrow 0$ and $\int_0^{\infty}T^*_i(t,a)\,da\rightarrow 0$ as $t\rightarrow\infty$.
\end{proposition}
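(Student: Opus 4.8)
The plan is to collapse the strain-$i$ dynamics onto a single Lyapunov-type functional whose derivative reduces, up to asymptotically negligible terms, to the scalar inequality of a subcritical birth--death process. First I would record the target-cell bound. Since $T\ge 0$, $V_i\ge 0$, the $T$-equation gives $\frac{dT}{dt}\le f(T)$, and Condition~(\ref{2.1.0}) makes $\overline{T}_0$ the unique positive zero of $f$, globally attracting for $u'=f(u)$ from nonnegative data. A comparison argument then yields $\limsup_{t\to\infty}T(t)\le\overline{T}_0$, so for every $\epsilon>0$ there is a $t_\epsilon$ with $T(t)<\overline{T}_0+\epsilon$ for all $t\ge t_\epsilon$.

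Next I would introduce the reproductive-value weight
\[
W_i(a)=\frac{1}{\phi_i(a)}\int_a^\infty p_i(s)\phi_i(s)\,ds,
\]
which satisfies $W_i(0)=N_i$, the pointwise identity $W_i'(a)=\delta_i(a)W_i(a)-p_i(a)$, and the uniform bound $W_i(a)\le\kappa/b$ (because $\phi_i(s)/\phi_i(a)\le e^{-b(s-a)}$). I would then study
\[
L_i(t)=V_i(t)+\int_0^\infty W_i(a)\,T_i^*(t,a)\,da .
\]
Differentiating, using $T_i^*(t,0)=k_iV_i(t)T(t)$ together with the balance law for $T_i^*$ from (\ref{a1}) and integrating by parts, the two $\int_0^\infty p_i(a)T_i^*\,da$ contributions cancel and I expect
\[
\frac{dL_i}{dt}=V_i(t)\left(N_ik_iT(t)-\gamma_i\right)=\gamma_i V_i(t)\left(\mathcal{R}_i\,\frac{T(t)}{\overline{T}_0}-1\right),
\]
since $N_ik_i/\gamma_i=\mathcal{R}_i/\overline{T}_0$ by (\ref{rn}). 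As $\mathcal{R}_i<1$, I would fix $\epsilon$ small enough that $\mathcal{R}_i(1+\epsilon/\overline{T}_0)<1$, so that for $t\ge t_\epsilon$ one has $\frac{dL_i}{dt}\le-\eta\,V_i(t)$ with $\eta=\gamma_i\bigl(1-\mathcal{R}_i(1+\epsilon/\overline{T}_0)\bigr)>0$.

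Since $L_i\ge 0$ and is eventually nonincreasing, $L_i(t)$ converges, and integrating the inequality forces $\int_{t_\epsilon}^\infty V_i(t)\,dt<\infty$. Because $V_i$ has a bounded derivative (hence is uniformly continuous), Barbalat's lemma gives $V_i(t)\to 0$. For the infected-cell conclusion I would return to $\frac{d}{dt}\int_0^\infty T_i^*\,da=k_iV_iT-\int_0^\infty\delta_i T_i^*\,da-\lim_{a\to\infty}T_i^*\le k_iV_iT-b\int_0^\infty T_i^*\,da$; since $k_iV_iT\to 0$ and $b>0$, a standard comparison argument yields $\limsup_{t\to\infty}\int_0^\infty T_i^*(t,a)\,da\le 0$, and hence the integral tends to $0$. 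The main obstacle I anticipate is rigorously justifying the differentiation of $\int_0^\infty W_i(a)T_i^*(t,a)\,da$ and the accompanying integration by parts, since $T_i^*(t,\cdot)$ lies only in $L^1_+$ and need not be differentiable in $a$; in particular one must verify that the boundary term $W_i(a)T_i^*(t,a)$ vanishes as $a\to\infty$. I would bypass the weak-differentiability issue by substituting the explicit characteristic representation of $T_i^*$ from (\ref{a3}) and differentiating the resulting convolution directly, with the initial-data piece contributing only a term decaying like $e^{-bt}$, exploiting the same smoothing mechanism used in the proof of Proposition~\ref{bounded}.
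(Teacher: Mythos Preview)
Your argument is correct and takes a genuinely different route from the paper's. The paper proceeds by a direct fluctuation method: writing the $\dot V_i$ equation via the characteristic representation, bounding the convolution term by $k_i(\overline T_0+\epsilon)(V_i^\infty+\epsilon)N_i$ after resetting time (using the semigroup property) so that $T(t)\le\overline T_0+\epsilon$, $V_i(t)\le V_i^\infty+\epsilon$, and the initial-data tail is below $\epsilon$. This yields $V_i^\infty\le\mathcal R_i(1+O(\epsilon))V_i^\infty+O(\epsilon)$, forcing $V_i^\infty=0$, and then a second pass gives the $L^1$ decay of $T_i^*$.

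Your approach instead introduces the reproductive-value weight $W_i(a)=\phi_i(a)^{-1}\int_a^\infty p_i\phi_i$ and the functional $L_i=V_i+\int W_iT_i^*$, obtaining the clean exact identity $\dot L_i=V_i(N_ik_iT-\gamma_i)$; eventual negativity then gives $V_i\in L^1(t_\epsilon,\infty)$ and Barbalat finishes. This is arguably more conceptual: the weight $W_i$ is precisely the expected remaining virion output of an age-$a$ infected cell, and the identity makes transparent why $\mathcal R_i<1$ is the threshold. The paper's argument, by contrast, avoids any functional and any appeal to Barbalat, staying entirely at the level of $\limsup$ comparisons; it is more elementary but less structurally revealing. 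Your anticipated technical obstacle---justifying $\frac{d}{dt}\int W_iT_i^*$ when $T_i^*(t,\cdot)$ is merely $L^1$---is real, and your proposed fix (substituting the characteristic formula and differentiating the resulting convolution, with the initial-data piece bounded by $(\kappa/b)e^{-bt}\|T_i^*(0,\cdot)\|_{L^1}$) is exactly right and mirrors how the paper handles similar derivatives elsewhere.
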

\begin{proof}
Let $T^{\infty}:=\limsup_{t\rightarrow \infty}T(t)$ and $V_i^{\infty}:=\limsup_{t\rightarrow \infty}V_i(t)$.  By assumption, $f(T)<0$ if $T>\overline{T}_0$; thus,  $\frac{dT}{dt}\leq f(T)<0$ when $T>\overline{T}_0$.  Then the previous statement, along with the smoothness of $f(T)$, imply that $T^{\infty}\leq \overline{T}_0$.  For all $\epsilon>0$, there exists $\tau>0$ such that $\forall t\geq \tau, \ \ \ T(t)\leq \overline{T}_0+\epsilon, \ V_i(t)\leq V_i^{\infty}+\epsilon$.  Also note that
\begin{align*}
 \int_t^{\infty} \! p_i(a) \frac{\phi_i(a)}{\phi_i(a-t)}T^*_i(0,a-t) \,da &\leq \kappa e^{-bt}\int_0^{\infty}\!T^*_i(0,a) \\*
& \ \ \rightarrow 0 \ \text{ as } t\rightarrow \infty
\end{align*}
Hence, we can pick the $\tau>0$, such that $\int_t^{\infty} \! p_i(a)  \frac{\phi_i(a)}{\phi_i(a-t)}T^*_i(0,a-t) \,da < \epsilon$.
By the semigroup property, we can without loss of generality assume $\tau=0$.  Then
\begin{align*}
\dot{V_i} &= \int_0^t\! k_iV_i(t-a)T(t-a)p_i(a)\phi_i(a)\,da + \int_t^{\infty} \! p_i(a) \frac{\phi_i(a)}{\phi_i(a-t)}T^*_i(0,a-t) \,da -\gamma_i V_i \\
&\leq k_i(\overline{T}_0+\epsilon)(V_i^{\infty}+\epsilon) \int_0^t\! p_i(a)\phi_i(a)\,da + \epsilon -\gamma_i V_i \\
& =  \frac{(\overline{T}_0+\epsilon)(V_i^{\infty}+\epsilon)\gamma_i \mathcal{R}_i}{\overline{T}_0} + \epsilon -\gamma_i V_i.
\end{align*}
Hence,
\begin{align*}
V_i^{\infty}&\leq \frac{(\overline{T}_0+\epsilon)(V_i^{\infty}+\epsilon) \mathcal{R}_i}{\overline{T}_0} +   \frac{\epsilon}{\gamma_i}.
\end{align*}
Because $\mathcal{R}_i<1$, from the above inequality, we find that for $\epsilon>0$ sufficiently small, if $V_i^{\infty}>0$, then $V_i^{\infty}<V_i^{\infty}$.  This is a contradiction which forces $V_i^{\infty}=0$.  Then with a similar reset of time argument using the semigroup property, we find that
\begin{align*}
\limsup_{t\rightarrow \infty}\int_0^{\infty} \! T^*_i(t,a) \,da \leq \limsup_{t\rightarrow\infty} \left(\int_0^t\! k_i\epsilon (\overline{T}_0+\epsilon) \phi_i(a)\,da + \int_t^{\infty} \!   \frac{\phi_i(a)}{\phi_i(a-t)}T^*_i(0,a-t) \,da\right) \\*
\leq \epsilon k_i(\overline{T}_0+\epsilon) \int_0^{\infty}\!e^{-ba}\,da + \epsilon \\*
\ \rightarrow 0 \text{  as  } \epsilon\rightarrow 0
\end{align*}
Hence, $V_i(t)\rightarrow 0$ in $\mathbb{R}_+$ and $T^*_i(t,a)\rightarrow 0$ in $L^1_+(0,\infty)$ as $t\rightarrow\infty$.  
\end{proof}

\section{Competitive Exclusion}\label{s5}
In this section, we will prove that $E_1$, the equilibrium corresponding to the strain with the maximum reproduction number, is globally attracting, i.e. the competitive exclusion principle holds.   In order to prove the result we will use strong mathematical induction in order to establish uniform persistence and apply a Lyapunov functional argument, but we need to establish several results first.  

We will consider the case where the viral strains all have different reproduction numbers which are greater than 1.   Note that all of the following results hold for the case where some viral strains have reproduction number less than unity, but in order to make the notation simpler, we assume that $\min_i \mathcal{R}_i>1$.

Without loss of generality, suppose that
\begin{align*}
\mathcal{R}_1>\mathcal{R}_2>.....>\mathcal{R}_n>1\label{repro}\tag{8}
\end{align*}
Another way of writing the above condition is the following:
\begin{align*}
0<\overline{T}_1<\overline{T}_2<.....<\overline{T}_n<\overline{T}_0 \label{repro1}\tag{9}
\end{align*}
Then, as shown in Section \ref{s41} there are the $n$ single strain equilibria: $E_i$,  $i=1,...,n$ and no coexistence equilibria.  In the case where some reproduction numbers are equal, the rigorous analysis is more difficult, but we can conjecture the dynamics. If $\mathcal R_1$, the largest reproduction number, is distinct but $\mathcal R_i=\mathcal R_{i+1}$ for some $ i>1$, we expect competitive exclusion, i.e. global convergence to $E_1$, as in the ODE case \cite{deleenheerandpilyugin}.  However, the subsequent induction argument utilized for the global analysis of the model does not apply to this case.  When $\mathcal R_1=\mathcal R_2=\cdots=\mathcal R_{\ell}$, the situation is more complex.  In this case, we conjecture that the global attractor is the $\ell -1 $ dimensional hyperplane of coexistence equilibria. 

Suppose also that $f(T)$ satisfies the sector condition for all $\overline{T}_i, \ \ i=1,...,n$:
\begin{align*}
    \left(f(T)-f(\overline{T}_i)\right) \left(1-\frac{\overline{T}_i}{T}\right) \leq 0. \tag{10}\label{(C)}
\end{align*}
The sector condition was introduced by De Leenheer and Pilyugin, in order to prove global stability of the infection equilibrium in the single-strain and multi-strain ODE standard virus model \cite{deleenheerandpilyugin}.  Note that this condition is satisfied when $f(T)$ is a decreasing function, independently of the value of $\overline{T}_i$, for example $f(T)=f_1(T)=s-c T$.  In the case of $f(T)=f_2(T)=s-cT+rT(1-T/T_{max})$, Condition (\ref{(C)}) is satisfied when $s\geq f(\overline{T}_i)$.

\subsection{Lyapunov functional}
In order to analyze the global dynamics via a Lyapunov functional, we consider complete orbits for our system.
Let $x\in X$.  Suppose that we can find a complete orbit $z(t)$ through $x$.  Suppose that $z(t)=\left((T(t),V_1(t),.....,V_n(t),T^*_1(t,a),.....,T^*_n(t,a)\right)$, where $t\in\mathbb{R}$.  Then $z(t)$ must satisfy the following system for all $t\in\mathbb{R}$:
\begin{align*}
\frac{dT(t)}{dt} &= f(T(t))- \sum_{i=1}^nk_iV_i(t)T(t), \\
 \frac{dV_i(t)}{dt} &= \int_0^{\infty} \!p_i(a)T_i^*(t,a)\,da- \gamma V_i(t), \\
T_i^*(t,a) &= \phi_i(a) k_iV_i(t-a)T(t-a) \\
T_i^*(t,0)&=k_iV_i(t)T(t)
 \end{align*}

In the proof of the following proposition, we find a Lyapunov functional for a complete orbit $z(t)$, which is well-defined and bounded when $z(t)$ satisfies certain criteria, namely $z(t)$ is bounded from above and bounded away from an appropriate boundary set.  Under these criteria, a LaSalle invariance type argument can be invoked to show that the complete orbit $z(t)$ must be in fact be the equilibrium $E_1$.

\begin{proposition}\label{lyap}
Let $0<\epsilon<M<\infty$ be arbitrary.  Suppose that $x\in X$ and there exists a complete orbit $z(t)$ through $x$ such that $\left\| z(t)\right\|\leq M , \ V_1(t)\geq \epsilon, \ T(t)\geq\epsilon \ \forall t\in\mathbb{R}$.  Then, $x=E_1$.
\end{proposition}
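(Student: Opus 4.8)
The plan is to construct a Lyapunov functional centered at $E_1$ and to run a LaSalle-type argument on the complete orbit $z(t)$, using boundedness ($\|z(t)\|\le M$) to keep the functional finite and the lower bounds $T,V_1\ge\epsilon$ to control its logarithmic terms. Write $g(y)=y-1-\ln y$, so that $g\ge 0$ with equality iff $y=1$, and introduce the weights
\[
q_i(a)=\frac{1}{\phi_i(a)}\int_a^\infty p_i(\sigma)\phi_i(\sigma)\,d\sigma,\qquad q_i(0)=N_i,\qquad q_i'(a)=\delta_i(a)q_i(a)-p_i(a).
\]
I would then set
\[
L(t)=N_1\Big(T-\overline T_1\ln T\Big)+\Big(V_1-\overline V_1\ln V_1\Big)+\int_0^\infty q_1(a)\,\overline T^*_1(a)\,g\!\Big(\tfrac{T^*_1(t,a)}{\overline T^*_1(a)}\Big)\,da+\sum_{i=2}^n \frac{N_1 k_i\overline T_1}{\gamma_i}\Big(V_i+\int_0^\infty q_i(a)T^*_i(t,a)\,da\Big),
\]
up to an additive constant chosen so that $L=N_1\overline T_1\,g(T/\overline T_1)+\overline V_1\,g(V_1/\overline V_1)+(\text{nonnegative terms})\ge 0$ with $L(E_1)=0$. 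Since on the complete orbit $T^*_1(t,a)=\phi_1(a)k_1V_1(t-a)T(t-a)$ and $\overline T^*_1(a)=k_1\overline V_1\overline T_1\phi_1(a)$, the hypotheses force the ratio $T^*_1(t,a)/\overline T^*_1(a)=V_1(t-a)T(t-a)/(\overline V_1\overline T_1)$ into a fixed compact subinterval of $(0,\infty)$, and together with $\int_0^\infty q_1\overline T^*_1\,da<\infty$ (which holds because $\phi_1\le e^{-ba}$ and $p_1\le\kappa$) this makes $L(t)$ finite and the logarithms of $T,V_1,T^*_1$ bounded for all $t\in\mathbb R$.

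Next I would differentiate $L$ along $z(t)$. For the target-cell contribution I insert $f(\overline T_1)=k_1\overline V_1\overline T_1$ to split off the term $N_1(1-\overline T_1/T)(f(T)-f(\overline T_1))$, which is $\le 0$ by the sector condition (\ref{(C)}). The delicate step is the age-integral: on the complete orbit $g(T^*_1(t,a)/\overline T^*_1(a))$ is a function of $t-a$ alone, so it satisfies $\partial_t g=-\partial_a g$; integrating by parts against $q_1\overline T^*_1$ (whose $a$-derivative is $-k_1\overline V_1\overline T_1\,p_1\phi_1$) and using the boundary value $T^*_1(t,0)=k_1V_1T$ with $q_1(0)=N_1$ produces a boundary term $N_1 k_1\overline V_1\overline T_1\,g\big(V_1T/(\overline V_1\overline T_1)\big)$ and a bulk term. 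After the standard Volterra bookkeeping (as in the single-strain analysis of \cite{bropil}), the coefficients $N_1$ on the $T$-block and $1$ on the $V_1$-block are exactly what cancels the $V_1T$ and $V_1$ cross terms and collapses the history-dependent logarithms, leaving for the focal strain
\[
-\,k_1\overline V_1\overline T_1\int_0^\infty p_1(a)\phi_1(a)\Big[g\big(\tfrac{\overline T_1}{T}\big)+g\big(\tfrac{\overline V_1 T^*_1(t,a)}{V_1\overline T^*_1(a)}\big)\Big]\,da\ \le\ 0 .
\]

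For each non-focal strain $i\ge2$ the identity $\frac{d}{dt}\big(V_i+\int_0^\infty q_iT^*_i\,da\big)=V_i(N_ik_iT-\gamma_i)$ (the $\int p_iT^*_i$ pieces cancel) combines with the cross term $N_1k_iV_i(\overline T_1-T)$ coming from $\dot T$; the weight $c_i=N_1k_i\overline T_1/\gamma_i$ is chosen precisely so the $T$-independent part vanishes, reducing the strain-$i$ contribution to $N_1k_iV_iT(\overline T_1/\overline T_i-1)$. This is $\le 0$ exactly because $\overline T_1<\overline T_i$, i.e. $\mathcal R_1>\mathcal R_i$, which is where the maximality of $\mathcal R_1$ and the ordering (\ref{repro1}) enter. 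Hence $\dot L\le 0$ along $z(t)$, and $\dot L=0$ forces $T=\overline T_1$, $V_i=0$ for $i\ge2$, and $T^*_1(t,a)=(V_1/\overline V_1)\overline T^*_1(a)$; feeding these back into the system shows the only complete orbit on which $\dot L\equiv 0$ is the constant orbit $E_1$.

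Finally I would close with LaSalle via the alpha-limit set. By Proposition \ref{alpha} the orbit is precompact and $\alpha_z(x)$ is nonempty, compact and invariant; since $L$ is nonincreasing and bounded it is constant on $\alpha_z(x)$, so by invariance $\dot L\equiv 0$ there, whence $\alpha_z(x)=\{E_1\}$ and, again by Proposition \ref{alpha}, $z(t)\to E_1$ as $t\to-\infty$. Therefore $L(z(t))\to L(E_1)=0$ as $t\to-\infty$; but $L$ is nonincreasing in $t$ and $L\ge 0$, so $L(z(t))\equiv 0$, which forces $z(t)\equiv E_1$ and in particular $x=z(0)=E_1$. The main obstacle is the age-integral differentiation: justifying $\partial_t g=-\partial_a g$ and the integration by parts on the whole line, and checking that the choice of coefficients $N_1,1,c_i$ makes every history-dependent term collapse into the nonnegative $g$-combination; the non-focal sign, though conceptually the heart of competitive exclusion, falls out cleanly once $\overline T_1<\overline T_i$ is used.
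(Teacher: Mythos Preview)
Your proposal is correct and follows essentially the same approach as the paper: a Volterra-type Lyapunov functional built from $g(y)=y-1-\ln y$ on the focal strain plus linear terms on the non-focal strains, differentiated along the complete orbit and closed with a LaSalle argument on $\alpha_z(x)$ via Proposition~\ref{alpha}. The only cosmetic differences are that the paper first passes to the transformed density $u_i=T^*_i/\phi_i$ (your weight $q_i(a)=\alpha_i(a)/\phi_i(a)$ achieves the same effect implicitly) and chooses a different non-focal coefficient ($1/(N_i\overline u_1)$ rather than your $N_1k_i\overline T_1/\gamma_i$), which leads to the $T$-free dissipation term $k_iV_i(\overline T_1-\overline T_i)$ instead of your $N_1k_iV_iT(\overline T_1/\overline T_i-1)$; both are nonpositive exactly because $\overline T_1<\overline T_i$, so the argument goes through either way.
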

 
\begin{proof}
We introduce a transformation which will make certain calculations simpler.  For $x\in X$, define the transformation, $h(x)$ as:
$$h(x)=\left(T,V_1,....,V_n,\frac{1}{\phi_1(a)}T^*_1(a),....,\frac{1}{\phi_n(a)}T^*_n(a)\right) $$
Let $x\in X$ with complete orbit $z(t)$ through $x$.  Then $h(z(t))$ satisfies:
\begin{align*}
\frac{dT(t)}{dt} &= f(T(t))- \sum_{i=1}^nk_iV_i(t)T(t), \\
 \frac{dV_i(t)}{dt} &= \int_0^{\infty} \!q_i(a)u_i(t,a)\,da- \gamma V_i(t), \tag{11} \\
u_i(t,a) &=  k_iV_i(t-a)T(t-a) \\
u_i(t,0)&=k_iV_i(t)T(t)
 \end{align*}
where 
\begin{align*}
u_i(t,a)&=\frac{1}{\phi_i(a)}T_i^*(t,a) \ \text{ and} \\ 
q_i(a)&=\phi_i(a)p_i(a) \tag{12}
\end{align*}
 Also define the transformed components of the equilibria, $E_i$ by 
 $$\overline{u}_i= \frac{1}{\phi_i(a)}\overline{T}^*_i(a)=k\overline{V}_i\overline{T}_i .$$
  Notice that $\overline{u}_i^i$ is a constant function, i.e. does not vary with $a$.   
Define the following function on $(0,\infty)$:
\begin{align*}
g(x)=x-1-\log(x) \tag{13}
\end{align*}
  Note that $g(x)$ is non-negative and continuous on $(0,\infty)$ with a unique root at $x=1$.  Let 
  \begin{align*}
  \alpha_i(a)=\int_a^{\infty}\!q_i(\ell) \,d\ell .\tag{13}
  \end{align*}
By the Lebesgue Differentiation Theorem, $\alpha_i(a)$ is differentiable with
  \begin{align*}
  \alpha_i'(a)=-q_i(a).\tag{14}
  \end{align*}
Define the following ``candidate'' Lyapunov functional expression on $h(X)$:
$$W:(T,V_1,...,V_n,u_1(a),...,u_n(a))\mapsto W_T+W_{V_1}+W_{u_1}+W_{\partial}$$ where
\begin{align*}
&W_T=\frac{\overline{T}_1}{\overline{u}_1}g\left(\frac{T}{\overline{T}_1}\right),\quad W_{V_1}=\frac{k_1\overline{T}_1\overline{V}_1}{\gamma_1 \overline{u}_1}g\left(\frac{V_1}{\overline{V}_1}\right), \tag{15} \\
 & W_{u_1}=\frac{k_1\overline{T}_1}{\gamma_1}\int_0^{\infty}\!\alpha_1(a)g\left(\frac{u_1(a)}{\overline{u}_1}\right)\,da, \quad W_{\partial}=\frac{1}{\overline{u}_1} \sum_{i=2}^n{ \frac{1}{\alpha_i(0)}\left(\int_0^{\infty}\!\alpha_i(a)u_i(a)\,da+V_i\right)}
\end{align*}
Note that the composition $W\circ h$ is certainly not well-defined on all of $X$.   However, we simply want it to be well defined and bounded for a complete orbit $z(t)$ that is bounded from above and away from the appropriate boundary set.

Suppose that $x\in X$ and there exists a complete orbit $z(t)$ through $x$ such that $\left\| z(t)\right\|\leq M , \ V_1(t)\geq \epsilon, \ T(t)\geq\epsilon \ \forall t\in\mathbb{R}$.  Then, $k_1\epsilon^2 \leq u_1(t,a) \leq k_1M^2$ for all $a\in[0,\infty)$ and $t\in\mathbb{R}$.  Hence, $\exists M_1>0$ such that $$0\leq g\left(\frac{u_1(t,a)}{\overline{u}_1}\right)\leq M_1$$
Then,
\begin{align*}
\int_0^{\infty}\!\alpha_1(a)g\left(\frac{u_1(t,a)}{\overline{u}_1}\right)\,da &\leq M_1 \int_0^{\infty}\int_a^{\infty}\! \phi_1(\ell)p_1(\ell)\,d\ell \,da \\
&\leq \kappa M_1 \int_0^{\infty}\int_a^{\infty}\! e^{-b\ell}\,d\ell \,da \\
&= \frac{\kappa M_1}{b^2} <\infty
\end{align*}
Also,
\begin{align*}
\int_0^{\infty}\alpha_i(a)u_i(t,a)\,da\leq M^2\int_0^{\infty}\alpha_i(a)\,da\leq \frac{\kappa M^2}{b^2} 
\end{align*}
Therefore it follows that $W=W_T+W_{u_1}+W_{V_1}+W_{\partial}$ is well-defined and bounded on the transformed complete orbit $z(t)$.  For convenience, $W_T(T(t))$ is denoted by $W_T$, and likewise for the other components.  We also note that $\int_0^{\infty}\!\alpha_1(a)g\left(u_1(t,a)\right)\,da$ and $\int_0^{\infty}\!\alpha_i(a)u_i(t,a)\,da$ are differentiable in $t$ since they are convolutions which we can differentiate, as we will see below.  Hence, $W(h(z(t)))$ is differentiable in $t$.
\begin{align*}
\frac{d}{dt}W_{u_1}&=\frac{d}{dt} \ \frac{k_1\overline{T}_1}{\gamma_1}\int_0^{\infty}\!\alpha_1(a)g\left(\frac{u_1(t,a)}{\overline{u}_1}\right) \,da \\
&=  \frac{k_1\overline{T}_1}{\gamma_1} \ \frac{d}{dt}\int_0^{\infty}\!\alpha_1(a)g\left(\frac{u_1(t-a,0)}{\overline{u}_1}\right) \,da \\
&=  \frac{k_1\overline{T}_1}{\gamma_1} \ \frac{d}{dt}\int_{-\infty}^t\!\alpha_1(t-s)g\left(\frac{u_1(s,0)}{\overline{u}_1}\right) \,ds \\
&=  \frac{k_1\overline{T}_1}{\gamma_1} \left[ \alpha_1(0)g\left(\frac{u_1(t,0)}{\overline{u}_1}\right)+\int_{-\infty}^t\!\alpha_1\p(t-s)g\left(\frac{u_1(s,0)}{\overline{u}_1}\right) \,ds \right]\\
&=  \frac{k_1\overline{T}_1}{\gamma_1} \left[ \alpha_1(0)g\left(\frac{u_1(t,0)}{\overline{u}_1}\right)+\int_0^{\infty}\!\alpha_1\p(a)g\left(\frac{u_1(t,a)}{\overline{u}_1}\right) \,da \right]\\
&=  \frac{k_1\overline{T}_1}{\gamma_1} \left[ \int_0^{\infty}\!q_1(a) \left(\frac{u_1(t,0)}{\overline{u}_1}-1-\log \frac{u_1(t,0)}{\overline{u}_1}-\frac{u_1(t,a)}{\overline{u}_1}+1+\log \frac{u_1(t,a)}{\overline{u}_1} \right) \,da \right] \\
&=  \frac{k_1\overline{T}_1}{\gamma_1} \left[ \int_0^{\infty}\!q_1(a) \left(\frac{u_1(t,0)}{\overline{u}_1}-\frac{u_1(t,a)}{\overline{u}_1}+\log \frac{u_1(t,a)}{u_1(t,0)} \right) \,da \right] \\
\end{align*}
We use the following equilibrium conditions in the next calculation:
$$ f(\overline{T}_1)=k_1\overline{T}_1\overline{V}_1=\overline{u}_1, \quad \frac{\overline{V}_1}{V_1}=\frac{T\overline{u}_1}{\overline{T}_1u_1(t,0)}, \quad \frac{\gamma_1}{k_1\overline{T}_1}=\int_0^{\infty}\! q_1(a)\,da .$$
\begin{align*}
\frac{d}{dt}&\left(W_T+W_{V_1}+W_{\partial}\right) \\*
&=\frac{d}{dt}\left[\frac{\overline{T}_1}{\overline{u}_1}g\left(\frac{T}{\overline{T}_1}\right)+\frac{k_1\overline{T}_1\overline{V}_1}{\overline{u}_1\gamma_1}g\left(\frac{V_1}{\overline{V}_1}\right)+\frac{1}{\bar{u_1}^1} \sum_{i=2}^n{ \frac{1}{\alpha_i(0)}\left(\int_0^{\infty}\!\alpha_i(a)u_i(t,a)\,da+V_i\right)}\right] \\
&=\frac{1}{\overline{u}_1}\left[\overline{T}_1\cdot g\p\left(\frac{T}{\overline{T}_1}\right)\cdot \frac{\dot{T}}{\overline{T}_1} +\frac{k_1\overline{T}_1\overline{V}_1}{\gamma_1}g\p\left(\frac{V_1}{\overline{V}_1}\right) \frac{\dot{V_1}}{\overline{V}_1}+  \sum_{i=2}^n{ \frac{1}{\alpha_i(0)}\left(\frac{d}{dt}\int_{-\infty}^t\!\alpha_i(t-s)u_i(s,0)\,ds+\dot{V}_i\right)}\right]\\
&=  \frac{1}{\overline{u}_1}\left[ \left(1-\frac{\overline{T}_1}{T}\right)  \left(f(T)-k_1V_1T-\sum_{i=2}^n{k_iV_iT}\right) + \frac{k_1\overline{T}_1}{\gamma_1}\left(1-\frac{\overline{V}_1}{V_1}\right) \left(\int_0^{\infty}\!q_1(a)u_1(t,a)\,da -\gamma_1 V_1\right) \right.  \\
& \qquad +\left.  \sum_{i=2}^n{ \frac{1}{\alpha_i(0)}\left(\alpha_i(0)u_i(t,0)+\int_{-\infty}^t\!\alpha_i'(t-s)u_i(s,0)\,ds+\int_0^{\infty}\!\alpha_i(a)u_i(t,a)\,da-\gamma_iV_i\right)}\right] \\
&=  \frac{1}{\overline{u}_1}\left[ \left(f(T)-f(\overline{T}_1)\right) \left(1-\frac{\overline{T}_1}{T}\right)  +f(\overline{T}_1)\left(1-\frac{\overline{T}_1}{T}\right)+ \left(1-\frac{\overline{T}_1}{T}\right)  \left(-\sum_{i=2}^n{k_iV_iT}\right)-k_1V_1T +k_1V_1\overline{T}_1 \right. \\
& \qquad \left. + \frac{k_1\overline{T}_1}{\gamma_1} \int_0^{\infty}\!q_1(a)u_1(t,a)\left(1-\frac{\overline{V}_1}{V_1}\right)\,da -k_1V_1\overline{T}_1 + \frac{k_1\overline{T}_1}{\gamma_1} \gamma_1\overline{V}_1 \right.  \\
& \qquad  + \left. \sum_{i=2}^n{ \frac{1}{\alpha_i(0)}\left(\alpha_i(0)k_iV_iT-\int_0^{\infty}\!\alpha_i(a)u_i(t,a)\,da+\int_0^{\infty}\!\alpha_i(a)u_i(t,a)\,da-\alpha_i(0)k_i\overline{T}_iV_i\right)}\right] \\
&=\frac{1}{\overline{u}_1} \left(f(T)-f(\overline{T}_1)\right) \left(1-\frac{\overline{T}_1}{T}\right)  + \frac{1}{\overline{u}_1} \frac{k_1\overline{T}_1}{\gamma_1} \left[ \frac{\gamma_1}{k_1\overline{T}_1}\left(f(\overline{T}_1)-f(\overline{T}_1)\frac{\overline{T}_1}{T}-k_1V_1T\right) \right] \\
&\qquad + \frac{1}{\overline{u}_1} \sum_{i=2}^n{ \left[\left(\frac{\overline{T}_1}{T}-1\right)  k_iV_iT +k_iV_iT-k_i\overline{T}_iV_i\right]} + \frac{1}{\overline{u}_1} \frac{k_1\overline{T}_1}{\gamma_1}\left[\int_0^{\infty}\!q_1(a)u_1(t,a)\left(1-\frac{\overline{V}_1}{V_1}\right)\,da + \frac{\gamma_1}{k_1\overline{T}_1}k_1\overline{T}_1\overline{V}_1 \right]\\
&= \frac{1}{\overline{u}_1} \left(f(T)-f(\overline{T}_1)\right) \left(1-\frac{\overline{T}_1}{T}\right) +\frac{1}{\overline{u}_1}\sum_{i=2}^n { k_iV_i\left(\overline{T}_1-\overline{T}_i\right)} \\
& \qquad +  \frac{k_1\overline{T}_1}{\gamma_1}  \int_0^{\infty}\!q_1(a)\left( \frac{-u_1(t,0)}{\overline{u}_1}-\frac{\overline{T}_1}{T}+\frac{u_1(t,a)}{\overline{u}_1}-\frac{T u_1(t,a)}{\overline{T}_1 u_1(t,0)} +2 \right)\,da
\end{align*}
Therefore,
\begin{align*}
\frac{d}{dt}\left(W_T+W_{V_1}+W_{u_1}+W_{\partial}\right)&= \frac{1}{\overline{u}_1} \left(f(T)-f(\overline{T}_1)\right) \left(1-\frac{\overline{T}_1}{T}\right) +\frac{1}{\overline{u}_1}\sum_{i=2}^n { k_iV_i\left(\overline{T}_1-\overline{T}_i\right)} \\
& \qquad +  \frac{k_1\overline{T}_1}{\gamma_1}  \int_0^{\infty}\!q_1(a)\left( 2-\frac{\overline{T}_1}{T}-\frac{T u_1(t,a)}{\overline{T}_1 u_1(t,0)} + \log \frac{u_1(t,a)}{u_1(t,0)} \right)\,da  \\
&\leq - \frac{k_1\overline{T}_1}{\gamma_1}  \int_0^{\infty}\!q_1(a)\left( g\left(\frac{\overline{T}_1}{T}\right) + g\left(\frac{T u_1(t,a)}{\overline{T}_1 u_1(t,0)} \right)\right)\,da  \\
&\leq 0.
\end{align*}
Here we have used the sector condition (Condition (\ref{(C)})), the fact that $\overline{T}_1<\overline{T}_i \ \forall i\geq 2$, and the positivity of $g$.  Hence, we find that
\begin{align*}
\frac{dW}{dt}=0 &\Leftrightarrow u_1(t,a)=u_1(t,0) \ \ \text{and} \ \ \frac{\overline{T}_1}{T}=\frac{Tu_1(t,a)}{\overline{T}_1u_1(t,0)}  \ \ \text{and  } V_i=0 \ \forall i\geq 2 \\
&\Leftrightarrow u_1(t,a)=u_1(t,0) \ \ \text{and} \ \ T=\overline{T}_1  \ \ \text{and  } V_i=0 \ \forall i\geq 2\\
&\Leftrightarrow \frac{d}{dt}u_1(t,0)=0 \ \ \text{and} \ \ \frac{d}{dt}T=0  \ \ \text{and  } V_i=0 \ \forall i\geq 2 \\
&\Leftrightarrow  \frac{d}{dt}k_1V_1T=0 \ \ \text{and} \ \ \frac{d}{dt}T=0  \ \ \text{and  } V_i=0 \ \forall i\geq 2\\
&\Leftrightarrow \frac{d}{dt}V_1=0 \ \ \text{and} \ \ \frac{d}{dt}T=0  \ \ \text{and  } V_i=0 \ \forall i\geq 2.
\end{align*}
Hence, the maximal invariant set with the property that $\frac{dW}{dt}=0$ on this set is $\left\{ E_1\right\}$. Note that the same result holds in the case $n=1$, with $W^1:=W_T+W_{V_1}+W_{u_1}$.  \\
By Proposition \ref{alpha}, $\alpha_z(x)$ is compact, non-empty, and invariant.  Let $\widetilde{x} \in \alpha_z(x)$.   Let $z(t)=(T(t),V(t),T^*_1(t,a))$.  Then $\exists t_n \downarrow -\infty$ such that $x_n:=z(t_n)\rightarrow \widetilde{x}$.  In particular $T^*_1(t_n,a)\rightarrow \widetilde{T}^*_1(a)$ in $L^1$ as $t_n\downarrow -\infty$.  Then, we claim $W_{u_1}\left(\frac{1}{\phi_1(a)}T^*_1(t_n,a)\right)\rightarrow W_{u_1}(\frac{1}{\phi_1(a)}\widetilde{T}^*_1(a))$ in $L^1$ as $t_n\downarrow -\infty$.   Indeed,
\begin{align*}
\left|W_{u_1}(u_1(t_n,a)-W_{u_1}(\widetilde{u}_1(t,a)\right|&= \left|\int_0^{\infty}\int_a^{\infty}\!\phi_1(\ell)p_1(\ell)\,d\ell \left( g\left(\frac{u_1(t_n,a)}{\overline{u}_1}\right)-  g\left(\frac{\widetilde{u}_1(a)}{\overline{u}_1}\right) \right) \,da \right| \\
&\leq \kappa \int_0^{\infty}\int_a^{\infty}\!\phi_1(\ell)\,d\ell \left| g\left(\frac{u_1(t_n,a)}{\overline{u}_1}\right)-  g\left(\frac{\widetilde{u}_1(a)}{\overline{u}_1}\right) \right| \,da  \\
&\leq  \kappa \int_0^{\infty}\int_a^{\infty}\!\phi_1(\ell)\,d\ell \max_{k\epsilon^2\leq s\leq kM^2}|g\p_1(s)| \cdot \left| \frac{u_1(t_n,a)}{\overline{u}_1}-  \frac{\widetilde{u}_1(a)}{\overline{u}_1} \right| \,da  \\
&=  \frac{\kappa M_1}{\overline{u}_1} \int_0^{\infty}\int_a^{\infty}\!\phi_1(\ell)\,d\ell \frac{1}{\phi_1(a)}\left| T^*_1(t_n,a)-\widetilde{T}^*_1(a) \right| \,da  \\
&=  \frac{\kappa M_1}{\overline{u}_1} \int_0^{\infty}\int_a^{\infty}\!e^{-\int_a^{\ell}\!\delta_1(s)\,ds}\,d\ell  \left| T^*_1(t_n,a)-\widetilde{T}^*_1(a) \right| \,da  \\
&\leq  \frac{\kappa M_1}{\overline{u}_1} \int_0^{\infty}\int_a^{\infty}\!e^{-b(\ell-a)}\,d\ell  \left| T^*_1(t_n,a)-\widetilde{T}^*_1(a) \right| \,da  \\
&= \frac{\kappa M_1}{b\overline{u}_1} \int_0^{\infty}\!  \left| T^*_1(t_n,a)-\widetilde{T}^*_1(a) \right| \,da  \\
&\rightarrow 0 \ \ \ \text{as   } t_n\downarrow -\infty .
\end{align*}
In a similar way, along with using the continuity of $V_1$, we can obtain $W_{\partial}$ is continuous for $h(z(t))$.  The convergence of the other components of $W\circ h$ is a consequence of the continuity of $g$.  Then, $W\circ h(z(t_n))\rightarrow W\circ h(\widetilde{x})$ as $t_n\downarrow-\infty$.  Since $W\circ h(z(t))$ is a non-increasing map, which is bounded above, we conclude that $W\circ h(z(t))\uparrow c <\infty$ as $t\downarrow -\infty$.  Therefore, $W\circ h(\hat{x})=c$ for all $\hat{x}\in\alpha_z(x)$.  Combining this with the fact that $\alpha_z(x)$ is invariant, we get that $W\circ h(\zeta(t))=c$ for all $t\in\mathbb{R}$, where $\zeta(t)$ is a complete orbit through $\widetilde{x}$ (with $\zeta(0)=\widetilde{x}$).  Hence, $\frac{d}{dt}W\circ h(\zeta(t))=0$ for all $t\in\mathbb{R}$.  This implies that $h\left(\left\{z(t): t\in\mathbb{R} \right\}\right)$ is an invariant set with the property that $\frac{dW}{dt}=0$.  Therefore, $h(\zeta(t))=h(E_1)$ for all $t$, in particular when $t=0$.  So, $\widetilde{x}=E_1$.  This shows that $\alpha_z(x)=\left\{E_1\right\}$.   Thus, $W\circ h(z(t))\leq W\circ h(E_1)$ for all $t\in\mathbb{R}$.  Since $E_1$ is the unique minimizer of $W\circ h$, $z(t)=E_1 \ \forall t\in \mathbb{R}_+$, and hence $x=E_1$.
\end{proof}
Thus, if $z(t)$ is a complete orbit such that $\left\| z(t)\right\|\leq M, \ V_1(t)\geq \epsilon, \ T(t)\geq\epsilon \ \forall t\in\mathbb{R}$ for some $\epsilon>0$, then $z(t)=E_1 \ \forall t\in\mathbb{R}$.  

\subsection{Persistence theory}
Proposition \ref{lyap} states that the only complete orbit in an appropriate subset ($z(t)$ satisfies hypothesis in Proposition \ref{lyap}) for the system (\ref{a3}) is the equilibrium $E_1$.   If we can find a global attractor on this appropriate subset, then due to its invariance, the global attractor will reduce to the equilibrium $E_1$.  To follow this strategy, we utilize persistence theory, in particular a result from Hale and Waltman on uniform persistence \cite{persist} and a result from Magal and Zhao on existence of an interior global attractor \cite{magal2}.

Persistence theory provides a mathematical formalism for determining whether a species will ultimately go extinct or persist in a dynamical model.   Consider $X$ as the closure of an open set $X_1$; that is, $X=X_1\cup \partial X_1$, where $\partial X_1$ (assumed to be non-empty) is the boundary of $X_1$.  Also, suppose that the semigroup $S(t)$ on $X$ satisfies
\begin{align*}
S(t):X_1\rightarrow X_1, \qquad S(t):\partial X_1\rightarrow \partial X_1. \tag{B1}\label{decomp}
\end{align*}
Suppose that $S(t)$ satisfies the conditions of Theorem \ref{attractor}.  Then $S_{\partial}:=S(t)|_{\partial X_1}$ will satisfy the same conditions in $\partial X_1$.  Therefore, there will be a global attractor $A_{\partial}$ in $\partial X_1$. \\
The semigroup $S(t)$ is said to be \emph{uniformly persistent} (with respect to $X_1$ and $\partial X_1$) if there is an $\eta>0$ such that, for any $x\in X_1$,
$$\liminf_{t\rightarrow\infty}d(S(t)x,\partial X_1)\geq \eta .$$

Now we state definitions which will be important in finding a useful equivalent condition to uniform persistence.  A nonempty invariant subset $M$ of $X$ is called an \emph{isolated invariant set} if it is the maximal invariant set of a neighborhood of itself.  The neighborhood is called an \emph{isolating neighborhood}.   Let $M,N$ be isolated invariant sets (not necessarily distinct).  $M$ is said to be \emph{chained} to $N$, written $M\hookrightarrow N$, if there exists an element $x$, $x\notin M\cup N$, such that $x\in W^u(M)\cap W^s(N)$.  A finite sequence $M_1,M_2,....,M_k$ of isolated invariant sets is called a \emph{chain} if $M_1\hookrightarrow M_2\hookrightarrow ....\hookrightarrow M_k \ (M_1\hookrightarrow M_1 \text{ if } k=1)$.  The chain will be called a \emph{cycle} if $M_k=M_1$.

The particular invariant sets of interest are
$$\widetilde{A_{\partial}}=\bigcup_{x\in A_{\partial}}\omega(x).$$
 $\widetilde{A_{\partial}}$ is \emph{isolated} if there exists a covering $M=\cup_{i=1}^k M_k$ of $\widetilde{A_{\partial}}$ by pairwise disjoint, compact, isolated invariant sets $M_1,M_2,...,M_k$ for $S_{\partial}$ such that $M_i$ is also an isolated invariant set for $S(t)$.  $M$ is called an \emph{isolated covering}.  $\widetilde{A_{\partial}}$ will be called \emph{acyclic} if there exists some isolated covering $M=\cup_{i=1}^k M_i$ of $\widetilde{A_{\partial}}$ such that no subset of the $M_i$'s forms a cycle.  An isolated covering satisfying this condition will be called acyclic.

The following theorem will provide the means to prove uniform persistence of the semigroup.
\begin{theorem}[Hale and Waltman, \cite{persist}]\label{semipersist}
Suppose $S(t)$ satisfies Condition (\ref{decomp}) and we have the following:
\begin{itemize}
\item[(i)] $S(t)$ is asymptotically smooth,
\item[(ii)] $S(t)$ is point dissipative in $X$,
\item[(iii)] $\gamma^+(U)$ is bounded if $U$ in $X$,
\item[(iv)] $\widetilde{A_{\partial}}$ is isolated and has an acyclic covering.
\end{itemize}
Then $S(t)$ is uniformly persistent if and only if for each $M_i\in M$
$$W^s(M_i)\cap X_1=\emptyset.$$
\end{theorem}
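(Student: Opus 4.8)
The plan is to prove the two directions of the biconditional separately. The forward implication (uniform persistence $\Rightarrow$ the stable-manifold condition) is essentially immediate from the definitions, whereas the reverse implication is the substance of Hale--Waltman persistence theory and is where essentially all of the work lies. Throughout I would take for granted the structural setup already in place: $S(t)$ is asymptotically smooth (Proposition \ref{smooth}), point dissipative with bounded forward orbits of bounded sets (Proposition \ref{bounded}), so that a global attractor $A$ exists (Theorem \ref{attractor}) and, by restricting to $\partial X_1$ under Condition (\ref{decomp}), a boundary attractor $A_{\partial}$ exists with $\widetilde{A_{\partial}}$ admitting the isolated acyclic covering $M=\cup_i M_i$.

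For the forward direction I would argue by contraposition. Suppose $W^s(M_i)\cap X_1\neq\emptyset$ for some $i$ and choose $x$ in this intersection. By definition of the stable manifold, $\emptyset\neq\omega(x)\subset M_i\subset\widetilde{A_{\partial}}\subset A_{\partial}\subset\partial X_1$, so $d(S(t)x,\partial X_1)\to 0$ as $t\to\infty$ even though $x\in X_1$. This contradicts uniform persistence, which demands $\liminf_{t\to\infty}d(S(t)x,\partial X_1)\geq\eta>0$ for every $x\in X_1$. Hence uniform persistence forces $W^s(M_i)\cap X_1=\emptyset$ for every $i$.

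The reverse direction is the main obstacle and proceeds by contradiction. Assume $W^s(M_i)\cap X_1=\emptyset$ for all $i$ but that $S(t)$ is not uniformly persistent, so there is a point $x\in X_1$ whose forward orbit approaches $\partial X_1$ arbitrarily closely. Using the global attractor $A$ and the boundary attractor $A_{\partial}$, one shows the compact, connected, invariant set $\omega(x)$ must meet $\widetilde{A_{\partial}}$, hence some covering element $M_i$. If $\omega(x)\subset M_i$ then $x\in W^s(M_i)\cap X_1$, a contradiction. Otherwise $\omega(x)$ meets $M_i$ without being contained in it, and here I would invoke the Butler--McGehee lemma to extract points of $\omega(x)$ lying in $W^u(M_i)\setminus M_i$ and in $W^s(M_i)\setminus M_i$; following the unstable manifold forward (and the stable manifold backward) inside the compact invariant set $\omega(x)$ produces a further covering element with $M_i\hookrightarrow M_j$. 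Iterating generates a chain among the finitely many $M_\ell$, and acyclicity (no subcollection forms a cycle) together with finiteness of the covering makes such a chain impossible to sustain, forcing $\omega(x)\subset M_k$ for some $k$ and hence again $x\in W^s(M_k)\cap X_1$, the desired contradiction.

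The crux, and the step I expect to be the most delicate, is precisely this chaining argument: one must show that the way a single connected, compact, invariant $\omega$-limit set can distribute itself among the isolating neighborhoods of the $M_\ell$ is tightly constrained by the Butler--McGehee lemma, so that acyclicity and finiteness can be leveraged to close the argument. In the present paper the theorem is cited and applied rather than reproved, so in practice the real task downstream is to verify its hypotheses (i)--(iv) for the semigroup $S(t)$ generated by (\ref{a3}) and then to check the condition $W^s(M_i)\cap X_1=\emptyset$ directly, using the extinction result of Section \ref{s4} together with Proposition \ref{omega} to rule out orbits in $X_1$ converging to boundary equilibria.
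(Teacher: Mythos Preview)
The paper does not prove this theorem at all: Theorem~\ref{semipersist} is stated as a cited result from Hale and Waltman~\cite{persist} and is used as a black box in the proof of Lemma~\ref{lempersist}. You correctly recognize this at the end of your proposal. So there is nothing to compare your argument against in the paper itself.

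That said, your sketch of the Hale--Waltman proof is broadly correct in structure, with one point worth flagging. In the reverse direction you write ``there is a point $x\in X_1$ whose forward orbit approaches $\partial X_1$ arbitrarily closely.'' The direct negation of uniform persistence only gives, for each $\eta>0$, \emph{some} $x_\eta\in X_1$ with $\liminf_{t\to\infty}d(S(t)x_\eta,\partial X_1)<\eta$; it does not immediately hand you a single $x$ with $\omega(x)\cap\partial X_1\neq\emptyset$. Bridging this gap is a genuine step in the original proof: one uses the global attractor (from (i)--(iii)) together with asymptotic smoothness to pass from the sequence $(x_\eta)$ to a point in $A\cap X_1$ whose $\omega$-limit set actually meets $A_\partial$, or else argues via the equivalence of weak and strong uniform persistence under these hypotheses. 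Once that is secured, your Butler--McGehee chaining argument against acyclicity is the right mechanism and is faithful to the original.
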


The following theorem relates uniform persistence to existence of a global attractor in $X_1$.
\begin{theorem}[Magal and Zhao, \cite{magal2}]\label{persistattract}
Assume that the semigroup $S(t)$ satisfies Condition (\ref{decomp}), is asymptotically smooth and uniformly persistent, and has a global attractor $A$.  Then the restriction of $S(t)$ to $X_1$, $S(t)|_{X_1}$, has a global attractor $A_0$.
\end{theorem}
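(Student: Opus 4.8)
The goal is to extract the interior attractor $A_0$ from the ambient strong global attractor $A$, using uniform persistence to confine the asymptotic dynamics to a region that stays a fixed positive distance from $\partial X_1$, and then building $A_0$ as an $\omega$-limit set via asymptotic smoothness. Throughout write $\rho(x)=d(x,\partial X_1)$, so that $X_1=\{x\in X:\rho(x)>0\}$ and the hypothesis of uniform persistence reads $\liminf_{t\to\infty}\rho(S(t)x)\geq\eta$ for every $x\in X_1$, with $\eta>0$ independent of $x$.

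The decisive step is to upgrade this pointwise statement to a uniform absorption statement: there is an $\eta_0\in(0,\eta)$ such that for every bounded $B\subset X_1$ there is a time $t_B$ with $\rho(S(t)x)\geq\eta_0$ for all $x\in B$ and all $t\geq t_B$. The hypothesis only controls each orbit separately, so the entire content is the uniformity over $x$ and $B$, and I expect this to be the main obstacle. I would argue by contradiction: if it failed, there would be $x_n\in B$ and $t_n\uparrow\infty$ with $\rho(S(t_n)x_n)\to0$; since $A$ is a strong global attractor it attracts the bounded set $B$, so a subsequence of $S(t_n)x_n$ converges to some $y\in A$ with $\rho(y)=0$, i.e. $y\in A\cap\partial X_1$. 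Because $\partial X_1$ is forward invariant and $A$ is invariant, the full orbit of $y$ in $A$ lies entirely in $\partial X_1$; the time-shifted orbits of the $x_n$ then shadow this boundary orbit for arbitrarily long stretches. A Butler--McGehee type argument on the compact invariant set $A$ converts this shadowing into a point of $X_1$ whose stable set meets the boundary invariant set, which is exactly the obstruction ruled out by uniform persistence (the criterion $W^s(M_i)\cap X_1=\emptyset$ of Theorem \ref{semipersist}), giving the contradiction.

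Granting uniform absorption, let $B_0$ be the bounded, point-dissipative absorbing set of Proposition \ref{bounded}, enlarged if necessary to be positively invariant, which is possible because forward orbits of bounded sets are bounded. Set
$$K=\overline{B_0}\cap\{x\in X:\rho(x)\geq\eta_0\}.$$
Then $K$ is closed and bounded, hence a complete metric space, it lies in $X_1$ since $\rho\geq\eta_0>0$ on $K$, and by the absorption step together with point dissipativity it absorbs every bounded subset of $X_1$ in finite time; its forward orbit $\gamma^+(K)$ is bounded. Asymptotic smoothness (Proposition \ref{smooth}), which is the mechanism behind Theorem \ref{attractor}, then guarantees that
$$A_0:=\omega(K)=\bigcap_{s\geq0}\overline{\bigcup_{t\geq s}S(t)K}$$
is nonempty, compact, and invariant, and that it attracts $K$. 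Uniform persistence forces $A_0\subset\{x:\rho(x)\geq\eta_0\}\subset X_1$, so $A_0$ genuinely lives in the interior.

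It remains to check that $A_0$ is a global attractor for $S(t)|_{X_1}$. Invariance and compactness are in hand; and for any bounded $B\subset X_1$, the absorption step sends $S(t)B$ into $K$ for $t\geq t_B$, after which $\mathrm{dist}(S(t)B,A_0)\to0$ because $A_0$ attracts $K$. Thus $A_0$ attracts every bounded subset of $X_1$, which is the desired conclusion. The only place where the infinite-dimensionality of $X$ re-enters is in passing from the closed set defining $\omega(K)$ to a genuinely compact attracting set, and this is supplied precisely by the asymptotic smoothness already established in Proposition \ref{smooth}.
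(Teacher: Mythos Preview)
The paper does not prove this theorem; it is quoted from Magal and Zhao as a black-box result, so there is no argument in the paper to compare yours against. Your sketch has the right architecture---obtain a uniformly interior absorbing set $K$, then take $A_0=\omega(K)$, letting asymptotic smoothness supply compactness---and the second half is fine once the absorption step is in hand.

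The gap is precisely in that absorption step. You try to close the contradiction by invoking the criterion $W^s(M_i)\cap X_1=\emptyset$ from Theorem~\ref{semipersist}, but the hypotheses of Theorem~\ref{persistattract} do not include an isolated acyclic covering $\{M_i\}$ of the boundary attractor; uniform persistence is assumed directly, with no Morse-type decomposition of the boundary flow in sight. Without the $M_i$, there is nothing to feed into a Butler--McGehee lemma, and your shadowing argument stops short of producing a single point $x\in X_1$ whose $\omega$-limit set meets $\partial X_1$ (which is what would actually contradict uniform persistence). In the later applications in this paper that extra structure \emph{is} available, since uniform persistence is established via Theorem~\ref{semipersist}, but it cannot be borrowed when proving Theorem~\ref{persistattract} as stated. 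A minor second point: you tacitly use that $A$ attracts bounded sets, whereas the stated hypothesis says only ``global attractor''; in the paper this is harmless because the attractor furnished by Theorem~\ref{attractor} is strong, but it is another place where you are importing hypotheses from the surrounding context rather than from the theorem itself.
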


\subsection{Global stability}
In order to proceed, we need to be precise about considering various forward invariant subsets of $X$.  Then, we can define our uniformly persistent set and complementary boundary, and utilize mathematical induction to characterize the dynamics on the boundary set.

First, we define the maximal age of viral production for each strain, which is allowed to be infinity.  Let $$\overline{a}_i=\sup\left\{a\in (0,\infty):p_i(a)>0\right\} \text{  for  } i=1,...,n.$$ We note that $\overline{a}_i$ is allowed to be infinity.  Define the following sets:
  \begin{align*}
   \partial M^0_j&=\left\{\eta(a)\in L^1_+: \int_0^{\overline{a}_j}\!\eta(a)\,da=0 \right\}, \qquad M^0_j=L^1_+\setminus \partial M^0_j \qquad j=1,...,n\\
  \partial X_j&=  \mathbb{R}_+ \times \prod_1^j \left\{0\right\} \times \mathbb{R}_+^{n-i}\times\prod_{i=1}^j\partial M^0_i\times \prod_{j+1}^n L^1_+ \qquad j=1,...,n  \\
   \partial X&=\partial X_n, \qquad X^0=X\setminus \partial X \\ 
   X_1&=X\setminus\partial X_1, \qquad X_j=\left(X\setminus \partial X_j\right) \cap \partial X_{j-1} \quad j=2,...,n \\  
Z_j&= \mathbb{R}_+ \times \mathbb{R}^{j-1}_+\times (0,\infty) \times \mathbb{R}^{n-(j+1)}_+\times\prod_{i=1}^{j-1}L^1_+\times  M^0_j\times \prod_{j+1}^n L^1_+  \quad j=1,...,n\\
   \left(X_j\right)_+&=X_j\cap Z_j \qquad j=1,...,n
  \end{align*}

\begin{proposition}\label{invariance}
For $1\leq j\leq n$, $X_j$ and $\partial X_j$ are forward invariant under the semigroup $S(t)$.  Also, $\partial X$ and $X^0$ are forward invariant, and if $x\in\partial X$, then $S(t)x\rightarrow E_0$ as $t\rightarrow\infty$.   In addition, $S(t)X_j\subset (X_j)_+ \ \forall t>0$.  
\end{proposition}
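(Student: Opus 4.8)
The plan is to argue entirely from the integrated formulation (\ref{a3}), in which each $T^*_i(t,a)$ is given explicitly in terms of $V_i$, $T$ and the initial datum, while the virus components satisfy the scalar ODEs $\dot V_i=\int_0^\infty p_i(a)T^*_i(t,a)\,da-\gamma_i V_i$. Two facts are used throughout. First, the target cells stay strictly positive for $t>0$: by (\ref{2.1.0}) one has $f(0)>0$, so whenever $T=0$ the derivative $\dot T=f(0)>0$, and hence $T(t)>0$ for all $t>0$ irrespective of $T(0)$. Second, since $p_i$ is supported in $[0,\overline a_i]$, the production integral depends only on $T^*_i(t,a)$ for $a\le\overline a_i$, which is precisely what couples the sets $\partial M^0_i$, $M^0_i$ to the flow.

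\textbf{Invariance of the boundary faces and convergence to $E_0$.} Fix $x\in\partial X_j$ and $i\le j$, so $V_i(0)=0$ and $T^*_i(0,\cdot)=0$ a.e. on $[0,\overline a_i]$. For $a\in[0,\overline a_i]$ with $a\ge t$ the characteristic formula gives $T^*_i(t,a)=\tfrac{\phi_i(a)}{\phi_i(a-t)}T^*_i(0,a-t)$ with $a-t\in[0,\overline a_i]$, which vanishes; so on $[0,\overline a_i]$ only the new-infection term $\phi_i(a)k_iV_i(t-a)T(t-a)\mathds{1}_{\{a<t\}}$ can be nonzero. Inserting this into the $V_i$-equation and using $p_i\phi_i\le\kappa$ and boundedness of $T$ gives, on each interval $[0,\tau]$, an inequality $V_i(t)\le C_\tau\int_0^t V_i(s)\,ds$ with $V_i(0)=0$; Gronwall forces $V_i\equiv 0$, whence $T^*_i(t,\cdot)=0$ a.e. on $[0,\overline a_i]$, i.e. $T^*_i(t,\cdot)\in\partial M^0_i$. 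This proves $\partial X_j$ forward invariant for each $j$, and in particular $\partial X=\partial X_n$. On $\partial X$ every $V_i\equiv 0$, so the $T$-equation reduces to $\dot T=f(T)$, whose equilibrium $\overline T_0$ is globally attracting on $\mathbb R_+$ by (\ref{2.1.0}); moreover $T^*_i(t,\cdot)$ is then supported on $\{a>t\}$ with $\|T^*_i(t,\cdot)\|_{L^1}\le e^{-bt}\|T^*_i(0,\cdot)\|_{L^1}\to0$, so $S(t)x\to E_0$.

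\textbf{The interior inclusion and invariance of $X_j$, $X^0$.} Let $x\in X_j$. By the previous step strains $1,\dots,j-1$ stay extinct, so it suffices to show $V_j(t)>0$ and $T^*_j(t,\cdot)\in M^0_j$ for all $t>0$. If $V_j(0)>0$, nonnegativity of production gives $V_j(t)\ge V_j(0)e^{-\gamma_j t}>0$, and for $t>0$ the term $\phi_j(a)k_jV_j(t-a)T(t-a)$ is positive on a positive-measure subset of $(0,\min(t,\overline a_j))$ (here $T>0$ is used), so $T^*_j(t,\cdot)\in M^0_j$. If $V_j(0)=0$ but $T^*_j(0,\cdot)\in M^0_j$, I would track the ageing of the initial productive mass: by Fubini the cumulative production satisfies $\int_0^t\!\int_0^\infty p_j(a)T^*_j(s,a)\,da\,ds\ge\int_0^\infty\tfrac{T^*_j(0,\sigma)}{\phi_j(\sigma)}\big(\int_\sigma^{\sigma+t}p_j(u)\phi_j(u)\,du\big)\,d\sigma$, and since $\overline a_j=\sup\{a:p_j(a)>0\}$ the inner integral is positive for $\sigma$ in the support of the initial mass once $\sigma+t$ reaches $\{p_j>0\}$; this turns $V_j$ on, after which the first case applies. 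As $(X_j)_+\subset X_j$ this yields forward invariance of $X_j$, and since the faces are nested with $X^0=\bigsqcup_{j=1}^n X_j$, forward invariance of $X^0$ follows.

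\textbf{Main obstacle.} The delicate point is exactly the case $V_j(0)=0$, $T^*_j(0,\cdot)\in M^0_j$. If the initial productive cells sit at ages well below $\mathrm{supp}\,p_j$, there is a genuine production latency, and $V_j(t)$ may vanish for small $t$; one therefore cannot obtain the strict positivity of $V_j$ required by $(X_j)_+$ from a naive pointwise estimate. Resolving this means using $\overline a_j$ as the supremum of $\mathrm{supp}\,p_j$ to guarantee that the aged initial mass eventually meets the production window, together with the fact that the cell component keeps $T^*_j(t,\cdot)\in M^0_j$ throughout the latency (so that strain $j$ never actually enters $\partial X_j$), and then invoking the semigroup/time-shift property to upgrade ``eventually positive'' to positivity for all $t>0$. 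This interplay between the age structure and the free-virus ODE is the heart of the proposition.
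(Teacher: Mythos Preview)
Your Gronwall argument for forward invariance of $\partial X_j$ is a clean alternative to the paper's route, which instead argues by contradiction: assuming the orbit leaves $\partial X_j$ at some first time $\tau$, the paper exhibits an explicit boundary solution through $S(\tau)x$ and appeals to forward uniqueness. Both are valid; yours is more direct. You are also right to isolate the case $V_j(0)=0$, $T^*_j(0,\cdot)\in M^0_j$ as the delicate one. The paper disposes of it by asserting that $\int_0^\infty p_j(a)T^*_j(0,a)\,da>0$ whenever $T^*_j(0,\cdot)\in M^0_j$, whence $\dot V_j(0)>0$; but membership in $M^0_j$ only gives $\int_0^{\overline a_j}T^*_j(0,a)\,da>0$, which does not by itself force the $p_j$-weighted integral to be positive, so you have in fact spotted a point the paper glosses over.

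That said, your proposed resolution contains a genuine error. You cannot ``invoke the semigroup/time-shift property to upgrade `eventually positive' to positivity for all $t>0$'': the semigroup property propagates forward, not backward. If a latency interval $(0,t_0]$ has $V_j\equiv 0$, the positivity of $V_j$ for $t>t_0$ does nothing for $t\le t_0$. Concretely, take $p_j=\mathds{1}_{[1,2]}$ (so $\overline a_j=2$), $V_j(0)=0$, $T^*_j(0,\cdot)=\mathds{1}_{[0,1/2]}\in M^0_j$; the aged initial mass does not meet $\operatorname{supp}p_j$ until $t=1/2$, and one checks $V_j\equiv 0$ on $(0,1/2]$, so $S(t)x\notin(X_j)_+$ there. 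Thus the literal inclusion $S(t)X_j\subset(X_j)_+$ for \emph{all} $t>0$ cannot be obtained without an extra hypothesis on $p_j$. What your argument does correctly yield is forward invariance of $X_j$ (since $T^*_j(t,\cdot)\in M^0_j$ persists through the latency, as you note) together with $S(t)X_j\subset(X_j)_+$ for all sufficiently large $t$; this weaker statement is exactly what the downstream arguments need, since the interior attractor $\mathcal A$ is built from $\omega$-limits.
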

\begin{proof}
First, we show the conclusions for $\partial X_j, \ j=1,...,n$.  Suppose by way of contradiction that there exists $x\in\partial X_j$ and $t_1>0$ such that $S(t_1)x\in X\setminus\partial X_j$.  Let $\tau=\inf\left\{t>0: S(t)\in X\right\}$.  Since $X\setminus\partial X_j$ is an open set in $X$ and by the continuity of the semigroup $S(t)$, we obtain that $S(\tau)x\in\partial X_j$ and for some $i\leq j$, $V_i(\tau+\epsilon)>0$ or $T^*_i(\tau+\epsilon,a)\in M_j^0$ for all $\epsilon>0$ sufficiently small.  Then for this $i$, the following is true:  
\begin{align*}
\dot{V}_i(\tau)&=\int_0^{\overline{a}_i} \! p_i(a) T^*_i(\tau,a)\,da -\gamma V_i(\tau)=0 \ \text{ and} \\ 
T^*_i(\tau,a)&= \phi_i(a)k_iV_i(\tau-a)T(\tau-a) \mathds{1}_{\left\{\tau>a\right\}} + \frac{\phi_i(a)}{\phi_i(a-t)}T^*_i(0,a-t)\mathds{1}_{\left\{a>\tau\right\}} \\
&=\frac{\phi_i(a)}{\phi_i(a-t)}T^*_i(0,a-t)\mathds{1}_{\left\{a>\tau\right\}}\in\partial M_i^0.
\end{align*}
 For $t\geq 0$, define $x_i(t)=0$, $x^*_i(t,a)=\frac{\phi_i(a)}{\phi_i(a-t)}T^*_i(0,a-t)\mathds{1}_{\left\{a>\tau+t\right\}}$.  Then, 
 $$\xi(t):=\left(T(t+\tau),V_1(t+\tau),..,x_i(t),..,V_n(t+\tau),T^*_1(t+\tau,a),...,x^*_i(t,a),..,T^*_n(t+\tau,a)\right)$$
  is a solution to the system (\ref{a3}) with initial condition $\xi(0)=S(\tau)x$. Then, by forward uniqueness of solutions, $V_i(t)=0$ and $T^*_i(t,a)\in \partial M_i^0$ for all $t\geq 0$, which is a contradiction to a previous statement..  Thus $\partial X_j$ is forward invariant.  

Now to show $X_j$ is forward invariant.  Notice that $\dot{V_j}\geq -\gamma V_j$.  Hence $V_j(t)\geq V(0)e^{-\gamma_j t}$ for all $t\geq 0$.  If $V_j(0)>0$, then the result follows.  If $V_j(0)=0$, then $\int_0^{\infty}\!p_j(a)T^*_j(0,a)\,da>0$ (since $x(0)\in X_j^0$).  Then $\frac{d}{dt}V_j(0)>0$, so that $\exists \tau>0$ such that $\forall t\in (0,\tau]$, we have $V_j(t)>0$.  Note that in this case, we can choose $\tau$ such that $\int_0^{\infty}\!p_j(a)T^*_j(t,a)\,da>0$ for all $t\in [0,\tau]$.  Then, the same argument applies with $V_j(t)\geq V_j(\tau)e^{-\gamma_j t}$ for $t\geq \tau$.  Hence $V_j(t)>0 \ \forall t>0$.  Then, since $T(t)>0 \ \forall t>0$, we have that $T^*_j(t,a)\geq k_jV_j(t-a)T(t-a)\phi(a)>0$ for all $t>0$.  This shows forward invariance for $j=1$.  For $j>1$, notice that $\partial X_{j-1}$ is forward invariant, which implies forward invariance of $X_j$.  Also, note that $S(t)X_j\subset (X_j)_+$ for all $t>0$ for $j\geq 1$. 

Since $X_j\subset X^0, \ j=1,...,n$, we conclude that $X^0$ is forward invariant.  Also, $\partial X:=\partial X_n$ is forward invariant.  In view of our system and the properties of $f(T)$, it is clear that $\forall x \in \partial X$, we have $S(t)x\rightarrow E_0$ as $t\rightarrow \infty$ where $E_0$ is the infection-free equilibrium. 
\end{proof}

We are now ready to use mathematical induction in order to prove the main result.  The following theorem states that solutions with initial conditions corresponding to positive concentration of $V_1$ or positive productive infected cell density $T_1^*(\cdot)$, will converge to the equilibrium $E_1$ (the single-strain equilibrium belonging to the strain with maximal basic reproduction number).
\begin{theorem} \label{gas}
Suppose that Condition (\ref{repro}) holds, and $f(T)$ satisfies the sector condition (Condition (\ref{(C)})).  Then $E_1$ is globally asymptotically stable for the model (\ref{a3}) with respect to initial conditions satisfying $V_1(0)+\int_0^{\overline{a}_1}T_1^*(0,a)\,da>0$.
\end{theorem}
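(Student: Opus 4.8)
\noindent\emph{Proof sketch.}
The plan is to prove the statement by strong induction on the number of strains, using the persistence machinery of Theorems~\ref{semipersist} and~\ref{persistattract} to produce an interior global attractor in $X_1$, and then to collapse that attractor onto the single point $\{E_1\}$ by means of the Lyapunov functional of Proposition~\ref{lyap}. Throughout I work with the decomposition $X = X_1 \cup \partial X_1$ of Proposition~\ref{invariance}, noting that $X_1 = X \setminus \partial X_1$ is precisely the set of initial data with $V_1(0) + \int_0^{\overline{a}_1} T^*_1(0,a)\,da > 0$ from the statement; Condition~(\ref{decomp}) holds because $X_1$ and $\partial X_1$ are forward invariant. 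Hence it suffices to show that the restricted semigroup $S(t)|_{X_1}$ has global attractor $\{E_1\}$. Hypotheses (i)--(iii) of Theorem~\ref{semipersist} are already furnished by Propositions~\ref{smooth} and~\ref{bounded} (asymptotic smoothness, point dissipativity, and boundedness of forward orbits of bounded sets), so the real work is to verify hypothesis (iv) together with the persistence criterion $W^s(M_i)\cap X_1 = \emptyset$.

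To identify the invariant sets on the boundary, observe that on $\partial X_1$ strain $1$ is absent, so the flow there is governed by the $(n-1)$-strain subsystem in strains $2,\dots,n$, whose dominant strain is strain $2$ since $\mathcal{R}_2 > \cdots > \mathcal{R}_n > 1$. By the induction hypothesis (the present theorem applied recursively to the deeper boundaries $\partial X_2 \supset \cdots \supset \partial X_n = \partial X$) together with Proposition~\ref{invariance} (which sends $\partial X$ to $E_0$), the global attractor $A_\partial$ of $S_\partial := S(t)|_{\partial X_1}$ satisfies $\widetilde{A_\partial} = \{E_0, E_2, E_3,\dots,E_n\}$, with each $E_j$ an isolated invariant set for both $S_\partial$ and $S(t)$. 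Because these equilibria are linearly ordered by the target-cell values $\overline{T}_2 < \cdots < \overline{T}_n < \overline{T}_0$ and every connecting orbit strictly lowers the target-cell level (the fitter strain invades), the covering contains no cycle and hypothesis (iv) follows. In the base case $n=1$ one has simply $\widetilde{A_\partial} = \{E_0\}$.

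The decisive step, and the one using $\mathcal{R}_1 = \max_i \mathcal{R}_i$, is to show $W^s(E_j)\cap X_1 = \emptyset$ for every $j \in \{0,2,\dots,n\}$. Suppose for contradiction that $x \in X_1$ with $S(t)x \to E_j$ (Proposition~\ref{omega}). Since $\overline{T}_j = \overline{T}_0/\mathcal{R}_j > \overline{T}_0/\mathcal{R}_1 = \overline{T}_1$ (reading $\overline{T}_0$ for $\overline{T}_j$ when $j=0$), we have $T(t) \to \overline{T}_j$, so eventually $T(t) \ge T_- > \overline{T}_1$. Feeding this into the renewal relation $T^*_1(t,a) = \phi_1(a)k_1V_1(t-a)T(t-a)$ and the $V_1$-equation shows that $(V_1,T^*_1)$ is eventually minorized by a linear renewal system with net reproduction number $k_1 T_- N_1/\gamma_1 = (T_-/\overline{T}_0)\mathcal{R}_1 > 1$, so $V_1$ grows and cannot tend to $0$, contradicting convergence to $E_j$. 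I would make this rigorous with a weighted functional $L(t) = V_1(t) + \int_0^\infty \beta_1(a)T^*_1(t,a)\,da$ (weight $\beta_1$ built from $p_1,\phi_1$ as in the proof of Proposition~\ref{lyap}), whose derivative along the flow is bounded below by $c\,L(t)$ with $c>0$ once $T(t) \ge T_- > \overline{T}_1$. With hypothesis (iv) and this criterion established, Theorem~\ref{semipersist} yields uniform persistence of $S(t)$ with respect to $X_1$, and Theorem~\ref{persistattract} then produces a global attractor $A_0 \subset X_1$ for $S(t)|_{X_1}$.

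Finally I collapse $A_0$ to $\{E_1\}$. Being compact and invariant, $A_0$ contains through each $x$ a complete orbit $z(t)\subset A_0$ with $\|z(t)\| \le M$ for all $t$. Transporting the uniform-persistence bound along this complete orbit (apply the $\liminf$ estimate to $z(-s)$ and let $s$ vary) gives $d(z(t),\partial X_1) \ge \eta > 0$ for all $t\in\mathbb{R}$, hence $V_1(t) + \int_0^{\overline{a}_1} T^*_1(t,a)\,da \ge \eta$; combining this with $T^*_1 = \phi_1 k_1 V_1 T$, the bound $\phi_1(a)\le e^{-ba}$, and boundedness of $T$ lets me bootstrap to a genuine pointwise lower bound $V_1(t) \ge \epsilon$, while $f(0)>0$ forces $T(t) \ge \epsilon$ on the compact invariant set. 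These are exactly the hypotheses of Proposition~\ref{lyap}, which therefore forces $z(t)\equiv E_1$, so $A_0 = \{E_1\}$. As $A_0$ is the global attractor of $S(t)|_{X_1}$, every orbit in $X_1$ converges to $E_1$, which is globally attracting, with Lyapunov stability inherited from $A_0=\{E_1\}$ being an attractor. I expect the two hardest points to be the invasion argument of the third paragraph---converting $\mathcal{R}_1 > \mathcal{R}_j$ into transverse instability of $E_j$---and the bootstrap from the set-level persistence bound to the pointwise estimate $V_1 \ge \epsilon$ required to invoke Proposition~\ref{lyap}.
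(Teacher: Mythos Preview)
Your overall architecture---strong induction on $n$, persistence via Theorem~\ref{semipersist}, interior attractor via Theorem~\ref{persistattract}, collapse to $\{E_1\}$ via Proposition~\ref{lyap}---is exactly the paper's. The invasion argument for $W^s(E_j)\cap X_1=\emptyset$ and the final collapse are fine; in fact the bootstrap you worry about is easier than you think, since the interior attractor $A_0$ is \emph{compact} and contained in $(X_1)_+$, so $V_1$ and $T$ attain positive minima on it directly.

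The genuine gap is hypothesis~(iv): you assert that each $\{E_j\}$ is isolated for $S(t)$ and that the covering is acyclic ``because every connecting orbit strictly lowers the target-cell level,'' but neither claim follows from the induction hypothesis alone, and your heuristic says nothing about $1$-cycles (homoclinic orbits $E_j\hookrightarrow E_j$). The paper handles both points by invoking Proposition~\ref{lyap} \emph{inside} the induction step, not only at the end: for isolation, any invariant set near $E_j$ yields a bounded complete orbit with $V_j,T$ bounded away from zero, and Proposition~\ref{lyap} applied to the projected $(m-j+1)$-strain subsystem forces that orbit to be $E_j$; the same argument kills homoclinics. Cycles of length $\ge 2$ are ruled out separately, using that $W^s(\{E_j\})\cap\partial X_1 = X_j$ (from Lemma~\ref{invariance1}) together with forward invariance of $\partial X_{j-1}$ to show no backward orbit from $X_j$ can approach $E_i$ with $i<j$. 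You could alternatively close the $1$-cycle case by observing that the induction hypothesis gives \emph{global asymptotic stability} (hence Lyapunov stability) of $E_j$ in the projected subsystem, which by itself precludes nontrivial homoclinics---but you should say so explicitly rather than rely on a monotonicity heuristic that does not apply.
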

As mentioned previously, we assume that $\mathcal{R}_n>1$, i.e. Condition (\ref{repro}) holds, in order to simplify the notation.  The case where some of the reproduction numbers are less than unity can easily be adapted to our argument.
\begin{proof}[Proof of Theorem \ref{gas}]
We prove the theorem by induction on the number of strains, $n$, in the system (\ref{a3}). \\
$\bf{n=1:}$   We note that the proof for this case is contained inside following arguments.  Hence, the whole argument can be thought of as self-contained, but this would either make the proof more disorganized or more repetitive.  Therefore, we simply state that the case $n=1$ was proven in Browne and Pilyugin \cite{bropil}. \\ 
\textbf{Induction Step:}  Assume that Theorem \ref{gas} is true for all $n<m$.   We will prove the theorem is true for $n=m$.
\begin{lemma}\label{invariance1}
If $x \in X_j$ where $j\geq 2$, then $S(t)x\rightarrow E_j$ as $t\rightarrow \infty$. 
\end{lemma}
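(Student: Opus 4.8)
The plan is to reduce the $m$-strain system to a system with strictly fewer strains on the forward-invariant boundary set $\partial X_{j-1}$ and then invoke the induction hypothesis. Since $x\in X_j=(X\setminus\partial X_j)\cap\partial X_{j-1}$ with $j\geq 2$, the point lies in $\partial X_{j-1}$, which Proposition \ref{invariance} shows is forward invariant; hence $V_i(t)=0$ for all $t\geq 0$ and all $i\leq j-1$, while $T^*_i(t,\cdot)\in\partial M^0_i$ for those $i$. First I would observe that on $\partial X_{j-1}$ the strains $1,\dots,j-1$ produce no virus: their boundary data satisfy $T^*_i(t,0)=k_iV_i(t)T(t)=0$, and since $T^*_i(t,\cdot)$ is supported on ages where $p_i$ vanishes, these strains contribute nothing to $\dot V_i$ nor to the target-cell balance $\dot T=f(T)-\sum_{i\geq j}k_iV_iT$. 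Consequently the projection $\bigl(T,V_j,\dots,V_n,T^*_j,\dots,T^*_n\bigr)$ of the orbit solves exactly the model (\ref{a3}) for the reduced collection of strains $\{j,j+1,\dots,n\}$.

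Next I would verify that this reduced $(m-j+1)$-strain model satisfies every hypothesis of Theorem \ref{gas}. By Condition (\ref{repro}) the retained reproduction numbers obey $\mathcal{R}_j>\mathcal{R}_{j+1}>\cdots>\mathcal{R}_n>1$, so strain $j$ is the one of maximal reproduction number, and the sector condition (\ref{(C)}) holds at every $\overline{T}_i$ by assumption. Because $j\geq 2$ we have $m-j+1\leq m-1<m$, so the induction hypothesis is applicable. Moreover $x\notin\partial X_j$ together with $x\in\partial X_{j-1}$ forces $V_j(0)+\int_0^{\overline{a}_j}T^*_j(0,a)\,da>0$, which is precisely the positivity requirement of Theorem \ref{gas} for the maximal strain. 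The induction hypothesis then yields convergence of the projected orbit to the reduced equilibrium, i.e. $\bigl(T(t),V_j(t),\dots,V_n(t),T^*_j(t,\cdot),\dots,T^*_n(t,\cdot)\bigr)\to(\overline{T}_j,\overline{V}_j,0,\dots,0,\overline{T}^*_j,0,\dots,0)$ as $t\to\infty$.

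It then remains to dispose of the residual components $T^*_i$ with $i<j$, which persist in the full state space but are invisible to the reduced model. Since $V_i\equiv 0$ there is no inflow of newly infected cells, so the characteristic formula in (\ref{a3}) gives $\int_0^\infty T^*_i(t,a)\,da=\int_t^\infty \tfrac{\phi_i(a)}{\phi_i(a-t)}T^*_i(0,a-t)\,da\leq e^{-bt}\|T^*_i(0,\cdot)\|\to 0$, using $\delta_i\geq b$. Thus each residual strain decays to zero in $L^1_+$, and combining this with the convergence of the projected orbit shows $S(t)x\to E_j$ in the norm of $X$, which is the assertion of the lemma.

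I expect the main obstacle to be the bookkeeping that makes the reduction fully rigorous: one must confirm that restricting the $m$-strain semigroup to $\partial X_{j-1}$ genuinely reproduces the $(m-j+1)$-strain dynamics, so that the induction hypothesis is legitimately invoked, and that the passively decaying residual components $T^*_i$, $i<j$, interfere neither with the reduced equilibrium nor with the limiting argument. Matching the reduced equilibrium $\widetilde{E}_j$ of the smaller system with the single-strain equilibrium $E_j$ of the full system, and checking that the positivity and boundary bookkeeping built into the definitions of $X_j$ and $\partial X_{j-1}$ aligns exactly with the hypothesis of Theorem \ref{gas}, is the delicate part; the analytic estimates themselves are routine.
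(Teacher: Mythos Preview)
Your proposal is correct and follows essentially the same route as the paper: project onto the strains $j,\dots,m$ (the paper formalizes this via a projection operator $P_j$ and a reduced semigroup $S_j$ with $P_j\bigl(S(t)x\bigr)=S_j(t)P_j(x)$ on $X_j$), invoke the induction hypothesis on the $(m-j+1)$-strain system to get convergence of the projected orbit to $P_j(E_j)$, and then observe that the residual components $T^*_i$, $i<j$, decay to $0$ in $L^1$. Your write-up is in fact more explicit than the paper's in verifying the hypotheses of Theorem~\ref{gas} for the reduced system and in providing the decay estimate $\int_0^\infty T^*_i(t,a)\,da\leq e^{-bt}\|T^*_i(0,\cdot)\|$, which the paper merely asserts.
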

\begin{proof}
 For $j=2,...m$, define the projection operator $P_j:\left(T,V_1,V_2,....,V_m,T^*_1(a),T^*_2(a),.....,T^*_m(a)\right)\mapsto \left(T,V_j,V_{j+1},....,V_m,T^*_j(a),T^*_{j+1}(a),.....,T^*_m(a)\right)$.  Also, define a semigroup of the projected system as follows:  $S_j(t)$ is the semigroup on $\mathbb{R}^{m-j+2}_+\times \prod_1^{m-j+1} L^1_+$ generated by the solutions to the system (\ref{a3}) with $n=m-j+1$ strains, which matches the $m$ strain model except that the first $j-1$ strains are eliminated.  Then $X_j$ is ``projection invariant'' with respect to the system (\ref{a3}), i.e. $P_j(S(t)x)=S_j(t)P_j(x) \ \forall x\in X_j, \ t\geq 0$.  It follows by our induction hypothesis that for any $x\in X_j$, $S_j(t)P_j(x)\rightarrow P_j(E_j)$ as $t\rightarrow \infty$.  Therefore $P_j(S(t)x)\rightarrow P_j(E_j)$ as $t\rightarrow\infty$.  Clearly for $S(t)x\in X_j$, $\left\|T^*_i(t,a)-0\right\|_{L^1}\rightarrow 0$ as $t\rightarrow \infty$ for all $1\leq i\leq j-1$.  Hence, for each $j\geq 2$, $S(t)x\rightarrow E_j \ \forall x\in X_j$. 
 \xqedhere{11.4cm}{\lozenge}
 \renewcommand{\qed}{}
\end{proof}
We continue the proof of the main result by showing uniform persistence and existence of an interior global attractor:
\begin{lemma}\label{lempersist}
The semigroup $S(t)$ is uniformly persistent with respect to $X_1$ and $\partial X_1$.  Moreover, there exists a compact set $\mathcal{A} \subset \left(X_1\right)_+$ which is a global attractor for $ \left\{S(t)\right\}_{t\geq 0}$ in $X_1$, and $\exists \mu>0$ such that
$$ \liminf_{t\rightarrow\infty}V_1(t)\geq \mu, \quad \text{and} \quad \liminf_{t\rightarrow\infty}d(T^*_1(t,a),\partial M^0_1) \geq \mu $$
\end{lemma}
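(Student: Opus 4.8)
The plan is to obtain uniform persistence from the Hale--Waltman theorem (Theorem \ref{semipersist}) applied to the decomposition $X = X_1 \cup \partial X_1$, and then to deduce the interior global attractor from the Magal--Zhao theorem (Theorem \ref{persistattract}). First I would check the structural hypotheses. Condition (\ref{decomp}) is immediate from Proposition \ref{invariance}, which gives forward invariance of both $X_1$ and $\partial X_1$; hypotheses (i)--(iii) of Theorem \ref{semipersist} are precisely the content of Propositions \ref{smooth} and \ref{bounded} (asymptotic smoothness, point dissipativity, and boundedness of forward orbits of bounded sets).

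Next I would identify the invariant set $\widetilde{A_\partial}$. Since the boundary decomposes as $\partial X_1 = X_2 \cup X_3 \cup \cdots \cup X_m \cup \partial X$, Lemma \ref{invariance1} gives $S(t)x \to E_j$ for every $x \in X_j$ with $2 \le j \le m$, and Proposition \ref{invariance} gives $S(t)x \to E_0$ for every $x \in \partial X$. Hence the omega limit set of each point of the boundary attractor $A_\partial$ is one of the equilibria $E_0, E_2, \ldots, E_m$, so
$$\widetilde{A_\partial} = \{E_0\} \cup \{E_2\} \cup \cdots \cup \{E_m\},$$
and I would take $M = \{E_0, E_2, \ldots, E_m\}$ as the isolated covering. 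Isolation of each $E_i$ for $S_\partial$ and for $S(t)$ follows from the nested forward-invariant structure of the sets $X_j$ together with the repelling property established in the next step. Acyclicity follows from the strict ordering $\mathcal{R}_2 > \cdots > \mathcal{R}_m$: a strain can invade a resident equilibrium only if its reproduction number is larger, so every heteroclinic orbit on $\partial X_1$ runs from an equilibrium of smaller reproduction number to one of larger reproduction number; along any chain the reproduction numbers strictly increase and no cycle can form.

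The crux, and the step I expect to be hardest, is the stable-manifold criterion $W^s(E_i) \cap X_1 = \emptyset$ for every $i \in \{0, 2, \ldots, m\}$. Suppose to the contrary that some $x \in X_1$ satisfies $\omega(x) = \{E_i\}$; by Proposition \ref{omega} this gives $S(t)x \to E_i$, so $T(t) \to \overline{T}_i = \overline{T}_0/\mathcal{R}_i$ while $V_1(t) \to 0$ and $\|T^*_1(t,\cdot)\|_{L^1} \to 0$. The effective reproduction number of strain $1$ at target-cell level $\overline{T}_i$ is $k_1 N_1 \overline{T}_i/\gamma_1 = \mathcal{R}_1/\mathcal{R}_i > 1$ (and equals $\mathcal{R}_1 > 1$ when $i = 0$), so the linearized strain-$1$ renewal equation at $E_i$ has characteristic function
$$h(\lambda) = \frac{k_1 \overline{T}_i}{\lambda + \gamma_1} \int_0^\infty p_1(a)\phi_1(a) e^{-\lambda a}\, da,$$
with $h(0) = \mathcal{R}_1/\mathcal{R}_i > 1$ and $h(\lambda) \to 0$ as $\lambda \to \infty$, hence a positive real root. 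To convert this instability into a contradiction rigorously, I would fix $\epsilon > 0$ with $k_1 N_1 (\overline{T}_i - \epsilon)/\gamma_1 > 1$, choose $\tau$ so large that $T(t) \ge \overline{T}_i - \epsilon$ for $t \ge \tau$, and bound the strain-$1$ birth rate $B_1(t) = k_1 V_1(t) T(t)$ below by the corresponding linear renewal equation; introducing a reproductive-value weight $c(a)$ and the functional $L(t) = V_1(t) + \int_0^\infty c(a) T^*_1(t,a)\,da$ yields a differential inequality $\dot L(t) \ge \nu L(t)$ with $\nu > 0$ for $t \ge \tau$. Since $L(t) > 0$ on $X_1$, this forces $L$ to grow, contradicting $V_1(t), \|T^*_1(t,\cdot)\|_{L^1} \to 0$. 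The delicate point is to handle the age-structured history term so that the renewal comparison is valid and the growth rate is strictly positive uniformly in the chosen $\epsilon$.

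With hypotheses (i)--(iv) and the stable-manifold condition verified, Theorem \ref{semipersist} yields uniform persistence: there is $\eta > 0$ with $\liminf_{t\to\infty} d(S(t)x, \partial X_1) \ge \eta$ for every $x \in X_1$. Since $\partial X_1$ is exactly the set on which $V_1 = 0$ and $T^*_1 \in \partial M^0_1$, this distance bound transfers to $\liminf_{t\to\infty} V_1(t) \ge \mu$ and $\liminf_{t\to\infty} d(T^*_1(t,a), \partial M^0_1) \ge \mu$ for a suitable $\mu > 0$. Finally, Theorem \ref{persistattract} provides a global attractor $A_0$ for $S(t)|_{X_1}$; by invariance together with $S(t)X_1 \subset (X_1)_+$ (Proposition \ref{invariance}), this attractor is a compact set $\mathcal{A} \subset (X_1)_+$, completing the proof.
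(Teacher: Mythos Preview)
Your overall architecture is correct: decompose $X = X_1 \cup \partial X_1$, identify $\widetilde{A_\partial} = \{E_0, E_2, \ldots, E_m\}$ via Lemma \ref{invariance1}, verify the Hale--Waltman conditions, and then invoke Magal--Zhao. Your instability argument for $W^s(\{E_i\}) \cap X_1 = \emptyset$ is sound in spirit and close to the paper's (the paper uses a sequence $x_\ell$ and a comparison with a linear renewal equation, citing an exponential-growth lemma from \cite{bropil}, rather than a reproductive-value functional, but the content is the same).

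There are, however, two genuine gaps in your treatment of condition (iv). First, your acyclicity argument only addresses cycles of length $\geq 2$: the claim that reproduction numbers strictly increase along heteroclinic connections says nothing about a $1$-cycle $\{E_i\} \hookrightarrow \{E_i\}$, i.e.\ a homoclinic orbit to $E_i$ lying in $\partial X_1$. In this infinite-dimensional age-structured setting such orbits cannot be dismissed by linearization or monotonicity alone. The paper rules them out by invoking Proposition \ref{lyap}: a homoclinic orbit to $E_j$ must lie in $X_j$, and on $X_j$ the projected $(m-j+1)$-strain system has $\mathcal{R}_j$ maximal; since the orbit is bounded away from the relevant boundary (because $V_j(t)\to\overline{V}_j>0$ and $T(t)\to\overline{T}_j>0$ as $t\to\pm\infty$), Proposition \ref{lyap} applied to the projection forces the orbit to equal $E_j$, a contradiction. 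Second, your isolation claim (``follows from the nested forward-invariant structure \ldots\ together with the repelling property'') is too vague to stand. Isolation requires that no invariant set larger than $\{E_i\}$ fits in a small ball $B_r(E_i)$; the paper argues by taking a complete orbit $\gamma(x)\subset B_r(E_i)$ through some $x\neq E_i$, locating $x$ in some $X_k$, and again applying Proposition \ref{lyap} to the projected sub-model to force $x=E_k$, which is impossible for small $r$.

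In short, the Lyapunov functional of Proposition \ref{lyap} is not used only at the end of Theorem \ref{gas}; it is already essential inside this persistence lemma, to exclude homoclinic orbits and to verify isolation on $\partial X_1$. Without it, condition (iv) of Theorem \ref{semipersist} is not established.
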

\begin{proof}
We will apply Theorem \ref{semipersist}.  Let $A_{\delta}$ be the strong global attractor of $\partial X_1$.  Also, consider $\widetilde{A}_{\delta}:=\cup_{A_{\delta}}\omega(x)$.  Note that $\partial X_1= \partial X \cup \bigcup_{i=2}^{m} X_i$.  Hence, from Lemma \ref{invariance1}, we obtain that $\widetilde{A}_{\delta}=\left\{E_0,E_2,E_3,...,E_m\right\}$.   We will show that each $\left\{E_i\right\} \subset \widetilde{A}_{\delta} \ i=0,2,3,\dots m$ is an isolated invariant set.  For convenience of notation, suppose $i \in \left\{2,\dots m\right\}$ (the same argument works for $E_0$).  Let $B:=B_{r}(E_i)$ be an open ball of sufficiently small radius $r$ around $E_i$.  We claim that $B$ is an isolating neighborhood.  Suppose by way of contradiction that $\left\{E_i\right\}$ is not a maximal invariant set.  Then, let $M\subset B$ be an invariant set with $M\neq \left\{E_i\right\}$.  There exists a complete orbit $\gamma(x)\subset M$ for $x\in M\setminus \left\{E_i\right\}$.  If $x \in X_i \cup X_{i+1}\cup \dots \cup X_m \cup \partial X$, then $x=E_i$ by Proposition \ref{lyap}, which is a contradiction.  If $x\in X_k$ for $k=2,\dots,i-1$, then $x=E_k$ by Proposition \ref{lyap}, again a contradiction since $E_k\notin B$.   Therefore, $\widetilde{A}_{\delta}$ is isolated.  

To show that $\widetilde{A}_{\delta}$ is acyclic, we need to show that no subset of $\widetilde{A}_{\delta}$ forms a cycle.  Consider $x\in\partial X$.  Then $S(t)x\rightarrow E_0$ as $t\rightarrow\infty$ by the properties of $f(T)$.  Now, suppose that $x\in X_j$ for some $j\geq 2$.  Then $S(t)x\rightarrow E_j$ as $t\rightarrow \infty$ by Lemma \ref{invariance1}.  Hence, $\forall i=2,3,\dots,m$, $x\in W^s(\left\{E_i\right\})\Leftrightarrow x\in X_i$ by Proposition \ref{omega} and the definition of stable manifold.  And $x\in W^s(\left\{E_0\right\}) \Leftrightarrow x\in\partial X$. 

First, let's consider the possibility of a cycle with length greater than or equal to $2$.   This cycle must include a chain with $\left\{E_i\right\} \hookrightarrow \left\{E_j\right\}$ where $i<j$.  For simplicity of notation, consider $2\leq i< j$ ($E_0$ can be handled similarly).  Then, $x\in W^u(\left\{E_i\right\})\cap W^s(\left\{E_j\right\})$ where $i<j$ and $i\in\left\{2,...,m-1\right\}, j\in\left\{3,...,m\right\}$.  Hence, $x\in X_{j}$ for some $j >i$.  Then $V_{i}(0)=0$ and $T^*_{i}(0,a)\in \partial M^0_i$.  The forward invariance of $X_{i}$ requires that $V_{i}(t)=0$ and $T^*_{i}(t,a)\in \partial M^0_i$ for any negative $t$ on a backward orbit through $x$.  Hence, $\alpha(x)\subset \partial X_{i}$, implying that $x\notin W^u\left(\left\{E_{i}\right\}\right)$. This excludes the possibility of cycles of length greater than or equal to $2$ for $S(t)|_{\partial X_1}$.  

Now we consider the possibility of of a $1$-cycle for $S(t)|_{\partial X_1}$.  Then, $\left\{E_i\right\}\hookrightarrow \left\{E_i\right\} $ for some $i=0,2,3,....,m$.  First, we show that that we can not have a $1$-cycle for $E_0$.  It suffices to show that $\left(\partial X\setminus \left\{E_0\right\} \right) \cap W^u(\left\{E_0\right\})=\emptyset$.  Let $x\in \partial X\setminus \left\{E_0\right\}$.  Any backward orbit of $x$ must stay in $\partial X$ since $X^0$ (the complement of $\partial X$) is forward invariant.   If $T^*_i(0,a)=0 \ \forall i=1,...,m$ (in $L^1$), then we have a scalar ODE with a unique positive equilibrium and $\lim_{t\rightarrow -\infty}T(t)=0 \ \text{or} \ \infty$.    The forward invariance of $X^0$ requires $\int_0^{\overline{a}_i}\!T^*_i(t,a)\,da=0$ for any negative $t$ on a backward orbit through $x$.  If $\int_{\overline{a}_i}^{\infty}\!T^*_i(0,a)\,da>0$ for some $i$, then $\int_0^{\overline{a}_i}\!T^*_i(t,a)\,da>0$ for some negative $t$ on a backward orbit through $x$, which is a contradiction.  Therefore, there can be no backward orbit through $x$ if $\int_{\overline{a}_i}^{\infty}\!T^*_i(0,a)\,da>0$ for some $i$.  Hence, $E_0$ cannot be an $\alpha$-limit point of any $x\in \partial X\setminus\left\{E_0\right\}$.  Now consider the case $x\in X_j$ for some $j\geq 2$.  Suppose by way of contradiction that $\left\{E_j\right\}\hookrightarrow\left\{E_j\right\}$.  Thus, $x\in \left(W^s(\left\{E_j\right\})\cap W^u(\left\{E_j\right\})\right)\setminus\left\{E_j\right\}$.  Then there exists a complete orbit $z(t)$ through $x$, such that $z(t)\rightarrow E_j$ as $t\rightarrow \pm \infty$.  Here, $z(t)$ is a homoclinic orbit.  Notice that the positive invariance of $\left(X_j\right)_+$ implies that $V_i(t)>0$ for all $t\in\mathbb{R}$.  The continuity and positivity of $V_j$, along with the fact that $\lim_{t\rightarrow\pm\infty}V_j(t)=\overline{V}_j^j>0$, imply that there exists $\epsilon>0$ such that $V_j(t)\geq\epsilon \ \forall t\in\mathbb{R}$.  In a similar fashion, we can show that $T(t)\geq\epsilon$ for all $t\in\mathbb{R}$.  Note that both $X_j$ and $X\setminus X_j$ are forward invariant.  So, $z(t)\in X_j \ \forall t\in\mathbb{R}$.  $X_j$ is projection invariant with respect to $P_j$ and $S_j$ as defined earlier.  In other words, in $X_j$, we can consider an equivalent $n=m-j+1$ strain model with $\mathcal{R}_j$ as the maximal reproduction number.  In this case, Proposition \ref{lyap} applies to $P_j(z(t))$ and semigroup $S_j(t)$.  We can conclude that $x=E_j$, which is a contradiction.  Hence $\widetilde{A}_{\delta}$ is acyclic for $S(t)|_{\partial X_1}$.  

To finish the proof of uniform persistence, we need to prove:
$$W^s(\left\{E_i\right\})\cap X_1 \ \  \ i=0,2,3,....,m$$
Suppose by way of contradiction that there exists $x\in X_1$ such that $x\in W^s(\left\{E_i\right\})$ where $i=2,...,m$ (the following argument can also be applied for $E_0$).  By Proposition \ref{omega}, $S(t)x\rightarrow E_i$.  By the semigroup property, we can find a sequence $(x_{\ell})\subset X_1$ such that
$$ \left\|S(t)x_{\ell}-E_i\right\| < \frac{1}{\ell} \quad \forall t\geq 0. $$
Let $S(t)x_{\ell}=(T^\ell(t),V_1^\ell(t),...,\left(T^*_1\right)^\ell(t,a),....)$ and $x_{\ell}=(T^\ell(0),V_1^\ell(0),....,\left(T^*_1\right)^\ell(0,a),....)$.  Then we have
$$|T^\ell(t)-\overline{T}_i|\leq \frac{1}{\ell}, \quad \forall t\geq 0.$$
Then by applying a simple comparison principle, we deduce that  $V_1^\ell(t)\geq y_1^\ell(t)$  where $y_1^\ell(t)$ is a solution of
\begin{align*}
 \frac{dy_1^\ell(t)}{dt} = \int_0^{t} \!k_1p_1(a)\phi_1(a)(\overline{T}_i-\frac{1}{\ell})y_1^\ell(t-a)\,da - \gamma_1 y_1^\ell(t) , \quad  y_1^\ell(0)  =V_1^\ell(0). 
\end{align*}
Note that if $V_1^\ell(0)=0$, then clearly $\left(T^*_1\right)^\ell(0,a)\in M^0$ and hence $\dot{y}_{\ell}(0)>0$, so without loss of generality we can take $V_1^\ell(0)>0$.   We claim that for $n$ sufficiently large, $y_1^\ell$ is unbounded.  The assumption $\overline{T}_1<\overline{T}_i$ is equivalent to $-\gamma_1 + k_1\overline{T}_i\int_0^{\infty}\! p_1(a)\phi_1(a)\,da>0$.  Hence $\exists N\in\mathbb{N}$ such that $-\gamma_1 +  k_1\left(\overline{T}_i-\frac{1}{N}\right)\int_0^{\infty}\! p_1(a)\phi_1(a)\,da>0$.  By Lemma 3.5 in \cite{bropil}, we conclude that $y_1^N$ is unbounded.   Since $V_1^N\geq y_1^N$, we obtain that $V_1^N$ is unbounded and hence $S(t)x_N$ is unbounded which is certainly a contradiction.  Therefore, $W^s(\left\{E_i\right\})\cap X_1 = \emptyset$.  By Theorem \ref{semipersist}, we find that $S(t)$ is uniformly persistent with respect to $X_1$ and $\partial X_1$, i.e. $\partial X_1$ is uniform strong repeller.  
Then, by Theorem \ref{persistattract}, we can conclude that there exists a compact set $\mathcal{A} \subset X_1$ which is a global attractor for $ \left\{S(t)\right\}_{t\geq 0}$ in $X_1$.  Since $S(t)X_1\subset \left(X_1\right)_+$, the global attractor, $\mathcal{A}$, is actually contained in $\left(X_1\right)_+$.  Because of this, $\exists \mu>0$ such that
$$ \liminf_{t\rightarrow\infty}V_1(t)\geq \mu, \quad \text{and} \quad \liminf_{t\rightarrow\infty}d(T^*_1(t,a),\partial M^0_1) \geq \mu \xqedhere{3.5cm}{\lozenge}$$
 \renewcommand{\qed}{}
\end{proof}
\vspace{.4cm}
Because the interior global attractor $\mathcal{A}$ is invariant, we can find a complete orbit through any point contained in $\mathcal{A}$.  For any complete orbit $\left\{z(t):t\in\mathbb{R}\right\}\subset \mathcal{A}$, there exists $\epsilon>0$ such that $V_1(t)\geq\epsilon$ and $T(t)\geq\epsilon$ for all $t\in\mathbb{R}$.  Hence, by Proposition \ref{lyap}, $\mathcal{A}=\left\{E_1\right\}$.  Thus $\left\{E_1\right\}$ is the globally attractor for the model (\ref{a3}) with respect to initial conditions satisfying $V_1(0)+\int_0^{\overline{a}_1}T_1^*(0,a)\,da>0$.  A global attractor is also locally stable by definition, therefore $E_1$ is indeed globally asymptotically stable.
\end{proof}
Hence, we have proved that the viral strain with maximal reproduction number competitively excludes all of the over viral strains.

\section{Simulations and Application to HIV Evolution}\label{s6}
The purpose of this section is to numerically illustrate the main result, Theorem \ref{gas}, and also to investigate the transient dynamics.  For applications, steady state behavior is not the only important consideration since the rate at which the equilibrium is achieved can give information on the evolution of the virus.  For example, the persistence of HIV in a patient is dependent on its ability to evolve resistance to specific immune pressures and the rate of this evolution can provide insights into the patient's immune system and disease progression \cite{althaus}.

We consider two scenarios:  first, a case where two strains are present at low numbers in a wholly susceptible target cell population, and second, the case where one strain is at steady state and a strain with larger reproduction number is introduced into the system.  From Theorem \ref{gas} we know that, asymptotically, the strain with larger reproduction number will competitively exclude inferior strains, but to learn about transient dynamics and the rate at which a strain is replaced, we need to have an idea about the rate at which the virus strains undergo their replication cycle.

One method of formulating the ``replication speed'' of a strain $i$ is to calculate the viral generation time (which we denote by $G_i$), as defined by Perelson and Nelson \cite{stmodel} for the single strain ODE model.  The method assumes that the system is at a single-strain equilibrium $T=\overline{T}_i, V_i=\overline{V}_i$ at $t=0$ and keeps track of new virus particles, $v_{{\rm new}}$, created by the initial virus particles.  To do this, consider the equations:
\begin{align*}
\frac{dV_i}{dt}&= -\gamma_i V_i, \qquad V_i(0)=\overline{V}_i \\
\frac{dv_{{\rm new}}}{dt}&= \int_0^t p_i(a) \phi_i(a) k_i V_i(t-a) \overline{T}_i \,da
\end{align*}
Define the cumulative probability distribution of producing a virion by time $t$ as $P(t)=\frac{v_{{\rm new}}(t)}{\overline{V}_i}$.  Then, the average time of virion production is given by
\begin{align*}
G_i= \int_0^{\infty} t \frac{dP(t)}{dt} \,dt =\frac{1}{\overline{V}_i} \int_0^{\infty} t \int_0^t p_i(a) \phi_i(a) k_i V_i(t-a) \overline{T}_i \,da \,dt
\end{align*}
Inserting $V_i(t-a)=e^{-\gamma_i(t-a)}$, switching the order of integration and integrating, we obtain 
\begin{align*}
G_i=\frac{\int_0^{\infty}a\phi_i(a)p_i(a)\,da}{\int_0^{\infty}\phi_i(a)p_i(a)\,da}+\frac{1}{\gamma_i} \tag{16}\label{gen}
\end{align*}
Notice that $G_i$ can be interpreted as the average age of viral production divided by the average number of virus produced by infected cell, plus the average lifespan of free virus particle.

While the viral generation time, $G_i$, can be presented in a nice formula (\ref{gen}) and $\mathcal R_i/G_i$ gives an idea for a value of ``replication rate'', perhaps a more accurate descriptor is the virus growth rate, $\lambda_{i,j}$, of the linearized system for $V_i$ at the equilibrium $E_j$ ($j\neq i$).   Consider the linearized equation,
\begin{align*}
\frac{dV_i}{dt}&= \int_0^{\infty} p_i(a) \phi_i(a) k_i V_i(t-a) \overline{T}_j \,da - \gamma_i V_i.
\end{align*}
The exponential growth rate for $V_i$ in this linearized equation is the principal eigenvalue, $\lambda_{i,j}$, where $V_i(t)=e^{\lambda_{i,j} t}$.  Thus, $\lambda_{i,j}$ satisfies the equation $$\lambda_{i,j}+\gamma_i=k_i\overline{T}_0\int_0^{\infty} p_i(a) \phi_i(a) e^{-\lambda_{i,j} a} \,da .$$
It is not hard to see that there is a unique eigenvalue, $\lambda_{i,j}$, satisfying the above equation, and if $j=0$ then $\lambda_{i,j}>0 \Leftrightarrow \mathcal R_i>1$; if $j>0$ then $\lambda_{i,j}>0 \Leftrightarrow \mathcal R_i> \mathcal R_j$.

\begin{figure}[t]
\subfigure[][]{\label{fig1a}\includegraphics[width=8.5cm,height=6cm]{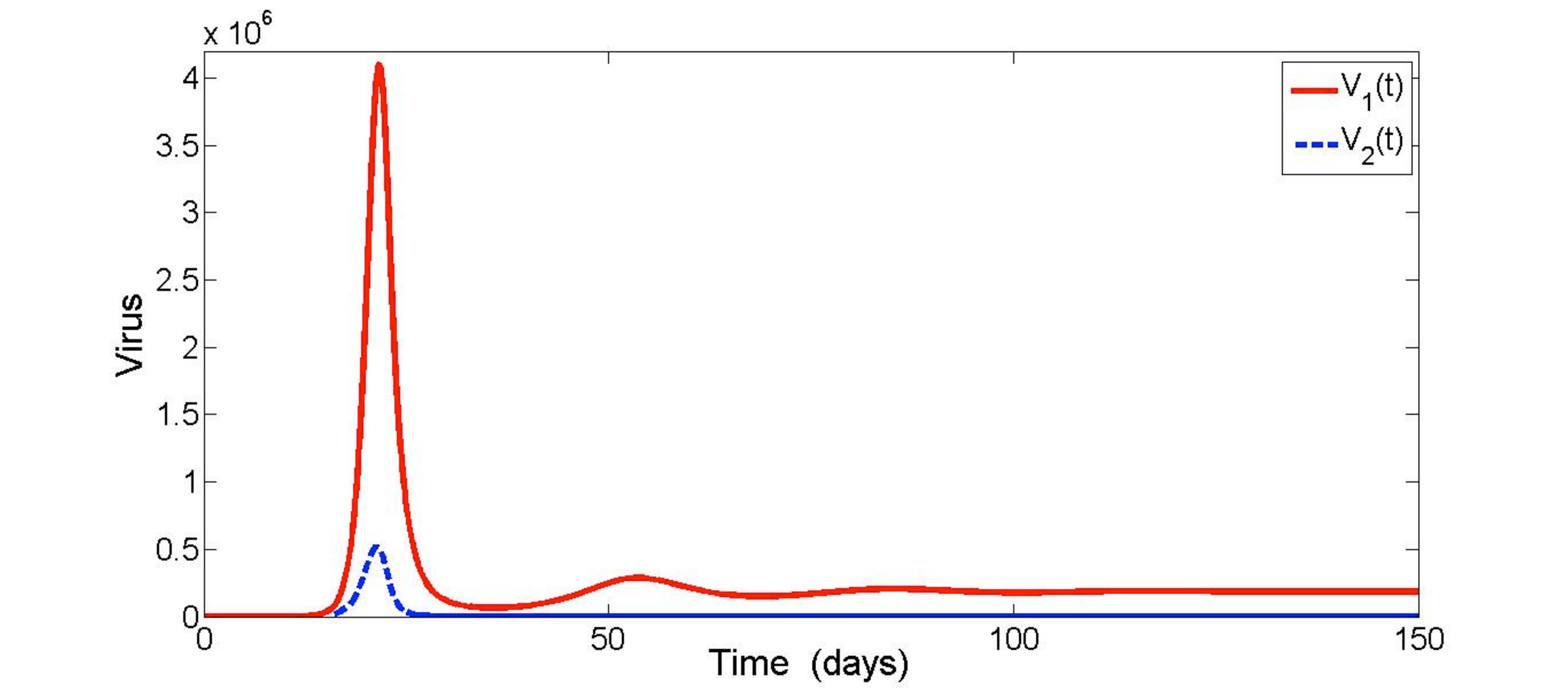}}
\subfigure[][]{\label{fig1b}\includegraphics[width=8.5cm,height=6cm]{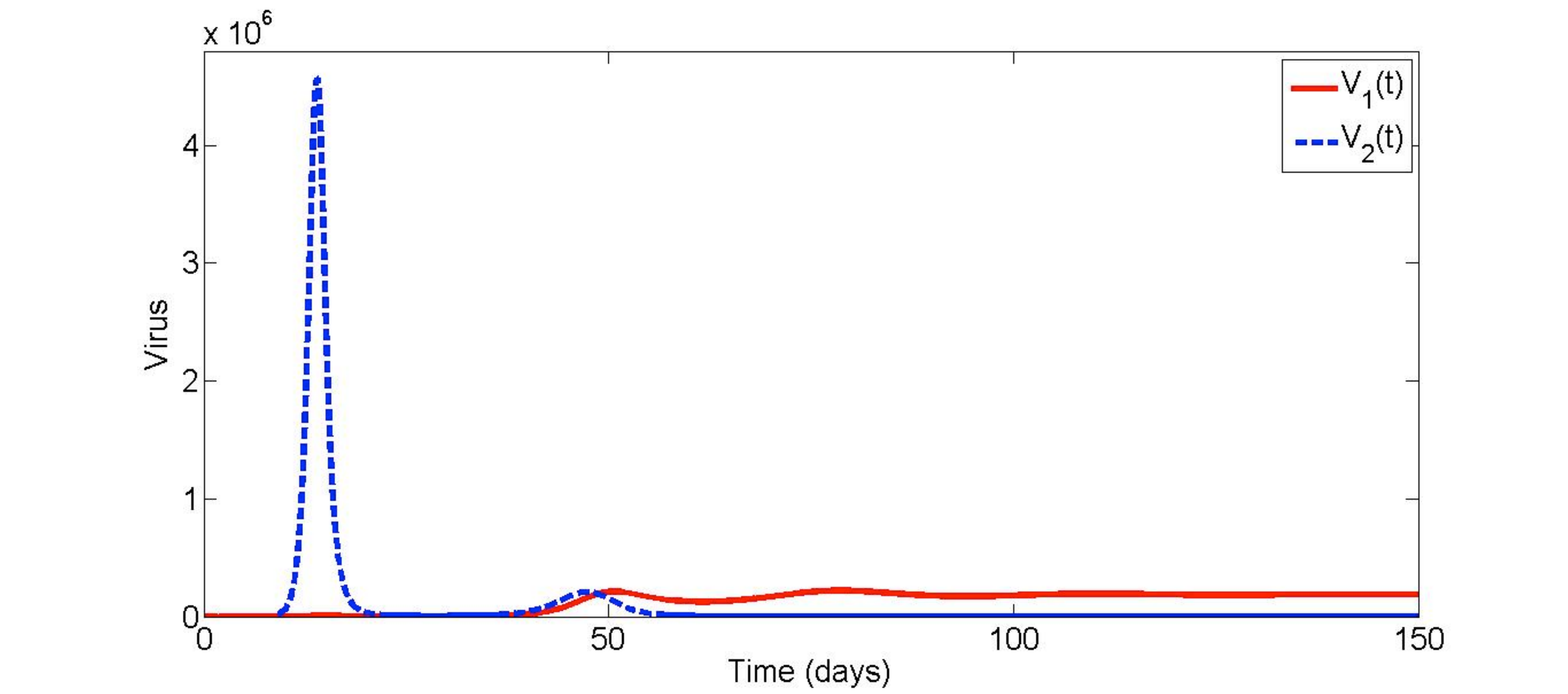}}
\caption{\emph{Dynamics for initial infection with two strains $V_1,V_2$ with $\mathcal R_1>\mathcal R_2$}. (a)  Parameters are chosen so that replicative speed of strain 1 is greater than strain 2 ($\lambda_{1,0}>\lambda_{2,0}$).  $V_1$ (solid line) dominates for the entire time.  (b)  Parameters are chosen so that replicative speed of strain 1 is less than strain 2 ($\lambda_{1,0}<\lambda_{2,0}$).  $V_2$ (dashed line) dominates early times before being competitively excluded by $V_1$. }
  \label{fig1}
  \end{figure}

In the simulations, we consider a linear healthy cell net growth rate $f(T)=s-cT$ and two virus strains $V_1,V_2$ with infected cell death rates $\delta_i(a)$ and viral production rates $p_i(a)$ of the following piecewise form:
\begin{align*}
\delta_i(a)&=\begin{cases} \mu_i & 0\leq a < \tau_i \\  \nu_i & \tau_i <a \end{cases}, \qquad p_i(a)=\begin{cases} 0 & 0\leq a < \tau_i \\  \rho_i & \tau_i <a \end{cases}  \qquad  \text{ for } i=1,2.
\end{align*}
Here $\tau_i$ is the intracellular delay between cell infection and viral production.  Note that by defining $I_i(t)=\int_{\tau_i}^{\infty}T^*_i(t,a)\,da $, system (\ref{a3}) can reduce to the following delay differential equation:
\begin{align*}
\frac{dT(t)}{dt}&=s-cT-\sum_i k_iV_i(t)T(t) \\
\frac{dI_i(t)}{dt}&= e^{-\mu_i\tau_i}k_iV_i(t-\tau_i)T(t-\tau_i)-\nu_iI_i (t)\\
\frac{dV_i}{dt}&=p_i I_i(t) -\gamma_i V_i(t)
\end{align*}

We assume that strain 1 has the largest reproduction number in the simulations, with the following parameters for strain 1 and the target cells:  $s=10^4 \ {\rm ml}\times{\rm day}^{-1}$, $c=0.01 \ {\rm day}^{-1}$, $k_1= 8\times 10^{-7} \ {\rm ml}^{-1}\times {\rm day}^{-1}$, $\gamma_1=13 \ {\rm day}^{-1}$, $\tau_1=2 \ {\rm day}$, $\mu_1=0.05 \ {\rm day}^{-1}$, $\nu_1=0.7 \ {\rm day}^{-1}$, and $p_1$ will be varied.  We note the parameters are within the range of suitable choices for HIV infection \cite{delaymod}.

In the first scenario, we consider the case where the two virus strains are introduced into a healthy target cell population at low density.  Hence, we assume that $T(0)=\overline{T}_0=s/c, T^*_i(0,a)\equiv 0, V_i(0)=1$ for $i=1,2$.   In Figure \ref{fig1a}, all parameters for strain 2 are identical to that of strain 1, except $\nu_2=1.14$.  This change in parameters result in a slower replicative speed for $V_2$ and lower reproduction number in comparison with $V_1$; namely $\mathcal R_2=9.7562$, $G_2=2.95$, $\mathcal R_2/G_2=3.3$, and $\lambda_{2,0}=0.8337$.  It is seen that $V_1$ dominates from the initial infection to the competitive exclusion.  In contrast, if we choose parameters where replicative speed for $V_2$ is faster than that of $V_1$ (but $\mathcal R_2$ is the same as in Figure \ref{fig1a}), then the initial peak is dominated by $V_2$ before $V_1$ competitively excludes $V_2$ as seen in Figure \ref{fig1b}.  Here $p_1=p_2=200 {\rm day}^{-1}$.  The parameters result in a reproduction number $\mathcal R_1=15.9$, viral generation time $G_1=3.5$, $\mathcal R_1/G_1=4.5$, and initial growth rate at infection-free equilibrium of $\lambda_{1,0}=0.93$.  The parameters for $V_2$ different from $V_1$ in Figure \ref{fig1b} are $\tau_2=1$ and $\nu_2=1.2$, resulting in $\mathcal R_2=9.7562$, $G_2=1.91$, $\mathcal R_2/G_2=5.1$, and $\lambda_{2,0}=1.4016$.  Thus, we can speculate that the initial peak of viral load in HIV may be dominated by strains with high replicative speed, but they may taken over by strains with lower replicative speed but higher reproduction number, as considered for an ODE mutation model in \cite{ball}.

  \begin{figure}[t]
\subfigure[][]{\label{fig2a}\includegraphics[width=8.5cm,height=3.5cm]{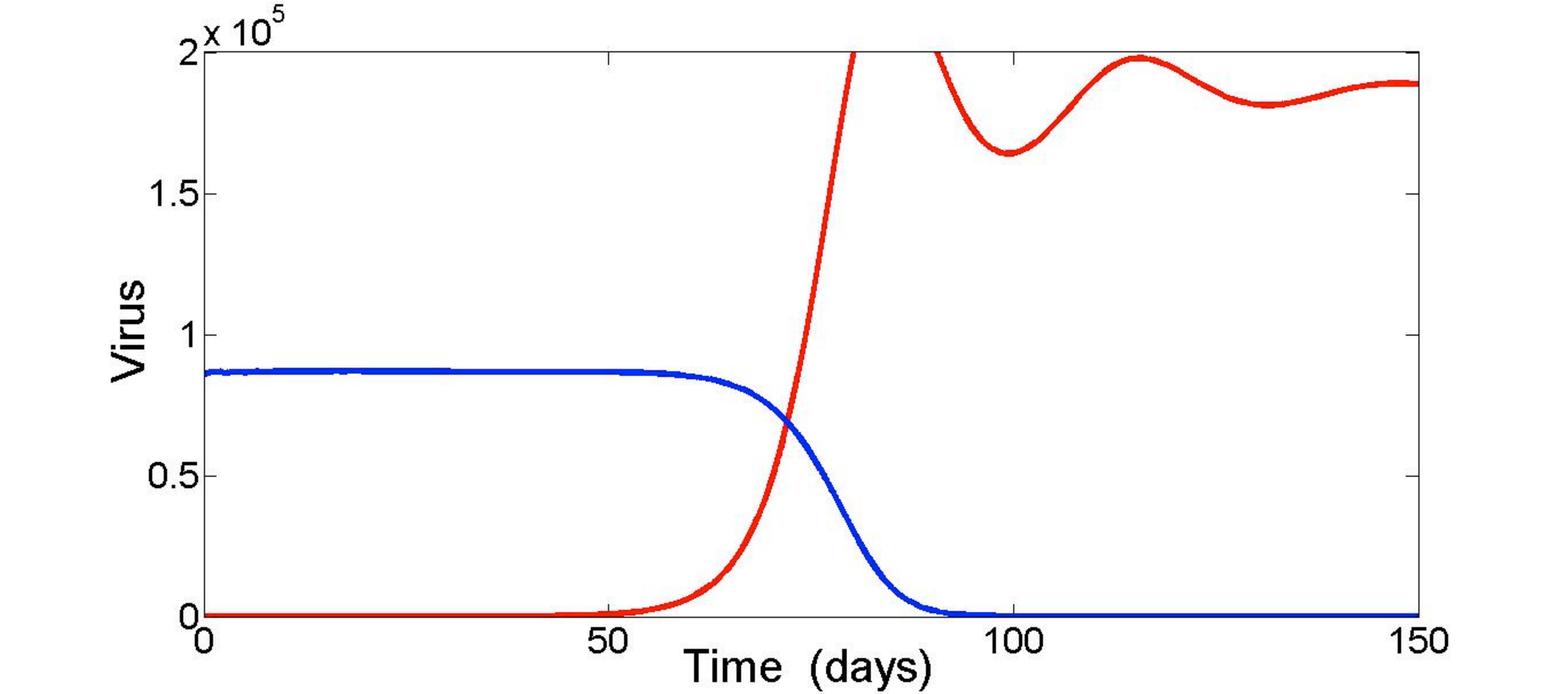}}
\subfigure[][]{\label{fig2b}\includegraphics[width=8.5cm,height=3.5cm]{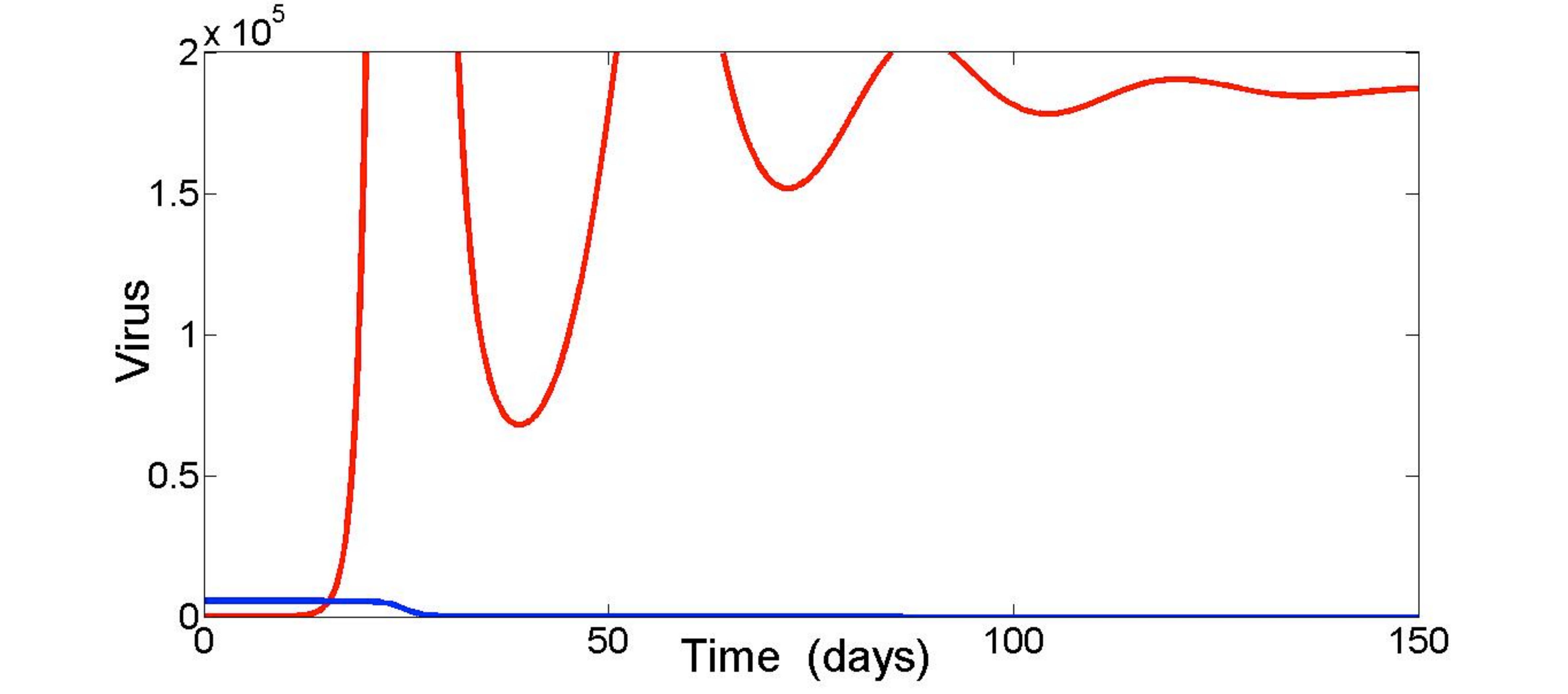}} \\
\subfigure[][]{\label{fig2a}\includegraphics[width=8.5cm,height=3.5cm]{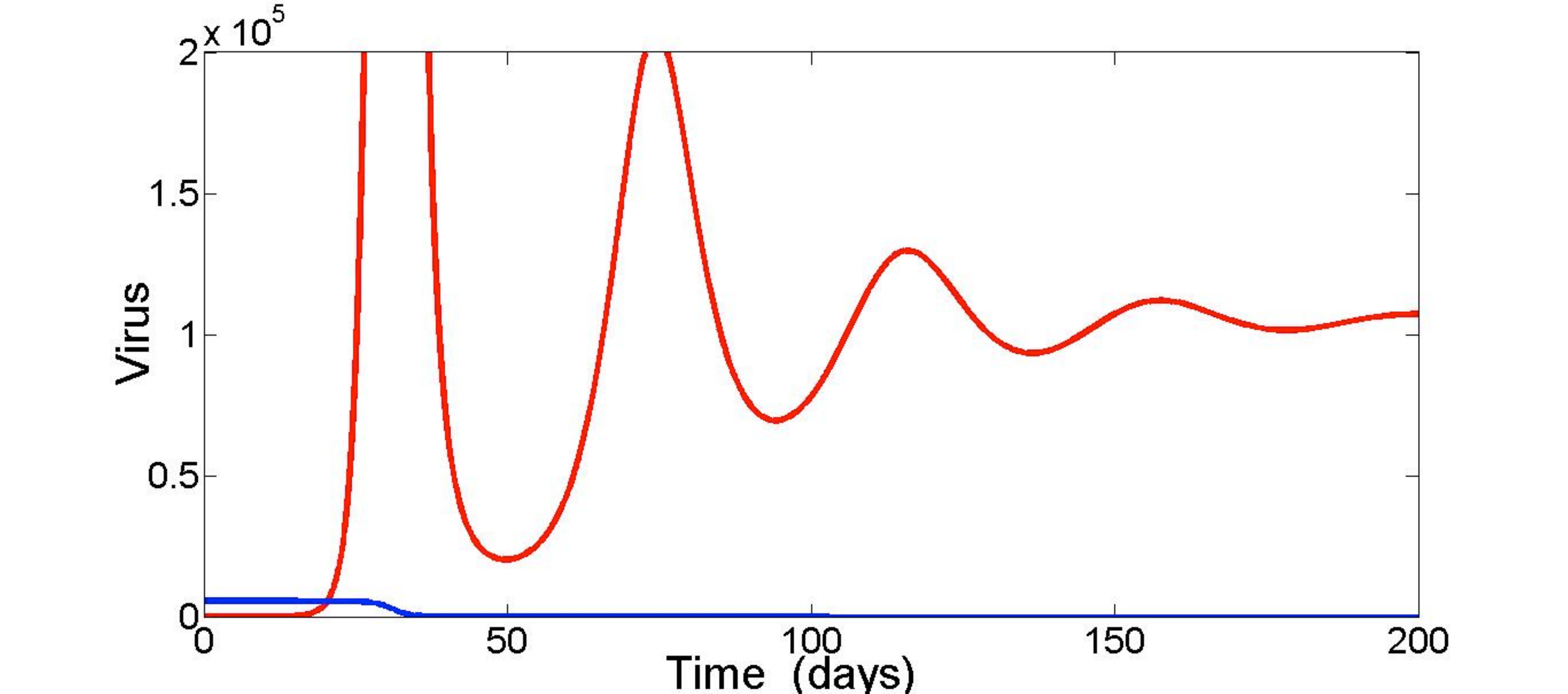}}
\subfigure[][]{\label{fig2b}\includegraphics[width=8.5cm,height=3.5cm]{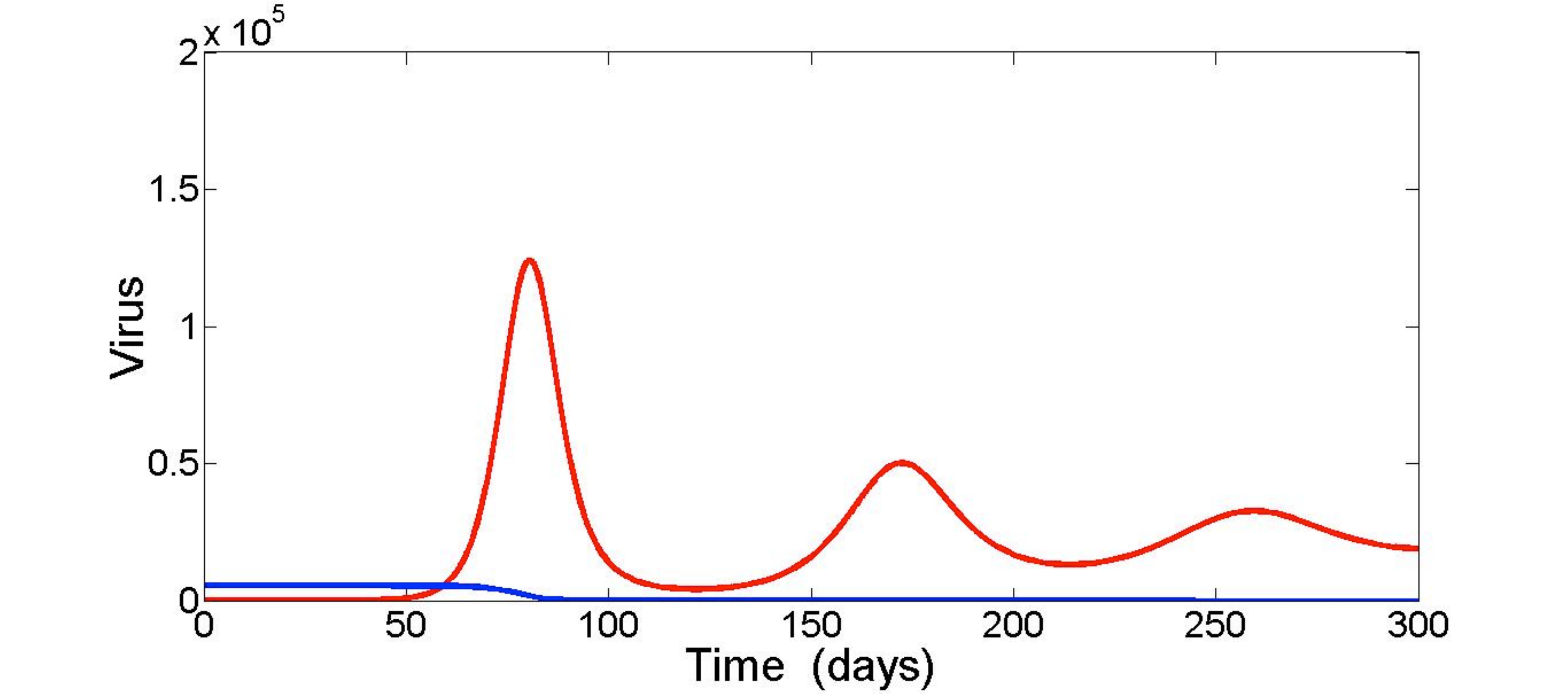}} 
\caption{ \emph{Comparison of strain replacement after CTL attack with early killing versus late killing and different fitness costs for the virus.}  In these simulations, an immune response attacks an epitope presented by $T^*_2$, imposing constant death rate $d$ either before or after viral production starts.  $V_1$ (red) and $V_2$ (blue) are plotted as a function of time. The simulations start at the $V_2$ steady state (after immune attack), and an escape mutant $V_1$ is introduced.  In (a), the death rate $d$ is imposed on $T^*_2$ after viral production starts (late killing).  In (b), (c), and (d), the death rate $d$ is imposed on $T^*_2$ before viral production (early killing).  The reproduction numbers (i.e. fitness) of the strains are as follows:  (a) $\mathcal R_1=15.9, \mathcal R_2=7.9$; (b) $\mathcal R_1=15.9, \mathcal R_2=1.4$; (c) $\mathcal R_1=8.4, \mathcal R_2=1.4$; (d) $\mathcal R_1=2.8, \mathcal R_2=1.4$.  Early killing suppresses $V_2$ to a lower level than late killing, but it is necessary that there is high fitness cost of the mutant for slower immune escape and lower viral load of $V_1$. }
  \label{fig1}
  \end{figure}

In the second scenario, we investigate strain replacement.  Hence, we assume that $V_2$ is at steady state and introduce $V_1$ into the system.  A motivation for this scenario is HIV immune escape, where the virus evolves resistance to attack from the immune responders cytotoxic T lymphocytes (CTLs) \cite{Vitaly, althaus}.  There has been considerable interest in quantifying rates at which escape variants replace a previous virus strain \cite{Vitaly}.  Also, there is recent evidence that different CTL clones respond to epitopes presented on the infected cell at different stages in the infected cell life cycle, for example before viral production or after initiation of viral production \cite{sacha,Kloverpris}.  We consider the scenario where a constant (non-explicit) immune response attacks the dominant virus strain, labeled $V_2$, with killing rate $d$ against an epitope presented either before or after viral production, and a escape mutant, $V_1$, replaces strain $V_2$ in Figure \ref{fig1}.  Thus, $\delta_2(a)=\delta_1(a)+d\mathds{1}_{\left\{t<\tau\right\}}$ or $\delta_2(a)=\delta_1(a)+d\mathds{1}_{\left\{t>\tau\right\}}$ for ``early killing'' or ``late killing'', respectively, where $\tau=\tau_1=\tau_2=2$.  When $V_2$ reaches the single-strain steady state with this new death-rate, the mutant immune-resistant virus, $V_1$, is introduced into the system without the additional death rate $d$, but with a fitness cost in the virion production, i.e. $p_1(a)=cp_2(a)$ where $c<1$.  We find that for a given killing rate $d$, early killing is substantially more efficient than late killing since it suppresses the $V_2$ population to a much lower steady state.  This aligns with the experimental results in \cite{Kloverpris}.  Also, observe that the efficient early killing applies a larger selection pressure and the immune escape is much more rapid in the case of early killing, assuming that the ``fitness cost'' $c$ is the same for each case.  However, if we assume the ``fitness cost'' is larger, i.e. $c$ is smaller, then the escape will be less rapid and the steady state of $V_1$ will be reduced.  Thus, this analysis suggests a characteristic for successful immune response and reduced viral load may be early killing on a conserved epitope.  We note that the relative rates of strain replacement seen in the simulations can be inferred by comparing the values of the ``invasion'' growth rate $\lambda_{1,2}$ for the different parameters (not shown).  In future work, we will conduct deeper investigation of modeling CTL attack at different stages of the infected cell life cycle and the resulting dynamics.

\section{Discussion}\label{discuss}
Multi-strain models have received much attention in both between-host and within-host disease modeling.  A primary objective has been to determine when the competitive exclusion principle holds versus when coexistence of pathogens can occur.  In a classic result of mathematical epidemiology, Bremerman and Thieme proved competitive exclusion along with the principle of $\mathcal{R}_0$ maximization for an $SIR$ multi-strain model \cite{breb}.  Mechanisms for coexistence of multiple strains in epidemiological models include partial cross-immunity \cite{cross}, superinfection \cite{supinfect}, co-infection \cite{coinfect}, density dependent host mortality \cite{host}, and host population structure \cite{chavez}.  For within-host models, the competitive exclusion and $\mathcal{R}_0$ maximization principle have been proved for the standard virus model \cite{deleenheerandpilyugin} and a stage-structured within-host malaria model \cite{malaria}.  Coexistence of multiple strains in a within-host virus model can occur when immune response is explicitly included, as shown by Souza \cite{souza} in the case of strain-specific immune response.

In this paper, we analyzed a multi-strain within-host virus model with continuous infection-age structure in the infected cell compartment.  The main result is global convergence to the single strain equilibrium of the virus strain which maximizes the basic reproduction number.  In other words, both the competitive exclusion principle and the principle of $\mathcal{R}_0$ maximization holds.  

McCluskey and others have recently found global stability results for a few continuous age-structured models (among these models is the single-strain version of the model (\ref{a1}) analyzed by Browne and Pilyugin) \cite{bropil, magal,mcc}.  The general strategy has been to formalize the problem in terms of semigroup theory, show existence of an interior global attractor, and then define a Lyapunov functional on this attractor.  To show existence of the interior global attractor, uniform persistence must be proved, and hence, the boundary flow must be characterized.  For our multi-strain virus model (model (\ref{a1})) with $m$ strains and $m$ single-strain equilibria, nested inside the appropriate boundary set is an $n$-strain sub-model for all $n<m$.  Hence, the situation calls for strong mathematical induction to be utilized with the induction hypothesis of global asymptotic stability.  After applying the induction argument and checking other conditions, we can establish uniform persistence and then, via a Lyapunov functional, we prove global attractiveness of the single-strain equilibrium belonging to the strain with maximal reproduction number.  

Finally, we simulated the dynamics of the model (\ref{a1}) for specific examples relevant to HIV evolution.  In addition to demonstrating the main result of competitive exclusion, the simulations allowed us to gain insight and explore some formulas for the rate of viral evolution.  From a broader perspective, there are many factors to consider in the evolution of a virus.  Co-evolution with hosts, between-host epidemiological dynamics, within-host competition for target cells and evasion from immune response, application of drug treatment or vaccines, and bio-chemical limitations on replication speed and accuracy, all shape the evolution of viruses \cite{virulence, chem}.   Future work will entail investigating how various factors affect the evolution of viral strains, along with characterizing the within-host dynamics.

\bigskip
{\bf Acknowledgments.}  CJB thanks the anonymous reviewers for their valuable comments and suggestions, along with Professor Sergei Pilyugin for interesting discussions and valuable comments.

\def\bibindent{1.6em}

\end{document}